\newtheoremstyle{note}
{\topsep/2}               
{\topsep/2}               
{}                      
{\parindent}            
{\itshape}              
{.}                     
{5pt plus 1pt minus 1pt}
{}
\theoremstyle{note}
\newtheorem{thm}{Theorem}
\newtheorem{lem}{Lemma}
\newtheorem{conjecture}{Conjecture}
\theoremstyle{definition}
\theoremstyle{remark}
\newtheorem{thm*}{Theorem}
\newcommand\csname thethm*default\endcsname{\thethm*}
\newcommand{\thmstarnum}[1]{\expandafter\gdef\csname thethm*\endcsname{#1*}}
\newtheorem{lem*}{Lemma}
\newcommand\csname thelem*default\endcsname{\thelem*}
\newcommand{\lemstarnum}[1]{\expandafter\gdef\csname thelem*\endcsname{#1*}}
\def\vec#1{{\bm{#1}}} 
\newcommand{\tr}{\operatorname{tr}}
\newcommand{\sgn}{\operatorname{sgn}}
\newcommand{\imply}{\mathrel{\Rightarrow}}
\newcommand{\rmc}{\mathrm{c}}
\newcommand{\rmd}{\mathrm{d}}
\newcommand{\rmi}{\mathrm{i}}
\newcommand{\rmo}{\mathrm{o}}
\newcommand{\rmt}{\mathrm{t}}
\newcommand{\rmT}{\mathrm{T}}
\newcommand{\caH}{\mathcal{H}}
\newcommand{\caV}{\mathcal{V}}
\newcommand{\bbC}{\mathbb{C}}
\newcommand{\sym}{\mathrm{sym}}
\newcommand{\scrM}{\mathscr{M}}
\newcommand{\scrS}{\mathscr{S}}
\newcommand{\be}{\begin{equation}}
\newcommand{\ee}{\end{equation}}
\newcommand{\ba}{\begin{align}}
\newcommand{\ea}{\end{align}}
\def\<{\langle}  
\def\>{\rangle}  
\newcommand{\ket}[1]{| #1\>}
\newcommand{\bra}[1]{\< #1|}
\def\eqref#1{\textup{(\ref{#1})}}  
\newcommand{\eref}[1]{Eq.~\textup{(\ref{#1})}}
\newcommand{\esref}[2]{Eqs.~\textup{(\ref{#1})} and \textup{(\ref{#2})}}
\newcommand{\Esref}[2]{Equations~\textup{(\ref{#1})} and \textup{(\ref{#2})}}
\newcommand{\ecref}[2]{Eqs.~\textup{(\ref{#1})}-\textup{(\ref{#2})}}
\newcommand{\fref}[1]{Fig.~\ref{#1}}
\newcommand{\Fref}[1]{Figure~\ref{#1}}
\newcommand{\tref}[1]{Table~\ref{#1}}
\newcommand{\sref}[1]{Sec.~\ref{#1}}
\newcommand{\thref}[1]{Theorem~\ref{#1}}
\newcommand{\Thref}[1]{Theorem~\ref{#1}}
\newcommand{\thsref}[1]{Theorems~\ref{#1}}
\newcommand{\lref}[1]{Lemma~\ref{#1}}
\newcommand{\Lref}[1]{Lemma~\ref{#1}}
\newcommand{\cref}[1]{Conjecture~\ref{#1}}
\newcommand{\Cref}[1]{Conjecture~\ref{#1}}
\newcommand{\aref}[1]{Appendix~\ref{#1}}
\newcommand{\rcite}[1]{Ref.~\cite{#1}}
\newcommand{\rscite}[1]{Refs.~\cite{#1}}
\begin{document}
\title{Efficient verification of Affleck-Kennedy-Lieb-Tasaki states}

\author{Tianyi Chen}
\email{These authors contributed equally to this work.}
\affiliation{State Key Laboratory of Surface Physics and Department of Physics, Fudan University, Shanghai 200433, China}
\affiliation{Institute for Nanoelectronic Devices and Quantum Computing, Fudan University, Shanghai 200433, China}
\affiliation{Center for Field Theory and Particle Physics, Fudan University, Shanghai 200433, China}

\author{Yunting Li}
\email{These authors contributed equally to this work.}
\affiliation{State Key Laboratory of Surface Physics and Department of Physics, Fudan University, Shanghai 200433, China}
\affiliation{Institute for Nanoelectronic Devices and Quantum Computing, Fudan University, Shanghai 200433, China}
\affiliation{Center for Field Theory and Particle Physics, Fudan University, Shanghai 200433, China}

\author{Huangjun Zhu}
\email{zhuhuangjun@fudan.edu.cn}
\affiliation{State Key Laboratory of Surface Physics and Department of Physics, Fudan University, Shanghai 200433, China}
\affiliation{Institute for Nanoelectronic Devices and Quantum Computing, Fudan University, Shanghai 200433, China}
\affiliation{Center for Field Theory and Particle Physics, Fudan University, Shanghai 200433, China}

\begin{abstract}
Affleck-Kennedy-Lieb-Tasaki (AKLT) states are an important class of many-body quantum states that are useful in quantum information processing, including measurement-based quantum computation in particular. 
Here we propose a general approach for constructing efficient verification protocols for  AKLT states  on  arbitrary graphs with local spin measurements. 
Our verification protocols build on  bond verification protocols and matching covers (including edge coloring) of the underlying graphs, which have a simple geometric and graphic picture. 
We also provide rigorous performance guarantee that is required for practical applications. 
With our approach, most AKLT states of wide interest, including those defined on 1D and 2D lattices, can be verified with a constant sample cost, which is independent 
of the system size and is dramatically more efficient than all previous approaches. As an illustration, we 
construct concrete verification protocols for  AKLT states on various lattices and on arbitrary graphs up to five vertices.  
\end{abstract}

\date{\today}
\maketitle


\section{Introduction}

Ground states of local Hamiltonians play crucial roles in many-body physics and have also found increasing applications in quantum information processing \cite{StepWPW17,Wei18, StepNBE19,DaniAM20,GoihWET20,WeiRA22}. The  Affleck-Kennedy-Lieb-Tasaki (AKLT) states \cite{AfflKLT87,AfflKLT88} are of special interest because they are the ground states of exactly solvable models and are tied to the famous Haldane conjecture \cite{Hald83C,Hald83N}. These states are originally defined on spin chains and  have  been  generalized to arbitrary graphs later \cite{KiriK09,XuK08,KoreX10}. Recently,  AKLT states have attracted increasing attention because of their connection with symmetry-protected topological orders \cite{ChenGLW12, Sent15,ChiuTSA16}. Moreover,  AKLT states  on many 2D lattices, including the honeycomb lattice, are  universal resource states for measurement-based quantum computation \cite{RausB01,KaltLZB10,WeiAR11,Miya11,WeiAR12}.

In practice it is not easy to prepare many-body states, such as AKLT states, perfectly. Therefore, it is crucial to verify these states within a desired precision efficiently. However, traditional tomographic approaches  are too resource consuming to achieve this goal for  large and intermediate quantum systems. Recently, great efforts have been directed to addressing this problem \cite{EiseHWR20,CarrEKK21,KlieR21,YuSG22,MorrSGD22},  and various alternative approaches have been proposed, including compressed sensing \cite{GrosLFB10}, direct fidelity estimation \cite{FlamL11,SilvLP11}, and shadow estimation \cite{Aaro18,HuanKP20} etc.

Here we are particularly interested in a promising  approach known as quantum state verification (QSV), which can achieve high efficiency based on local measurements 
\cite{HayaMT06,CramPFS10,AoliGKE15,HangKSE17, PallLM18,TakeM18,ZhuH19AdS,ZhuH19AdL,WuBGL21,LiuSHZ21,GocaSD22}
. So far efficient verification protocols have been constructed for  bipartite pure states \cite{HayaMT06,Haya09G,PallLM18,ZhuH19O,LiHZ19,WangH19,YuSG19}, stabilizer states (including graph states and Greenberger-Horne-Zeilinger states in particular) \cite{HayaM15,FujiH17,HayaH18, PallLM18,MarkK18,ZhuH19E,ZhuH19AdL,LiHZ20}, hypergraph states \cite{ZhuH19E}, weighted graph states \cite{HayaT19},  Dicke states \cite{LiuYSZ19}, and phased Dicke states (including Slater determinant states) \cite{LiuYSZ19,LiHSS21}. In addition, several verification  protocols have been demonstrated successfully  in experiments \cite{ZhanZCP20,ZhanLYP20,LuXCC20,JianWQC20}.  Moreover, this approach can be generalized to the verification of quantum gates and processes \cite{WuB19,LiuSYZ20,ZhuZ20,ZengZL20}, which have also been demonstrated in experiments \cite{ZhanHTS22,LuooZZ22}. Unfortunately, efficient verification protocols known so far are usually tailored to quantum states with special structures and rely on explicit expressions of the states under consideration. For the ground states of local Hamiltonians, 
although several verification protocols have been proposed \cite{CramPFS10,HangKSE17,GluzKEA18,TakeM18,CruzBTS22},  it is still too resource consuming to verify large and intermediate quantum systems.

In this paper, following the simple recipe proposed in the companion paper \cite{ZhuLC22}, we propose a general approach for constructing efficient verification protocols for  AKLT states defined on  arbitrary graphs.  Notably, explicit expressions for the AKLT states are not necessary. Our verification protocols are based on local spin measurements and are thus easy to implement in experiments. 
In addition, these verification protocols have very simple description in terms of 
elementary geometric and graph theoretic concepts. Moreover, we provide rigorous upper bounds on the number of  tests (sample cost) required to achieve a given precision. With our approach, most AKLT states of practical interest, including those defined on 1D and 2D lattices, can be verified with a number of tests that is independent 
of the system size, which is dramatically more efficient than previous approaches \cite{CramPFS10,HangKSE17,TakeM18,CruzBTS22}.
In addition, we construct concrete verification protocols for all AKLT states defined on arbitrary graphs up to five vertices.

The rest of this paper is organized as follows. In \sref{sec:QSV} we first review the basic framework of QSV and then introduce the idea of subspace verification as a generalization. In \sref{sec:AKLT} we review the definition and basic properties of AKLT states that are relevant in later studies. In \sref{sec:BondVerify} we  clarify potential  bond test operators based on spin measurements and construct various optimal and efficient 
 bond verification protocols. In \sref{sec:ApproachGen}  we propose a general approach for constructing efficient verification protocols for  AKLT states together with rigorous performance guarantee. In \sref{sec:1DAKLT} we discuss in detail the verification of 1D AKLT states. In \sref{sec:VAKLTgenGraph} we construct concrete verification protocols for AKLT states defined on general graphs up to five vertices. In \sref{sec:summary} we summarize this paper. Several technical proofs and a table are relegated to the Appendix.

\section{\label{sec:QSV}Quantum state verification}
In this section we first review the general framework of QSV following \rscite{PallLM18,ZhuH19AdS,ZhuH19AdL}. Then we generalize the idea to subspace verification, which is closely tied to the verification of ground states of local Hamiltonians.

\subsection{Basic framework}
A primitive in quantum information processing is to produce a given quantum state $|\Psi\rangle$ with prescribed properties. In practice, the device we employ is never perfect, and the states produced in individual runs may be different from the target state and also different from each other. 
So it is crucial to verify whether the deviation from the target state, usually quantified by the infidelity, is tolerable. To address this problem, in each run
we can perform a random two-outcome measurement $\{T_l,1-T_l\}$ determined by the test operator $T_l$, where the two outcomes correspond to passing and failing the test, respectively. 
To guarantee that the target state $|\Psi\rangle$ can  pass the test with certainty, the test operator $T_l$ needs to  satisfy the following requirement
\begin{align}\label{eq:TestTargetCon}
T_l|\Psi\>=|\Psi\>. 
\end{align}

Let $p_l$ be the probability of performing the test   $T_l$ and define the \emph{verification operator}
\begin{equation}\label{eq:VO}
\Omega=\sum_l p_l T_l. 
\end{equation}
Suppose $\sigma$ is a quantum state whose fidelity with the target state is at most $1-\epsilon$, that is,  $\bra{\Psi}\sigma \ket{\Psi} \leq 1-\epsilon$; then the probability that $\sigma$ can pass each test on average is bounded from above as follows  \cite{PallLM18,ZhuH19AdS,ZhuH19AdL},
\begin{equation}\label{eq:PassProb}
\max_{\<\Psi|\sigma|\Psi\>\leq 1-\epsilon }\tr(\Omega \sigma)=1- [1-\beta(\Omega)]\epsilon=1- \nu(\Omega)\epsilon,
\end{equation}
where $\beta(\Omega)$ denotes the second largest eigenvalue of the verification operator $\Omega$, and  $\nu(\Omega)=1-\beta(\Omega)$ is referred to as the \emph{spectral gap}.

As a corollary of \eref{eq:PassProb}
the probability that the states $\sigma_1, \sigma_2, \ldots, \sigma_N$ can pass all $N$ tests satisfies \cite{PallLM18,ZhuH19AdS,ZhuH19AdL}
\begin{align}\label{eq:PassProb1}
\prod_{j=1}^N\tr(\Omega \sigma_j)\leq \prod_{j=1}^N [1- \nu(\Omega)\epsilon_j]\leq [1- \nu(\Omega)\bar{\epsilon}]^N,
\end{align}
where $\epsilon_j=1-\<\Psi|\sigma_j|\Psi\>$ is the infidelity of the state $\sigma_j$ 
and $\bar{\epsilon}=\sum_j\epsilon_j/N$ is the average infidelity. 
To verify the
target state $|\Psi\rangle$ within infidelity $\epsilon$ and significance level $\delta$, which means $\delta\geq [1- \nu(\Omega)\epsilon]^N$, the minimum number of tests required is given by \cite{PallLM18,ZhuH19AdS,ZhuH19AdL}
\begin{equation}\label{eq:nu}
N=\biggl\lceil\frac{ \ln \delta}{\ln[1-\nu(\Omega)\epsilon]}\biggr\rceil\approx \frac{ \ln (\delta^{-1})}{\nu(\Omega)\epsilon},
\end{equation}
which decreases monotonically with  the spectral gap $\nu(\Omega)$. To achieve a high efficiency, we need to construct a verification operator with a large spectral gap. Here we shall focus on verification protocols that can be realized by local projective measurements, which are most amenable to practical applications.

The verification operator $\Omega$ is \emph{homogeneous} \cite{ZhuH19O,ZhuH19AdS,ZhuH19AdL} if it has the form 
\begin{equation}
\Omega=|\Psi\>\<\Psi|+\lambda(1-|\Psi\>\<\Psi|). 
\end{equation}
In this case, the probability that $\sigma$ can pass each test on average is completely determined by its fidelity with the target state $|\Psi\>$, 
\begin{equation}
\tr(\Omega\sigma)=\lambda+\nu F=1-\nu \epsilon,
\end{equation}
where $F=\<\Psi|\sigma|\Psi\>$ and $\epsilon=1-F$. 
Such verification protocols are of special interest because they can also be used for fidelity estimation.

\subsection{Subspace verification}
The basic  idea of QSV can also be applied to subspace verification, which  emerges naturally in the verification of multipartite pure states, such as the ground states local Hamiltonians. To see this point, suppose we want to verify the  multipartite pure state $|\Psi\>$ whose reduced states are mixed and are supported in certain subspaces.  To this end, we can verify that each reduced state is supported in  a particular subspace. Quite often it
turns out that the target state $|\Psi\>$ can be verified in this way without additional steps. Notably, this strategy is particularly useful to verifying the ground states of frustrate-free Hamiltonians, including AKLT states in particular.

Let $\caV$ be a given subspace of  the Hilbert space $\caH$ under consideration and $Q$  the corresponding projector. 
Our task is to verify whether the state produced is supported in this subspace. 
To address this problem, we can construct a set of tests and perform a random test from this set in each run. Every test corresponds to a two-outcome measurement $\{T_l,1-T_l\}$, which is determined by the test operator $T_l$, as in the verification of a pure state. Now  the condition
in \eref{eq:TestTargetCon} is replaced by
\begin{align}\label{eq:TestTargetProj}
T_l Q=Q, 
\end{align}
so that all states supported in $\caV$ can pass each test with certainty.  Let $p_l$ be the probability of performing the test 
 $T_l$ and define the verification operator 
\begin{align}
\Omega=\sum_l p_l T_l
\end{align} 
as in QSV [cf.~\eref{eq:VO}]. Then $\Omega$ satisfies the condition $\Omega Q=Q$
thanks to \eref{eq:TestTargetProj}.

Suppose the quantum state $\sigma$ under consideration satisfies the condition $\tr(Q\sigma)\leq 1-\epsilon$; then the  probability that $\sigma$ can pass each test on average is bounded from above as follows,
\begin{equation}\label{eq:PassProbSS}
\max_{\tr(Q\sigma)\leq 1-\epsilon }\tr(\Omega \sigma)=1- [1-\beta(\Omega)]\epsilon=1- \nu(\Omega)\epsilon,
\end{equation}
where 
\begin{align}
\beta(\Omega)&=\|\bar{\Omega}\|,  \quad \nu(\Omega)=1-\beta(\Omega),\\ \bar{\Omega}&=(1-Q)\Omega(1-Q).
\end{align}
 As a corollary of \eref{eq:PassProbSS},
the probability that the states $\sigma_1, \sigma_2, \ldots, \sigma_N$ produced in $N$ runs can pass all $N$ tests satisfies
\begin{align}
\prod_{j=1}^N\tr(\Omega \sigma_j)\leq \prod_{j=1}^N [1- \nu(\Omega)\epsilon_j]\leq [1- \nu(\Omega)\bar{\epsilon}]^N,
\end{align}
where $\epsilon_j=1-\tr(Q\sigma_j)$ and  $\bar{\epsilon}=(\sum_j \epsilon_j)/N$. This result has the same form as the counterpart in QSV. To verify the subspace $\caV$ within infidelity $\epsilon$ and significance level $\delta$, the number of tests required reads 
\begin{equation}
N=\biggl\lceil\frac{ \ln \delta}{\ln[1-\nu(\Omega)\epsilon]}\biggr\rceil\approx \frac{ \ln (\delta^{-1})}{\nu(\Omega)\epsilon},
\end{equation}
which also has the same form as the counterpart in QSV as presented in  \eref{eq:nu}.

The concept of homogeneous verification operators has a natural generalization in the context of subspace verification. Now the verification operator $\Omega$ is \emph{homogeneous}  if it has the form 
 \begin{equation}
 \Omega=Q+\lambda(1-Q). 
 \end{equation}
 In this case, the probability that $\sigma$ can pass each test on average is completely determined by the overlap $\tr(Q\sigma)$, 
 \begin{equation}
 \tr(\Omega\sigma)=\lambda+\nu \tr(Q\sigma)=1-\nu [1-\tr(Q\sigma)].
 \end{equation}
  In other words, the overlap $\tr(Q\sigma)$ can be estimated from the passing probability $\tr(\Omega\sigma)$. Such verification protocols will play an important role in the verification of AKLT states, as we shall see shortly.

\section{\label{sec:AKLT}AKLT States}
In this section we briefly review AKLT states defined on general graphs \cite{AfflKLT87,AfflKLT88,KiriK09,XuK08,KoreX10}. For the convenience of the readers, basic facts about spin operators and graphs are introduced in advance.

\subsection{Spin operators}
The spin operator associated with a spin-$S$ particle is denoted by $\bm{S}=(S_x,S_y,S_z)$, where $S_x$, $S_y$, $S_z$ are the spin operators along directions $\hat{x}$, $\hat{y}$, $\hat{z}$, respectively, which act on a Hilbert space of dimension $2S+1$. Note that $S_x$, $S_y$, $S_z$ are nondegenerate and  have eigenvalues $S, S-1, \ldots, -S$. 
Given an eigenvalue  $m$ of $S_z$, the corresponding eigenstate is denoted by  $|S, m\>$ or  $|m\>$ when $S$ is clear from the context. Then the operators $S_x, S_y, S_z$ can be expressed as 
\begin{align}
S_x=\frac{S_++S_-}{2},\;\; S_y=\frac{S_+-S_-}{2\rmi},\;\; S_z=\sum_{m=-S}^S m|m\>\<m|,  \label{eq:Spin}
\end{align}
where
\begin{equation}
\begin{aligned}
 S_{+}&=\sum_{m=-S}^{S-1} \sqrt{S(S+1)-m(m+ 1)}|m+1\>\<m|,
\\
 S_{-}&=\sum_{m=-S+1}^S \sqrt{S(S+1)-m(m- 1)}|m-1\>\<m|.
\end{aligned}
\end{equation}
When $S=1$ for example, $S_x$, $S_y$, $S_z$ have the following matrix representations,
\begin{equation}
\begin{aligned}
S_x&=\frac{1}{\sqrt{2}}
\begin{pmatrix}
0 & 1 & 0\\
1 & 0 & 1\\
0 & 1 & 0\\
\end{pmatrix},\quad 
S_y=\frac{\rmi}{\sqrt{2}}
\begin{pmatrix}
0 & -1 & 0\\
1 & 0 & -1\\
0 & 1 & 0\\
\end{pmatrix},\\
S_z&=
\begin{pmatrix}
1 & 0 & 0\\
0 & 0 & 0\\
0 & 0 & -1\\
\end{pmatrix}.
\end{aligned}
\end{equation}

Let $\bm{r}$ be a (real) unit vector in dimension 3, then the spin operator along direction $\bm{r}$ reads
\begin{align}\label{eq:sr}
S_{\vec{r}}:=\bm{r}\cdot\bm{S}=r_x S_x +r_y S_y  + r_z S_z.
\end{align}
Note that $S_{\vec{r}}$ has the same eigenvalues as $S_z$ for any unit vector $\vec{r}$ in dimension 3. The eigenstate of $S_{\vec{r}}$ associated with the eigenvalue $m$ is denoted by $|S,m\>_\vec{r}$ or $|m\>_\vec{r}$ when $S$ is clear from the context. When $m=S$ ($m=-S$), the eigenstate is also denoted by $|+\>_{\vec{r}}$ ($|-\>_\vec{r}$). 
The projector onto $|m\>_\vec{r}$ can be expressed as 
\begin{align}
|m\>_\vec{r}\<m|=\prod_{k=-S, k\neq m}^S \frac{S_{\vec{r}}-k}{m-k}, \label{eq:SpinProj}
\end{align}
which implies that
\begin{equation}
\begin{aligned}
|+\>_\vec{r}\<+|&=|S\>_\vec{r}\<S|=\prod_{k=-S}^{S-1} \frac{S_{\vec{r}}-k}{S-k},\\
|-\>_\vec{r}\<-|&=|-S\>_\vec{r}\<-S|=\prod_{k=-S+1}^{S} \frac{S_{\vec{r}}-k}{-S-k}. \label{eq:eigSpm}
\end{aligned}
\end{equation}
In the special case $S=1/2$, \eref{eq:eigSpm} yields
\begin{equation}
|\pm\>_\vec{r}\<\pm|=\Bigl|\pm\frac{1}{2}\Bigr\>_\vec{r}\Bigl\<\pm \frac{1}{2}\Bigr|=\frac{1}{2}\pm S_\vec{r}= \frac{1\pm\vec{r}\cdot\vec{\sigma}}{2},\\
\end{equation}
where $\vec{\sigma}=(\sigma_x,\sigma_y,\sigma_z)$ is the vector composed of the three Pauli operators. When $S=1$ by contrast, \eref{eq:SpinProj} yields
\begin{equation}\label{eq:proj}
|\pm\>_{\vec{r}}\<\pm|=\frac{S_{\vec{r}}(S_{\vec{r}}\pm 1)}{2},\quad   |0\>_{\vec{r}}\<0|= 1-S_\vec{r}^2.
\end{equation}

In addition, given two unit vectors $\bm{r},\bm{s}$ in dimension 3,
the fidelities between $|\pm\>_\vec{r}$ and $|\pm\>_\vec{s}$ read (cf. Sec. III D in \rcite{ZhanFG90})
\begin{equation}\label{eq:rsOverlap}
\begin{split}
|\sideset{_{\bm{r}}}{_{\bm{s}}}{\mathop{\<+|+\>}}|^2=|\sideset{_{\bm{r}}}{_{\bm{s}}}{\mathop{\<-|-\>}}|^2=\Bigl(\frac{1+\bm{r}\cdot\bm{s}}{2}\Bigr)^{2S},\\
|\sideset{_{\bm{r}}}{_{\bm{s}}}{\mathop{\<+|-\>}}|^2=|\sideset{_{\bm{r}}}{_{\bm{s}}}{\mathop{\<-|+\>}}|^2=\Bigl(\frac{1-\bm{r}\cdot\bm{s}}{2}\Bigr)^{2S}. 
\end{split}
\end{equation}
When $S=1/2$, \eref{eq:rsOverlap} yields the familiar fidelity formula for a qubit,
\begin{equation}\label{eq:rsOverlapS1}
\begin{split}
|\sideset{_{\bm{r}}}{_{\bm{s}}}{\mathop{\<+|+\>}}|^2=|\sideset{_{\bm{r}}}{_{\bm{s}}}{\mathop{\<-|-\>}}|^2=\frac{1+\bm{r}\cdot\bm{s}}{2},\\
|\sideset{_{\bm{r}}}{_{\bm{s}}}{\mathop{\<+|-\>}}|^2=|\sideset{_{\bm{r}}}{_{\bm{s}}}{\mathop{\<-|+\>}}|^2=\frac{1-\bm{r}\cdot\bm{s}}{2}. 
\end{split}
\end{equation}

\subsection{Graph basics}
A graph $G(V,E)$ is specified by a vertex set $V$ and edge set $E$, where each edge is a two-vertex subset of $V$ (here we only consider graphs without loops) \cite{Volo09book}. Two distinct vertices $j,k\in V$ are adjacent if $\{j,k\}\in E$, in which case $j,k$ are also called neighbors. The degree of a vertex $j$ is the number of its neighbors and is denoted by $\deg(j)$. The degree of $G$ is the maximum vertex degree and is denoted by $\Delta(G)$. The graph $G$ is $k$-regular if all the vertices have  degree $k$. The graph $G$ is connected if 
for any pair of distinct vertices $i,j$, there exists a sequence of vertices $i_1, i_2, \ldots,i_h$ with $i_1=i$ and $i_h=j$ such that each pair of consecutive vertices are adjacent, that is, $\{i_k, i_{k+1}\}\in E$  for $k=1, 2, \ldots, h-1$.

Two distinct edges of $G$ are adjacent if they share a same vertex and nonadjacent otherwise. A \emph{matching} $M$ is a subset of $E$ in which no two edges are adjacent.  The matching is a \emph{maximal matching} if it is not contained in any other matching; it is a \emph{maximum matching} if it contains the largest number  of edges. The \emph{matching number} is the cardinality of a maximum matching and is denoted by $\upsilon(G)$. 
A \emph{matching cover} $\scrM$ is a set of matchings that covers the edge set $E$, which means $\cup_{M\in \scrM}=E$. An \emph{edge coloring} of $G$ is an assignment of colors to its edge such that only nonadjacent edges can have the same color. An edge coloring is trivial if all edges have different colors. 
 By definition, each edge coloring determines a matching cover composed of disjoint matchings, and vice versa. 
The \emph{chromatic index} (or edge chromatic number) of $G$ is the minimum number of colors required to color the edges of $G$  and is denoted by $\chi'(G)$. Meanwhile, $\chi'(G)$ is also the minimum number of matchings required to cover the edge set $E$. According to Vizing's theorem \cite{Vizi64,Volo09book}, the chromatic index of $G$ satisfies 
\begin{align}
\Delta(G)\leq \chi'(G)\leq \Delta(G)+1. \label{eq:VizingBound}
\end{align}

\subsection{AKLT Hamiltonians and AKLT States}
Let  $G(V,E)$ be a connected graph with $n$ vertices, 
to define the AKLT Hamiltonian associated with this graph, we first assign a Hilbert space $\caH_j$ of dimension $\deg(j)+1$ to each vertex $j$. The whole Hilbert space is a tensor product of $\caH_j$, that is, $\caH=\bigotimes_{j\in V}\caH_j$.  
Then we can assign a spin operator $\vec{S}_j=(S_{j,x}, S_{j,y}, S_{j,z})$ of spin value $S_j=\deg(j)/2$ on $\caH_j$. Next, for each edge $e=\{j,k\}\in E$ of the graph, define $S_e=S_j+S_k$, then $S_e$ is the maximum possible value of the total spin of the two nodes. Define
\begin{align}
S_E:=\max_{e\in E}S_e=\max_{\{j,k\}\in E}(S_j+S_k); \label{eq:S_E}
\end{align}
then we have
\begin{align}\label{eq:SEDeltaG}
S_E\leq \Delta(G),
\end{align}
where the inequality is saturated if $G$ is regular.

Given an edge $e=\{j,k\}$,  denote by  $P_e=P_{S_e}(\vec{S}_j+\vec{S}_k)$ the projector onto the spin-$S_e$ subspace of spins $j$ and $k$. To be concrete, the projector can be expressed as follows,
 \begin{align}
 P_e=P_{S_e}(\vec{S}_j+\vec{S}_k)=\prod_{l=|S_j-S_k|}^{ S_j+S_k-1}\frac{(\vec{S}_j+\vec{S}_k)^2 -l(l+1)}{S_e(S_e+1)-l(l+1)},\label{eq:Pe}
 \end{align}
where
\begin{align}
(\vec{S}_j+\vec{S}_k)^2&=(S_{j,x}+S_{k,x})^2+(S_{j,y}+S_{k,y})^2\nonumber\\
&\quad +(S_{j,z}+S_{k,z})^2.
\end{align}
When $S_j=S_k=1$, \eref{eq:Pe} can be simplified as 
\begin{equation}\label{eq:spin2}
P_2(\vec{S}_j+\vec{S}_k)=\frac{\bm{S}_{j}\cdot \bm{S}_{k}}{2}+\frac{\left(\bm{S}_{j}\cdot \bm{S}_{k}\right)^2}{6}+\frac{1}{3}. 
\end{equation}
When $S_j=S_k=3/2$, \eref{eq:Pe} can be simplified as 
\begin{align}\label{eq:P3}
P_3(\bm{S}_i+\bm{S}_j)&=\frac{27}{160}\bm{S}_i \cdot \bm{S}_j+\frac{29}{360}(\bm{S}_i \cdot \bm{S}_j)^2\nonumber\\
&\quad+\frac{1}{90}(\bm{S}_i \cdot \bm{S}_j)^3+\frac{11}{128}.
\end{align}

Now the AKLT Hamiltonian associated with the graph $G(V,E)$ can be expressed as 
\begin{align}
H_G=\sum_{e\in E} P_e=\sum_{\{j,k\}\in E}P_{S_j+S_k}(\vec{S}_j+\vec{S}_k). \label{eq:H_G}
\end{align}
It is known that this Hamiltonian is frustration free and has a unique ground state \cite{KiriK09,XuK08,KoreX10}, which is denoted by $|\Psi_G\>$ henceforth.  By definition  we have 
\begin{align}
P_e |\Psi_G\>=0 \quad \forall e\in E. \label{eq:Psi_G}
\end{align}
Moreover, $|\Psi_G\>$ is the only state (up to an irrelevant overall phase factor) that satisfies this condition.

Suppose $e,e'\in E$ are two edges of $G$; then $P_e$ and $P_{e'}$ commute unless the intersection $e\cap e'$ is nonempty. Suppose $e=\{j,k\}$, then at most $\deg(j)+\deg(k)-2$ projectors $P_{e'}$ do not commute with $P_e$. Let 
\begin{align}\label{eq:g}
g&=g(H_G)=\max_{j<k} [\deg(j)+\deg(k)-2]=2\max_{e\in E} S_e-2\nonumber\\
&=2S_E-2;
\end{align}
then each projector $P_e$ does not commute with at most $g$ projectors that compose the Hamiltonian $H_G$ in \eref{eq:H_G}. By \eref{eq:SEDeltaG} we have 
\begin{align}
g\leq 2\Delta(G)-2. 
\end{align}

When $G$ is the cycle graph with $n\geq 3$ vertices, we get the prototypical 1D AKLT Hamiltonian,
\begin{align}
H^{\circ}(n):=H_G=\sum_{j=1}^n P_2(\vec{S}_j+\vec{S}_{j+1}), 
\end{align}
where the vertex label $n+1$ is identified with 1 by convention.  Explicit expression for the AKLT state $|\Psi_G\>$ is known \cite{AfflKLT87,AfflKLT88}, but it is not necessary to the current study. By contrast, the AKLT Hamiltonian for the open chain with $n$ nodes is denoted by 
\begin{align}
H_{\frac{1}{2},\frac{1}{2}}(n):=&P_{\frac{3}{2}}(\vec{S}_1+\vec{S}_2)+P_{\frac{3}{2}}(\vec{S}_{n-1}+\vec{S}_n)\nonumber\\
&+\sum_{j=2}^{n-2} P_2(\vec{S}_j+\vec{S}_{j+1});
\end{align} 
here the spin values for the two boundary spins are both equal to $1/2$ as indicated in the subscripts. For the convenience of later discussions, we also define two auxiliary Hamiltonians, 
\begin{align}
H_{\frac{1}{2},1}(n):=&P_{\frac{3}{2}}(\vec{S}_1+\vec{S}_2)+\sum_{j=2}^{n-1} P_2(\vec{S}_j+\vec{S}_{j+1}), \label{eq:Hh1} \\
H_{1,1}(n):=&\sum_{j=1}^{n-1} P_2(\vec{S}_j+\vec{S}_{j+1}),\label{eq:H11}
\end{align} 
where the subscripts indicate the spin values of the two boundary spins. Note that the ground state spaces of $H_{\frac{1}{2},1}(n)$ and $H_{1,1}(n)$ are two-fold degenerate and four-fold degenerate, respectively, in contrast with $H^{\circ}(n)$  and $H_{\frac{1}{2},\frac{1}{2}}(n)$, which are nondegenerate.

\subsection{Spectral gaps of  AKLT Hamiltonians}
The spectral gaps of AKLT Hamiltonians are of key interest in many-body physics. They also  play a crucial role in the verification of AKLT states as we shall see later. In the original papers \cite{AfflKLT87,AfflKLT88}, Affleck, Kennedy, Lieb, and Tasaki proved that the AKLT Hamiltonian on the closed chain is gapped. Recently, researchers further showed that  the AKLT Hamiltonians on several 2D lattices are also gaped. Notably, spectral gaps can be established rigorously for
 the  honeycomb lattice, decorated  honeycomb lattice, and decorated square lattice
\cite{AbduLLN20,PomaW19,PomaW20, LemmSW20,GuoPW21}. The estimated spectral gaps for the 1D chain,  honeycomb lattice, and  square lattice are 0.350, 0.100, and 0.015, respectively \cite{GarcMW13,WeiRA22}. Here we discuss briefly about the spectral gaps in the 1D case and for arbitrary connected graphs up to  five vertices.  The spectral gap of the Hamiltonian $H_G$ is denoted by $\gamma(H_G)$ or $\gamma$ for simplicity when there is no danger of confusion. 

\subsubsection{1D chains}
The spectral gaps of the AKLT Hamiltonians for four types of 1D chains up to 10 nodes are presented in \tref{tab:AKLTgap} and illustrated \fref{fig:AKLTGap}. Numerical calculation shows that the spectral gaps for open chains, corresponding to  $H_{\frac{1}{2},\frac{1}{2}}(n)$,  $H_{\frac{1}{2},1}(n)$, and $H_{1,1}(n)$, decrease monotonically with the number of nodes. In the case of the  closed chain  corresponding to $H^{\circ}(n)$, by contrast, the spectral gap decreases monotonically  if the chain has odd length, but increases monotonically if the chain has even length. In addition, numerical calculation suggests the following relations,
\begin{gather}
\gamma\bigl(H_{\frac{1}{2},\frac{1}{2}}(n)\bigr)\geq \gamma\bigl(H_{\frac{1}{2},1}(n)\bigr)\geq \gamma(H_{1,1}(n)),\quad n\geq 3,\\
\gamma\bigl(H_{\frac{1}{2},\frac{1}{2}}(n)\bigr)\geq \gamma(H^{\circ}(n)), \quad n\geq 4,n\neq5.
\end{gather}

\begin{table}[b]
	\caption{\label{tab:AKLTgap}
		Spectral gaps of the AKLT Hamiltonians $H_{\frac{1}{2},\frac{1}{2}}(n)$, $H_{\frac{1}{2},1}(n)$, $H_{1,1}(n)$, and $H^{\circ}(n)$  for $n=3,4,\ldots,10$ together with the lower bounds $\tilde{c}_n$ and $c_n$ defined in \thsref{thm:gapKnabe} and~\ref{thm:gapGM}. Note that $\tilde{c}_k$ is a  lower bound for $\gamma(H^{\circ}(n))$ when $n>k$, while $c_k$ is a  lower bound for $\gamma(H^{\circ}(n))$  when $n>2k$.}
		\renewcommand\arraystretch{1.2}
	\begin{ruledtabular}
		\begin{tabular}{c|cccccccc}
			$n$	& 3 & 4 & 5 & 6	& 7 & 8 & 9 & 10\\
			\hline
			$H_{\frac{1}{2},\frac{1}{2}}$ & 0.667 & 0.517 & 0.454 & 0.421 & 0.402 & 0.390 & 0.381 & 0.376\\
			$H_{\frac{1}{2},1}$ & 0.592 & 0.473 & 0.431 & 0.408 & 0.393 & 0.384 & 0.377 & 0.372\\
			$H_{1,1}$ & 0.500 & 0.449 & 0.413 & 0.398 & 0.387 & 0.379 & 0.374 & 0.367\\
			$H^{\circ}$	& 0.833 & 0.333 & 0.454 & 0.348 & 0.402 & 0.350 & 0.381 & 0.350\\			
			$\tilde{c}_n$ & 0 & 0.173 & 0.218 & 0.248 & 0.264 & 0.276 & 0.284 & 0.291\\				
			$c_n$ & 0 & 0.207 & 0.254 & 0.280 & 0.290 & 0.296 & 0.299 & 0.301\\
		\end{tabular}
	\end{ruledtabular}
\end{table}

\begin{figure}
	\includegraphics[width=0.42\textwidth]{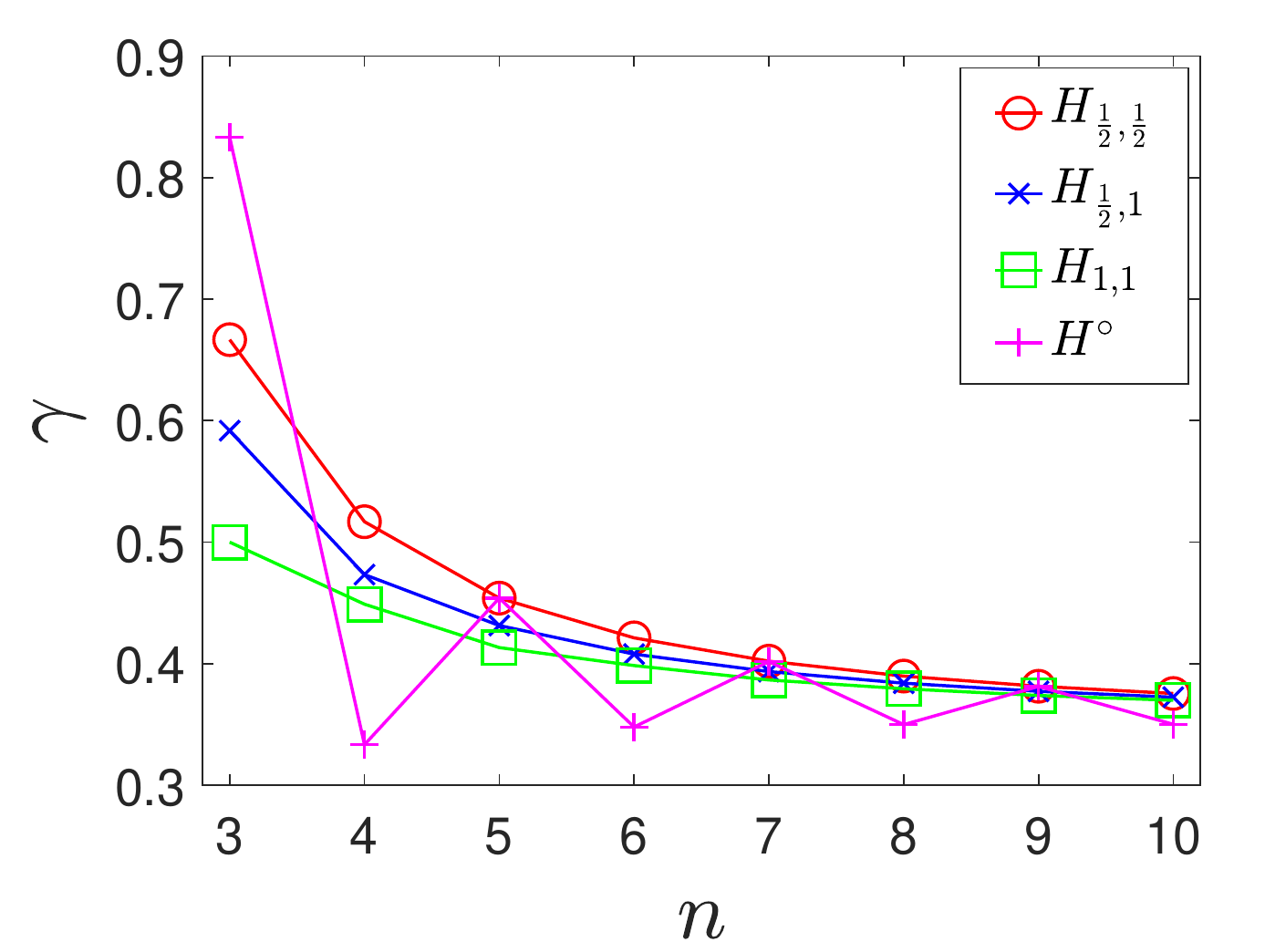}
	\caption{\label{fig:AKLTGap}Spectral gaps of the AKLT Hamiltonians $H_{\frac{1}{2},\frac{1}{2}}(n)$, $H_{\frac{1}{2},1}(n)$, $H_{1,1}(n)$, and $H^{\circ}(n)$  for $n=3,4,\ldots,10$. Here lines are guides for the eyes (similarly for other figures). }	
\end{figure}

In the thermodynamic limit, the AKLT Hamiltonian on the closed chain was shown to be gaped already in the original work of AKLT \cite{AfflKLT87,AfflKLT88}. 
Later, Knabe  derived a lower bound for $\gamma(H^{\circ}(n))$ based on the spectral gap $\gamma(H_{1,1}(k))$ \cite{Knab88}, where $H_{1,1}(k)$ is the Hamiltonian associated with the open chain of $k$ nodes as defined in \eref{eq:H11}.
\begin{thm}[Knabe]\label{thm:gapKnabe}
	Suppose  $n>k>2$. Then $\gamma(H^{\circ}(n))\geq \tilde{c}_k$ with
\begin{equation}
 \tilde{c}_k:= \left(\frac{k-1}{k-2}\right)\left[\gamma(H_{1,1}(k))-\frac{1}{k-1}\right].
\end{equation}
\end{thm}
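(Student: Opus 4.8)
The plan is to reduce the entire statement to the single operator inequality
$\bigl(H^{\circ}(n)\bigr)^2 \geq \tilde{c}_k\, H^{\circ}(n)$. This suffices because $H^{\circ}(n)$ is frustration free and positive semidefinite with ground energy $0$ [cf.~\eref{eq:Psi_G}], so its eigenvalues are either $0$ or at least $\gamma(H^{\circ}(n))$; conversely, if every eigenvalue $\lambda$ of $H^{\circ}(n)$ obeys $\lambda^2 \geq \tilde{c}_k\lambda$, then each nonzero $\lambda$ satisfies $\lambda \geq \tilde{c}_k$, whence $\gamma(H^{\circ}(n)) \geq \tilde{c}_k$. Thus the whole theorem amounts to establishing this inequality.

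To prove it I would introduce, for each $m=1,\ldots,n$, the window Hamiltonian $h_m=\sum_{b=m}^{m+k-2}P_b$ (indices mod $n$), where $P_b:=P_2(\vec{S}_b+\vec{S}_{b+1})$ denotes the projector on bond $b$. Each $h_m$ is a copy of $H_{1,1}(k)$ acting on the $k$ consecutive sites $m,\ldots,m+k-1$ (tensored with the identity elsewhere), so it is frustration free with the same nonzero gap $\gamma(H_{1,1}(k))$; as in the first paragraph this yields the local inequality $h_m^2 \geq \gamma(H_{1,1}(k))\,h_m$. Summing over the $n$ windows and using that each bond lies in exactly $k-1$ windows, so that $\sum_m h_m=(k-1)H^{\circ}(n)$, gives $\sum_m h_m^2 \geq (k-1)\gamma(H_{1,1}(k))\,H^{\circ}(n)$.

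The core of the argument is then to compare $\sum_m h_m^2$ with $(k-1)\bigl(H^{\circ}(n)\bigr)^2$. I would expand both as sums of products $P_bP_{b'}$ and sort the cross terms by the cyclic distance $d$ between the two bonds. A pair of bonds at distance $d$ lies in $k-1-d$ windows for $1\le d\le k-2$ (and in none otherwise, at least when $n$ is large enough to avoid wrap-around), whereas in $\bigl(H^{\circ}(n)\bigr)^2$ every such pair appears with coefficient $1$. Writing $A$ for the adjacent ($d=1$) cross terms and $C$ for the non-adjacent ($d\ge 2$) ones, the window sum reads $\sum_m h_m^2=(k-1)H^{\circ}(n)+(k-2)A+C'$ with $C'$ a reweighted version of $C$. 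Eliminating $A$ via the inequality of the previous paragraph and substituting into $\bigl(H^{\circ}(n)\bigr)^2=H^{\circ}(n)+A+C$, the coefficient of $H^{\circ}(n)$ collapses to exactly $\tilde{c}_k=\bigl[(k-1)\gamma(H_{1,1}(k))-1\bigr]/(k-2)$, and the leftover operator is $C-C'/(k-2)$.

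The main obstacle is the positivity of this leftover operator, which is why the adjacent cross terms cannot simply be discarded: the products $P_bP_{b+1}$ involve non-commuting projectors and are not individually positive. These are handled implicitly through the local inequality $h_m^2\ge\gamma h_m$ rather than bounded by hand. The remaining non-adjacent terms involve commuting projectors, whose products are genuine projectors and hence positive semidefinite, so it suffices to check that after the rescaling by $1/(k-2)$ they survive with the correct sign. The key numerical input is that the window multiplicity of a non-adjacent pair never exceeds $k-3$, giving $C'\le (k-3)C$ and thus $C-C'/(k-2)\ge \tfrac{1}{k-2}C\ge 0$. I would verify this multiplicity bound directly, taking care of the wrap-around case that occurs when $k<n<2k-2$, using only $n>k$; the hypothesis $k>2$ enters precisely to keep the factor $1/(k-2)$ finite.
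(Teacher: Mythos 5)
Your proposal is correct, but there is nothing in the paper to compare it against: the paper does not prove \thref{thm:gapKnabe}, it quotes the result from Knabe \cite{Knab88}, and your argument is in substance exactly Knabe's original proof --- reduce the gap bound to the operator inequality $\bigl(H^{\circ}(n)\bigr)^2\geq \tilde{c}_k\,H^{\circ}(n)$ (valid because $H^{\circ}(n)$ is frustration free with ground energy $0$), absorb the non-positive adjacent cross terms $P_bP_{b+1}$ into the local window inequality $h_m^2\geq \gamma(H_{1,1}(k))\,h_m$, and keep only the commuting, hence positive semidefinite, distant cross terms after the reweighting. Your coefficient algebra is right: eliminating $A$ gives the prefactor $1+(k-1)\bigl[\gamma(H_{1,1}(k))-1\bigr]/(k-2)=\bigl[(k-1)\gamma(H_{1,1}(k))-1\bigr]/(k-2)=\tilde{c}_k$, with leftover $C-C'/(k-2)$. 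The multiplicity check you defer does hold under the stated hypothesis $n>k$: an adjacent pair of bonds lies in exactly $k-2$ windows, since the wrap-around contribution $k-1-(n-1)=k-n$ is nonpositive, while a pair at cyclic distances $d$ and $n-d$ with $2\leq d\leq n-2$ lies in $\max(0,k-1-d)+\max(0,k-1-n+d)$ windows, which equals $2k-2-n$ in the wrap-around regime $n-k+1\leq d\leq k-2$ and is bounded by $\max(k-3,\,2k-2-n)=k-3$ precisely because $n\geq k+1$; hence $C'\leq(k-3)C$ and $C-C'/(k-2)\geq C/(k-2)\geq 0$ as you claim (with $C'=0$ in the degenerate case $k=3$).
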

The Knabe's bound above  is nontrivial whenever $\gamma(H_{1,1}(k))>1/(k-1)$. Recently, Gosset and Mozgunov  proved a stronger result as stated in  the following theorem~\cite{GossM16}. 
\begin{thm}[Gosset-Mozgunov]\label{thm:gapGM}
	Suppose  $k>2$ and $n>2k$. Then $\gamma(H^{\circ}(n))\geq  c_k$ with
\begin{equation}\label{eq:ck}
c_k:= \frac{5}{6}\left(\frac{k^2+k}{k^2-4}\right)\left[\gamma(H_{1,1}(k))-\frac{6}{k(k+1)}\right].
\end{equation}
\end{thm}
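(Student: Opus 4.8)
The plan is to convert the spectral-gap statement into an operator inequality and then exploit the local gap of length-$k$ windows. Since $H^{\circ}(n)$ is frustration free with ground energy $0$, its gap $\gamma(H^{\circ}(n))$ equals the smallest nonzero eigenvalue, so it suffices to establish a bound of the form $(H^{\circ}(n))^2 \geq c_k\, H^{\circ}(n)$ as operators: every nonzero eigenvalue $\mu$ of $H^{\circ}(n)$ then obeys $\mu^2 \geq c_k\mu$, hence $\mu \geq c_k$. Throughout I would write $H^{\circ}(n) = \sum_{i=1}^n h_i$ with $h_i = P_2(\vec{S}_i + \vec{S}_{i+1})$ the edge projectors (indices taken mod $n$), so that $h_i^2 = h_i$.

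Next I would introduce the window operators: for each $i$ let $B_i = \sum_{j=i}^{i+k-2} h_j$ collect the $k-1$ edge terms on the $k$ consecutive sites $i,\ldots,i+k-1$. Each $B_i$ is a translate of $H_{1,1}(k)$ and is itself frustration free, so writing $\gamma(H_{1,1}(k)) =: \gamma_k$ we have the local inequality $B_i^2 \geq \gamma_k B_i$ for every $i$. The ring combinatorics give $\sum_i B_i = (k-1)H^{\circ}(n)$, while summing the window inequalities yields $\sum_i B_i^2 = \sum_{a,b}(k-1-|a-b|)\,h_a h_b$, a weighted sum over pairs of edges lying in a common window. The key simplification is that for $|a-b|\geq 2$ the projectors $h_a,h_b$ act on disjoint sites, hence commute, so $h_a h_b + h_b h_a \geq 0$ and such contributions may be discarded whenever they appear with a favorable sign; only the diagonal terms (giving $H^{\circ}(n)$) and the nearest-neighbor anticommutators $\{h_i,h_{i+1}\}$ must be tracked exactly. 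Carrying out this bookkeeping with the first-order inequality alone already reproduces Knabe's bound of \thref{thm:gapKnabe}.

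The quadratic improvement — replacing Knabe's threshold $1/(k-1)$ by the sharper $6/(k(k+1))$ — requires going beyond the plain inequality $B_i^2 \geq \gamma_k B_i$. Here I would organize the square $(H^{\circ}(n))^2 = (k-1)^{-2}\bigl(\sum_i B_i\bigr)^2$, grouping the cross terms $B_i B_j$ according to the overlap pattern of the two windows and choosing a specific nonnegative weighting of the window contributions so that, after discarding the positive commuting-projector products, the surviving terms collapse to a multiple of $H^{\circ}(n)$ with the optimal constant. The hypothesis $n > 2k$ enters precisely to ensure that windows and their overlaps do not wrap around the ring, so each pair of edges is counted with the intended multiplicity. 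Optimizing the weighting is what produces the explicit prefactor $\tfrac{5}{6}(k^2+k)/(k^2-4)$ and the correction $6/(k(k+1))$ appearing in \eref{eq:ck}.

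The main obstacle is the sharp control of the boundary terms $\{h_i,h_{i+1}\}$: handling them through the first-order window inequality loses a factor of $k$ and gives only Knabe's $1/k$-type threshold. The heart of the Gosset–Mozgunov argument is an auxiliary operator inequality that estimates these terms to second order, using both $0 \leq h_i \leq 1$ and the frustration-free structure, and it is the optimization of the associated weight polynomial that pins down the constants $5/6$ and $6/(k(k+1))$. Assembling this refined estimate with the ring combinatorics above then yields $(H^{\circ}(n))^2 \geq c_k\, H^{\circ}(n)$ with $c_k$ as stated, which completes the proof.
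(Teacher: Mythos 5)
First, a point of reference: the paper does not prove \thref{thm:gapGM} at all — it is quoted from Gosset and Mozgunov \cite{GossM16} — so your attempt has to be measured against that original argument. Your scaffolding is sound and matches its opening moves: for a frustration-free Hamiltonian with ground energy zero, the square trick $(H^{\circ}(n))^2 \succeq c\, H^{\circ}(n) \Rightarrow \gamma(H^{\circ}(n)) \geq c$ is correct; each window operator $B_i$ is a translate of $H_{1,1}(k)$ tensored with the identity, so $B_i^2 \succeq \gamma(H_{1,1}(k))\, B_i$; and the identities $\sum_i B_i = (k-1) H^{\circ}(n)$ and $\sum_i B_i^2 = \sum_{a,b}(k-1-|a-b|)\, h_a h_b$ (with ring distances, which $n>2k$ keeps unambiguous), together with positivity of products of commuting projectors, do recover Knabe's bound of \thref{thm:gapKnabe}.

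At the decisive step, however, the proposal stops being a proof. Everything that separates \thref{thm:gapGM} from Knabe — the threshold $6/(k(k+1))$ in place of $1/(k-1)$ and the prefactor $\frac{5}{6}(k^2+k)/(k^2-4)$ — is delegated to an unnamed ``auxiliary operator inequality'' and an unspecified ``weight polynomial optimization'' that you assert exist but never state, let alone verify; since these constants are the entire content of the theorem, this is a genuine gap rather than a routine omission. Worse, the one concrete route you do sketch is vacuous: because $\sum_i B_i = (k-1)H^{\circ}(n)$ exactly, the expansion $(H^{\circ}(n))^2=(k-1)^{-2}\bigl(\sum_i B_i\bigr)^2$ is an identity that returns $H^{\circ}(n)^2$ unchanged, and the cross terms $B_iB_j$ carry no frustration-free information whatsoever — the window-gap inequality applies only to the diagonal squares $B_i^2$, so no weighting of this tautological expansion can produce new constants. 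The actual Gosset--Mozgunov argument avoids cross terms between windows entirely: it squares \emph{weighted} (deformed) window Hamiltonians of the form $\sum_t c_t\, h_{j+t}$, one window at a time, with an explicitly chosen, essentially linear coefficient profile $c_t$, and it is this specific choice — whose second-moment sums $\sum_t t$ and $\sum_t t^2$ generate $k(k+1)/2$ and hence the $6/(k(k+1))$ threshold — that controls the anticommutators $\{h_a,h_{a+1}\}$ to second order and yields \eref{eq:ck}. Without stating and proving that weighted-window inequality, the proposal is an outline of where a proof should go, not a proof.
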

Numerical calculation suggests that \thsref{thm:gapKnabe} and \ref{thm:gapGM} still hold if $H^{\circ}(n)$ is replaced by $H_{\frac{1}{2},\frac{1}{2}}(n)$, $H_{\frac{1}{2},1}(n)$, or  $H_{1,1}(n)$ (cf. \tref{tab:AKLTgap}).

\subsubsection{general graphs}

\begin{figure}[t]
	\includegraphics[width=0.46\textwidth]{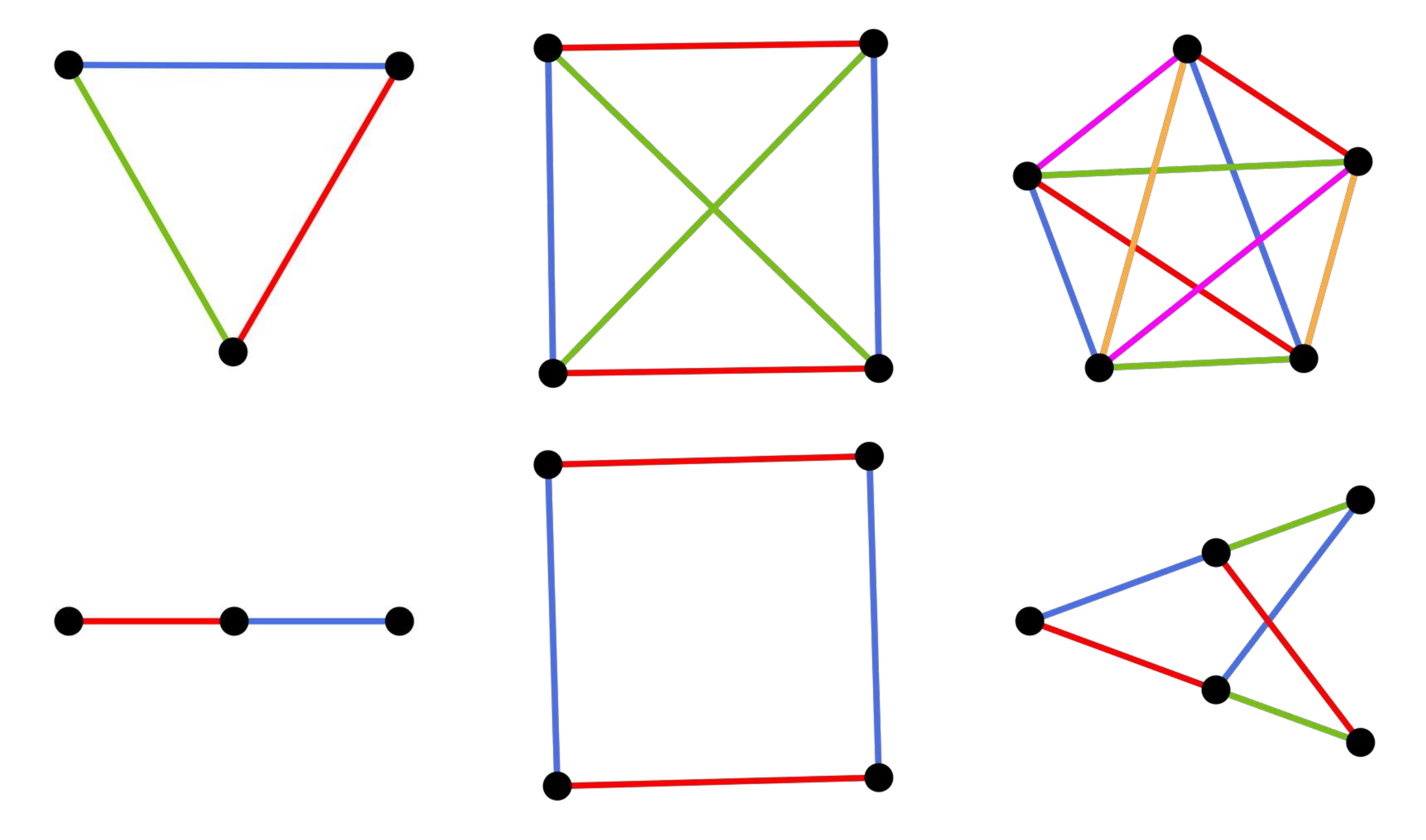} 	
	\caption{\label{fig:gammaMaxMin}Connected graphs of three, four, and five vertices whose AKLT Hamiltonians have the largest spectral gaps (up) and smallest spectral gaps (down).}
\end{figure}

Next, we consider the spectral gaps of AKLT Hamiltonians associated with general connected graphs $G(V,E)$ up to five vertices, that is, $n=|V|\leq 5$. Up to isomorphism there are 1 connected graph of two  vertices,  2 connected graphs of three  vertices,  6 connected graphs of four  vertices, and  21 connected graphs of five  vertices. The corresponding spectral gaps are presented in \tref{tab:VAKLTgen} in \aref{app:VAKLTgen}.  Calculation shows that  the Hamiltonian associated with the complete graph has the largest spectral gap among all graphs with the same number of vertices, as illustrated in \fref{fig:gammaMaxMin}. To be specific, the spectral gaps are $5/6$, $7/10$, and $3/5$ for complete graphs of three, four, and five vertices, respectively.  We guess the same conclusion holds even for graphs with more than five vertices. We have not found a general pattern for the graph that leads to the smallest spectral gap: the minimum is attained at the linear graph when  $n=3$, the cycle graph when $n=4$, and the lower-right graph shown in   \fref{fig:gammaMaxMin} (corresponding to graph No. 18 in  \tref{tab:VAKLTgen}) when  $n=5$.

\section{\label{sec:BondVerify}Bond verification protocols}
To verify the AKLT state associated with a given graph, we first need to construct bond verification protocols for verifying each pair of adjacent nodes, which is tied to the problem of subspace verification. Here we shall focus on  verification protocols that  build on spin measurements.

\subsection{\label{sec:TestOperator}Test operators based on spin measurements}

Let $\vec{S}_1$ and $\vec{S}_2$ be two spin operators of spin values $S_1$ and $S_2$,   respectively. Let $S=S_1+S_2$ and let $P_S=P_S(\vec{S}_1+\vec{S}_2)$ be the projector onto the subspace associated with the maximum total spin $S$ as defined in \eref{eq:Pe}; let $Q=1-P_S$. Here our goal is to verify the null space of $P_S$, that is, the support of $Q$. To this end, we shall construct test operators based on spin measurements.

Suppose the two parties perform spin measurements along  directions $\vec{r}, \vec{s}$, respectively, where $\vec{r}, \vec{s}$ are real unit vectors in dimension 3. The measurement outcomes can be labeled  by  eigenvalues $m_1$ and $m_2$ of $\vec{r}\cdot \vec{S}_1$ and $\vec{s}\cdot \vec{S}_2$, respectively. Let
\begin{align}
|S_1,m_1; S_2,m_2\>_{\vec{r},\vec{s}}:=&|S_1,m_1\>_\vec{r}\otimes |S_2,m_2\>_\vec{s},\\
p_{\vec{r},\vec{s}}(S_1,m_1;S_2,m_2):=&\|P_S |S_1,m_1; S_2,m_2\>_{\vec{r},\vec{s}}\|^2\nonumber\\
=&\tr[P_S P_{\vec{r},\vec{s}}(S_1,m_1; S_2,m_2)],
\end{align}
where $P_{\vec{r},\vec{s}}(S_1,m_1; S_2,m_2)$ is the rank-1 projector onto the state $|S_1,m_1; S_2,m_2\>_{\vec{r},\vec{s}}$.
Then a general test operator $R$ is a linear combination of the  projectors $P_{\vec{r},\vec{s}}(S_1,m_1; S_2,m_2)$ associated with all possible outcomes.
To guarantee that all states supported in the support of $Q$ can pass the test with certainty, $R$ should satisfy the condition
\begin{align}
R\geq P_{\vec{r},\vec{s}}(S_1,m_1; S_2,m_2) \label{eq:TestCondition}
\end{align}
whenever $p_{\vec{r},\vec{s}}(S_1, m_1; S_2,m_2)<1$.

If $\vec{r}$ is neither parallel nor antiparallel to $\vec{s}$, then the inequality in \eref{eq:TestCondition} should hold for all possible outcomes $m_1, m_2$ according to \lref{lem:SpinProjProb} below. So the test operator $R$ is equal to the identity operator, and the test is trivial. When $\vec{s}$ is parallel to $\vec{r}$,  nontrivial test operators can be constructed.  Here we are particularly interested in the canonical test projector associated with $\vec{r}$ as defined as follows,
\begin{align}
R_\vec{r}:=&1-P_{\vec{r}}(S_1; S_2)-P_{\vec{r}}(-S_1; -S_2)\nonumber\\
=&1-|++\>_{\vec{r}}\<++|-|--\>_{\vec{r}}\<--|, \label{eq:TestCanonical}
\end{align}
where 
\begin{align}
P_{\vec{r}}(S_1; S_2):=&P_{\vec{r},\vec{r}}(S_1,S_1; S_2,S_2),\\
P_{\vec{r}}(-S_1; -S_2):=&P_{\vec{r},\vec{r}}(S_1,-S_1; S_2,-S_2),\\ 
|\pm\pm\>_{\vec{r}}:=&|\pm S_1\>_\vec{r}\otimes |\pm S_2\>_\vec{r}.
\end{align}
According to \lref{lem:SpinProjProb} below, any other test operator $R$ based on the same spin measurements satisfy $R\geq R_\vec{r}$ and is thus suboptimal. When $\vec{s}=-\vec{r}$, the set of accessible test operators does not change given that 
\begin{align}
P_{\vec{r},\vec{s}}(S_1,m_1; S_2,m_2)&=P_{\vec{r},-\vec{s}}(S_1,m_1; S_2, -m_2). 
\end{align}
Therefore, it suffices to consider canonical test projectors based on parallel spin measurements. 

The following lemma employed in the above analysis is proved in \aref{sec:SpinProjProbProof}. 
\begin{lem}\label{lem:SpinProjProb}
Suppose $S_1$ and $S_2$ are positive integers or half integers. Then 
\begin{align}
p_{\vec{r},\vec{s}}(S_1,m_1; S_2,m_2)\leq 1,
\end{align}
and the inequality is saturated iff one of the following four conditions holds
\begin{subequations}
\begin{align}
\vec{r}&=\vec{s}, & m_1&=S_1, & m_2&=S_2; \label{eq:SpinCa}\\
\vec{r}&=\vec{s}, & m_1&=-S_1, & m_2&=-S_2; \label{eq:SpinCb} \\
\vec{r}&=-\vec{s},& m_1&=S_1,& m_2&=-S_2;  \label{eq:SpinCc}\\
\vec{r}&=-\vec{s},& m_1&=-S_1,& m_2&=S_2.  \label{eq:SpinCd}
\end{align}
\end{subequations}
\end{lem}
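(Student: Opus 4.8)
The plan is to reduce the entire statement to a single expectation value of the total-spin Casimir operator $(\vec{S}_1+\vec{S}_2)^2$. Writing $|\psi\>:=|S_1,m_1;S_2,m_2\>_{\vec{r},\vec{s}}$, I first note that $p_{\vec{r},\vec{s}}(S_1,m_1;S_2,m_2)=\<\psi|P_S|\psi\>=\|P_S|\psi\>\|^2$, so the bound $p\le1$ is immediate from $0\le P_S\le1$. The substance lies in the saturation condition. The key observation is that $P_S$ projects onto the eigenspace of $(\vec{S}_1+\vec{S}_2)^2$ with the \emph{largest} eigenvalue $S(S+1)=(S_1+S_2)(S_1+S_2+1)$. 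Expanding $|\psi\>$ in the total-spin eigenbasis gives $\<\psi|(\vec{S}_1+\vec{S}_2)^2|\psi\>=\sum_{S'}S'(S'+1)\|P_{S'}|\psi\>\|^2\le S(S+1)$, with equality iff all the weight sits on $S'=S$, i.e.\ iff $\|P_S|\psi\>\|^2=1$. Hence $p=1$ is equivalent to $\<\psi|(\vec{S}_1+\vec{S}_2)^2|\psi\>=S(S+1)$.

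Second, I would compute this expectation value explicitly. Using $(\vec{S}_1+\vec{S}_2)^2=\vec{S}_1^2+\vec{S}_2^2+2\vec{S}_1\cdot\vec{S}_2$ and the product structure of $|\psi\>$, the cross term factorizes as $\<\psi|\vec{S}_1\cdot\vec{S}_2|\psi\>=\<\vec{S}_1\>\cdot\<\vec{S}_2\>$, where $\<\vec{S}_1\>$ is evaluated in $|S_1,m_1\>_\vec{r}$ and $\<\vec{S}_2\>$ in $|S_2,m_2\>_\vec{s}$. The one technical input is the polarization identity $\<m|_\vec{r}\,\vec{S}\,|m\>_\vec{r}=m\vec{r}$: for an eigenstate of $\vec{r}\cdot\vec{S}$ the expectation of $\vec{S}$ points along $\vec{r}$ with magnitude equal to the eigenvalue. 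This follows by rotational covariance (reduce to $\vec{r}=\hat z$, where $\<S_z\>=m$ and $\<S_x\>=\<S_y\>=0$ because $S_\pm$ are strictly off-diagonal in the $S_z$ basis). Consequently $\<\psi|\vec{S}_1\cdot\vec{S}_2|\psi\>=m_1 m_2(\vec{r}\cdot\vec{s})$, and therefore $\<\psi|(\vec{S}_1+\vec{S}_2)^2|\psi\>=S_1(S_1+1)+S_2(S_2+1)+2m_1 m_2(\vec{r}\cdot\vec{s})$.

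Finally, I would combine the two parts. Since $(S_1+S_2)(S_1+S_2+1)=S_1(S_1+1)+S_2(S_2+1)+2S_1S_2$, the saturation condition collapses to the single scalar equation $m_1 m_2(\vec{r}\cdot\vec{s})=S_1 S_2$. Now $|m_1|\le S_1$, $|m_2|\le S_2$, and $|\vec{r}\cdot\vec{s}|\le1$ give $|m_1 m_2(\vec{r}\cdot\vec{s})|\le S_1 S_2$; because $S_1,S_2\ge1/2>0$ the right-hand side is strictly positive, so equality forces every factor to be extremal, namely $|m_1|=S_1$, $|m_2|=S_2$, and $\vec{r}\cdot\vec{s}=\pm1$ (that is $\vec{s}=\pm\vec{r}$), with the signs arranged so that the product equals $+S_1 S_2$. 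Splitting on $\vec{s}=\vec{r}$ versus $\vec{s}=-\vec{r}$ and matching signs then reproduces exactly the four cases \eqref{eq:SpinCa}--\eqref{eq:SpinCd}.

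I expect the only delicate points to be the polarization identity $\<\vec{S}\>=m\vec{r}$ and a clean justification of the "iff" in the variational step, namely that $S(S+1)$ is attained only on $\mathrm{range}(P_S)$ so that saturation forces $P_S|\psi\>=|\psi\>$; both are standard but each warrants a careful sentence. Everything else is elementary algebra together with the sign analysis of the extremal product, which presents no real obstacle.
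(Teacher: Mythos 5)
Your proof is correct, and it takes a genuinely different route from the paper's. The paper argues by cases: for $\vec{s}=\vec{r}$ it reduces to $\vec{r}=\hat z$ and computes $p(S_1,m_1;S_2,m_2)=|\<S,m|S_1,m_1;S_2,m_2\>|^2$ explicitly from the Clebsch--Gordan formula, reading off that saturation forces $m_1+m_2=\pm S$; it then transfers this to $\vec{s}=-\vec{r}$ by inversion symmetry, and handles general directions by expanding $|S_2,m_2\>_{\vec{s}}$ in the $\vec{r}$-eigenbasis, using that $P_S$ commutes with $(\vec{S}_1+\vec{S}_2)\cdot\vec{r}$ to diagonalize the probability as a convex combination, and finally invoking the fidelity formula \eqref{eq:rsOverlap} to force $\vec{s}=\pm\vec{r}$. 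You instead collapse the entire statement into one scalar identity, $\<\psi|(\vec{S}_1+\vec{S}_2)^2|\psi\>=S_1(S_1+1)+S_2(S_2+1)+2m_1m_2(\vec{r}\cdot\vec{s})$, combined with the variational characterization that the Casimir expectation attains its top eigenvalue $S(S+1)$ iff $\psi$ lies in the range of $P_S$; the two technical inputs you flagged are both sound as sketched (the "iff" follows from $S'(S'+1)<S(S+1)$ for $S'<S$ together with $\sum_{S'}\|P_{S'}\psi\|^2=1$, and the polarization identity ${}_{\vec{r}}\<m|\vec{S}|m\>_{\vec{r}}=m\vec{r}$ follows by rotating to $\vec{r}=\hat z$ where $S_\pm$ are strictly off-diagonal). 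Your endgame is also airtight: since $S_1S_2>0$, the equality $m_1m_2(\vec{r}\cdot\vec{s})=S_1S_2$ with $|m_1|\le S_1$, $|m_2|\le S_2$, $|\vec{r}\cdot\vec{s}|\le1$ forces each factor extremal with a positive sign product, which is exactly the four cases \eqref{eq:SpinCa}--\eqref{eq:SpinCd}, and the same scalar equation delivers the "if" direction for free. What your approach buys is uniformity and economy: no case split on the relation between $\vec{r}$ and $\vec{s}$, no Clebsch--Gordan coefficients, and no appeal to \eqref{eq:rsOverlap}. What the paper's approach buys is more information than the lemma asserts, namely an explicit closed form for the passing probability in the parallel case, though that extra information is not needed for the lemma itself.
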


\subsection{Spectral gaps of bond verification protocols}
According to the discussion in \sref{sec:TestOperator}, each canonical test projector is specified by a unit vector $\vec{r}$ in dimension 3. Given any probability distribution $\mu$ on the unit sphere, then  a bond verification protocol can be constructed by performing each test $R_{\vec{r}}$ with a suitable probability. 
The corresponding   verification operator reads
\begin{align}
\Omega_{S_1, S_2}(\mu)=\int R_{\vec{r}} d \mu(\bm{r}). \label{eq:BondVO}
\end{align}
When $S_1$ and $S_2$ are clear from the context, $\Omega_{S_1, S_2}(\mu)$ can be abbreviated as $\Omega(\mu)$ for simplicity. 
The spectral gap of $\Omega_{S_1, S_2}(\mu)$ reads
\begin{align}
\nu(\Omega_{S_1, S_2}(\mu))=1-\|P_S\Omega_{S_1, S_2}(\mu))P_S\|, \label{eq:SpectralGapBond}
\end{align}
where $P_S$ is the projector defined according to \eref{eq:Pe}.
Note that the spectral gap $\nu(\Omega_{S_1, S_2}(\mu))$ is invariant when $\mu$ is subjected to any orthogonal transformation; in addition, $\nu(\Omega(\mu))$ is concave in $\mu$. 
\begin{lem}\label{lem:SpectralGapBond}
The spectral gap $\nu(\Omega_{S_1, S_2}(\mu))$ is independent of $S_1$ and $S_2$ once the sum $S=S_1+S_2$ is fixed. 
\end{lem}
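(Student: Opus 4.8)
The plan is to reduce the norm $\|P_S \Omega_{S_1,S_2}(\mu) P_S\|$ appearing in \eref{eq:SpectralGapBond} to the norm of an operator acting on the abstract spin-$S$ irreducible representation, in a form that manifestly forgets the splitting $S=S_1+S_2$. The starting observation is that the two stretched product states $|++\>_{\vec{r}}=|S_1,S_1\>_{\vec{r}}\otimes|S_2,S_2\>_{\vec{r}}$ and $|--\>_{\vec{r}}$ are eigenstates of $\vec{r}\cdot(\vec{S}_1+\vec{S}_2)$ with the extremal eigenvalues $\pm S$. Since $+S$ (resp.\ $-S$) is the maximal (resp.\ minimal) eigenvalue of the total-spin projection and occurs with multiplicity one—it appears only inside the top, spin-$S$, irrep—both states lie in the range of $P_S$ and are precisely the highest- and lowest-weight vectors $|S,\pm S\>_{\vec{r}}$ of the total spin along $\vec{r}$.

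First I would use this to compress a single test. Because $P_S|\pm\pm\>_{\vec{r}}=|\pm\pm\>_{\vec{r}}$, the definition \eref{eq:TestCanonical} gives
\begin{equation}
P_S R_{\vec{r}} P_S = P_S - |++\>_{\vec{r}}\<++| - |--\>_{\vec{r}}\<--|,
\end{equation}
and integrating against $\mu$ yields
\begin{equation}
P_S \Omega_{S_1,S_2}(\mu) P_S = P_S - \int \bigl(|++\>_{\vec{r}}\<++| + |--\>_{\vec{r}}\<--|\bigr)\, d\mu(\vec{r}),
\end{equation}
an operator supported entirely on the spin-$S$ subspace.

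Next I would invoke Schur's lemma. The subspace $\mathrm{ran}(P_S)\subset\caH_1\otimes\caH_2$ carries the spin-$S$ irrep of $\mathrm{SU}(2)$, so there is an isometric intertwiner $U$ onto the standard $(2S+1)$-dimensional space $\caH_S$, unique up to a phase. Being equivariant, $U$ conjugates the restricted operators $\vec{r}\cdot(\vec{S}_1+\vec{S}_2)$ to the standard total-spin operators $\vec{r}\cdot\vec{S}^{(S)}$, and hence carries the one-dimensional extremal eigenspaces to their standard counterparts; the phase ambiguity is harmless because only rank-one projectors enter. Consequently $U\bigl(P_S \Omega_{S_1,S_2}(\mu) P_S\bigr)U^\dagger$ equals
\begin{equation}
1_{\caH_S} - \int \bigl(|S,S\>^{(S)}_{\vec{r}}\<S,S| + |S,-S\>^{(S)}_{\vec{r}}\<S,-S|\bigr)\, d\mu(\vec{r}),
\end{equation}
which depends only on $S$ and $\mu$. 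Since the operator norm is unitarily invariant, $\|P_S \Omega_{S_1,S_2}(\mu) P_S\|$—and therefore the spectral gap $\nu(\Omega_{S_1,S_2}(\mu))$—is independent of the decomposition $S=S_1+S_2$.

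The step I expect to require the most care is the identification of the stretched product state with the highest-weight vector of the top irrep, i.e.\ confirming that the extremal total-spin eigenvalue $+S$ is nondegenerate so that $|++\>_{\vec{r}}$ is forced into $\mathrm{ran}(P_S)$. This is where the multiplicity structure of the Clebsch--Gordan decomposition enters, and it is also what guarantees that the equivariant $U$ maps extremal projectors to extremal projectors rather than merely matching eigenvalues. Everything else is bookkeeping flowing from $\mathrm{SU}(2)$-equivariance and the unitary invariance of the norm.
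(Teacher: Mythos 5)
Your proof is correct and takes essentially the same route as the paper, whose one-line justification---that the representation of the angular momentum operators carried by the states $|{\pm}{\pm}\rangle_{\vec{r}}$ is independent of $S_1$ and $S_2$ once $S=S_1+S_2$ is fixed---is exactly what your compression $P_S\Omega_{S_1,S_2}(\mu)P_S$ together with the Schur-lemma intertwiner spells out in detail. The nondegeneracy of the extremal eigenvalues $\pm S$ of $\vec{r}\cdot(\vec{S}_1+\vec{S}_2)$, which you rightly flag as the crux, is the Clebsch--Gordan multiplicity fact implicit in the paper's remark and in its subsequent definition \eref{eq:OmegaS} of $\Omega_S(\mu)$ on an abstract spin-$S$ space.
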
	
\Lref{lem:SpectralGapBond} follows from the definition of the test operator $R_\vec{r}$ in \eref{eq:TestCanonical}  and the fact that the representation of the angular momentum operators carried by the states $|S_1, S_2\>_{\vec{r}}$ is independent of $S_1$ and $S_2$ once the sum $S=S_1+S_2$ is fixed.
This result holds even if $S_1=0$ or $S_2=0$, which is very helpful to simplify the computation of the spectral gap. In view of these facts, we shall denote the spectral gap of $\Omega_{S_1, S_2}(\mu)$ by $\nu_S(\mu)$ for simplicity.

Denote by $\mu_\sym$ the average distribution of $\mu$ and its center inversion.
Define 
\begin{align}
\Omega_S(\mu):=&\int(\,1-|S\>_{\vec{r}}\<S|-|-S\>_{\vec{r}}\<-S|\,) d\mu(\vec{r})\nonumber\\
=&
1-2\int |S\>_{\vec{r}}\<S| d\mu_\sym(\vec{r}). \label{eq:OmegaS}
\end{align}
Then 
\begin{align}
\nu_S(\mu)=1-\|\Omega_S(\mu)\|=\lambda_{\min}(O_S(\mu)), \label{eq:nuSmu}
\end{align}
where
\begin{align}
O_S(\mu):=2\int |S\>_{\vec{r}}\<S| d\mu_\sym(\vec{r}), \label{eq:MSmu}
\end{align}
and $\lambda_{\min}$ denotes the smallest eigenvalue. In particular,  $\nu_S(\mu)$ is nonzero iff the operator  $O_S(\mu)$ has full rank.
When $\mu$ is a discrete distribution, to achieve a nonzero spectral gap $\nu_S(\mu)>0$,
 the support of $\mu_\sym$ should contain at least $2S+1$ points, so the support of $\mu$ should contain at least $\lceil S+\frac{1}{2}\rceil$ points. To construct a nontrivial bond verification protocol, therefore, at least $\lceil S+\frac{1}{2}\rceil$ distinct canonical tests are required.

The following lemma proved in \aref{asec:nuSmuProof} is very instructive to understanding the properties of $\nu_S(\mu)$. 
\begin{lem}\label{lem:nuSmu}
Suppose $\mu$ is a probability distribution on the unit sphere. Then $\nu_S(\mu)$ is nonincreasing in $S$. If $S_1\leq S_2$, then 
\begin{align}
\nu_{S_1}(\mu)\geq \frac{2S_2+1}{2S_1+1} \nu_{S_2}(\mu). 
 \label{eq:nuSmuIneq} 
\end{align}
\end{lem}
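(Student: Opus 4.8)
The plan is to establish the inequality \eqref{eq:nuSmuIneq} first and obtain monotonicity as an immediate corollary: since $S_1\le S_2$ gives $(2S_2+1)/(2S_1+1)\ge 1$ while $\nu_{S_2}(\mu)=\lambda_{\min}(O_{S_2}(\mu))\ge 0$ (because $O_{S_2}(\mu)$ is positive semidefinite), \eqref{eq:nuSmuIneq} yields $\nu_{S_1}(\mu)\ge\nu_{S_2}(\mu)$. Equivalently, the whole statement reduces to showing that $(2S+1)\nu_S(\mu)$ is nonincreasing in $S$. The key device is to realize the spin-$S$ representation inside the symmetric subspace $\mathrm{Sym}^{2S}(\bbC^2)$ of $2S$ qubits, under which the coherent state factorizes as $|S\>_{\vec{r}}=|+\>_{\vec{r}}^{\otimes 2S}$, so that by \eref{eq:MSmu}
\[
O_S(\mu)=2\int \bigl(|+\>_{\vec{r}}\<+|\bigr)^{\otimes 2S}\,d\mu_\sym(\vec{r}),
\]
a sum of product operators on the $2S$ qubits that is supported on $\mathrm{Sym}^{2S}(\bbC^2)$.

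Next I would exploit a partial-trace identity relating different spins. Write $t=S_2-S_1\ge 0$, so $2t$ is a nonnegative integer. Tracing out any $2t$ of the $2S_2$ qubits sends each product projector $(|+\>_{\vec{r}}\<+|)^{\otimes 2S_2}$ to $(|+\>_{\vec{r}}\<+|)^{\otimes 2S_1}$, since the partial trace over the discarded factors of $(|+\>_{\vec{r}}\<+|)^{\otimes 2t}$ equals its trace, namely $1$. Carrying this through the integral gives the clean relation $\Tr_{2t}\bigl(O_{S_2}(\mu)\bigr)=O_{S_1}(\mu)$. Into this I would feed an operator inequality: because $O_{S_2}(\mu)$ is positive semidefinite, vanishes off $\mathrm{Sym}^{2S_2}(\bbC^2)$, and has least eigenvalue $\nu_{S_2}(\mu)$ on that subspace, one has $O_{S_2}(\mu)\ge \nu_{S_2}(\mu)\,\Pi_{S_2}$ as operators on the full $2S_2$-qubit space, where $\Pi_{S_2}$ is the projector onto the symmetric subspace. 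Since the partial trace is completely positive, applying $\Tr_{2t}$ preserves this inequality, yielding $O_{S_1}(\mu)\ge \nu_{S_2}(\mu)\,\Tr_{2t}(\Pi_{S_2})$.

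Finally I would evaluate $\Tr_{2t}(\Pi_{S_2})$. Using the overcompleteness (Schur) relation $\int |S\>_{\vec{r}}\<S|\,d\vec{r}=(2S+1)^{-1}\Pi_S$ for the normalized uniform measure $d\vec{r}$ on the sphere, one has $\Pi_{S_2}=(2S_2+1)\int|S_2\>_{\vec{r}}\<S_2|\,d\vec{r}$; tracing out $2t$ qubits and reapplying the same relation for $S_1$ gives $\Tr_{2t}(\Pi_{S_2})=\tfrac{2S_2+1}{2S_1+1}\Pi_{S_1}$ (equivalently, a telescoping product $\prod_{m=2S_1+1}^{2S_2}\tfrac{m+1}{m}$ of the one-qubit partial traces of symmetric-subspace projectors). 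Substituting and restricting to $\caH_{S_1}=\mathrm{Sym}^{2S_1}(\bbC^2)$, on which $\Pi_{S_1}$ acts as the identity, gives $O_{S_1}(\mu)\ge \tfrac{2S_2+1}{2S_1+1}\nu_{S_2}(\mu)\,\Pi_{S_1}$, and taking the smallest eigenvalue produces \eqref{eq:nuSmuIneq}. The conceptual heart — the partial-trace identity combined with positivity of the partial trace — is clean; I expect the only genuine bookkeeping to be in confirming the coherent-state completeness constant and the partial trace of the symmetric-subspace projector, which is where an error would most likely hide.
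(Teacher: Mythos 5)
Your proof is correct and follows essentially the same route as the paper's: both realize $O_S(\mu)$ on the symmetric subspace of $2S$ qubits via $|S\>_{\vec{r}}=|+\>_{\vec{r}}^{\otimes 2S}$, use the partial-trace identity relating $O_{S_2}(\mu)$ to $O_{S_1}(\mu)$, push the operator inequality $O_{S_2}(\mu)\geq\nu_{S_2}(\mu)\,\Pi_{S_2}$ through the (positive) partial trace, and extract the factor $(2S_2+1)/(2S_1+1)$ from the partial trace of the symmetric-subspace projector. The only differences are cosmetic: you keep the factor of $2$ explicit and spell out the computation of $\Tr_{2t}(\Pi_{S_2})$ via the coherent-state completeness relation, which the paper asserts implicitly through its operators $W_j$.
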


In the special case $S=1/2$, we have  $\nu_S(\mu)=1$ irrespective of the distribution $\mu$. So \lref{lem:nuSmu} implies that 
\begin{align}
 \nu_S(\mu)\leq \frac{2}{2S+1}, \label{eq:nuMax}
\end{align}
which sets an upper bound for the spectral gap achievable by spin measurements. 
Alternatively, \eref{eq:nuMax} follows from  \eref{eq:OmegaS}, which implies that 
\begin{align}
\tr[\Omega_S(\mu)]=2S-1, \quad \|\Omega_S(\mu)\|\geq \frac{2S-1}{2S+1}. 
\end{align}
Bond verification protocols that saturate the upper bound in \eref{eq:nuMax} are called optimal. Notably, this bound 
is saturated when  $\mu$ is the isotropic (uniform) distribution on the unit sphere, which  leads to  the \emph{isotropic protocol}.

To clarify the condition required for constructing an optimal bond verification protocol, we need to introduce additional concepts. 
Let $t$ be a nonnegative integer. A probability distribution $\mu$ on the unit sphere is a (spherical) $t$-design if the average of any polynomial of  degree less than or equal to 
 $t$ over the distribution is equal to the average over the isotropic distribution \cite{DelsGS77,Seid01,BannB09}. By definition a $t$-design is automatically a $(t-1)$-design for any positive integer $t$. The design strength of the distribution $\mu$ is the largest integer $t$ such that $\mu$ is a $t$-design.
The isotropic distribution  forms a spherical $\infty$-design and has strength $\infty$. If $\mu$ is center symmetric, then $\mu$ is a $2j$-design iff $\mu$ is a $(2j+1)$-design for any positive integer $j$, so the strength of $\mu$ is always an odd integer. The next theorem follows from a similar reasoning used to establish Theorem 3 in the companion paper \cite{ZhuLC22}. A self-contained proof is presented in \aref{asec:OmegaSdesign}. 
 \begin{thm}\label{thm:OmegaSdesign}
	Let $\mu$ be a probability distribution on the unit sphere and $S$ a positive integer or half integer. Then the following four statements are equivalent. 
	\begin{enumerate}
		
		\item $\nu_S(\mu)=\frac{2}{2S+1}$.
		
		\item $\Omega_S(\mu)=\frac{2S-1}{2S+1}$.
		
		\item $\Omega_S(\mu)$ is proportional to  the identity operator. 
		
		\item $\mu_\sym$ forms a spherical $t$-design with $t=2S$. 
	\end{enumerate}
\end{thm}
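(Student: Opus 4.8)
The plan is to split the four conditions into a spectral cluster $(1)\Leftrightarrow(2)\Leftrightarrow(3)$, which I can settle with elementary positivity and trace bookkeeping, and the geometric equivalence $(3)\Leftrightarrow(4)$, which is where the representation theory of the spin-$S$ irrep enters and which I expect to be the real content.

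For the spectral cluster I would first note that each canonical test projector $R_{\vec{r}}=1-|S\>_{\vec{r}}\<S|-|{-S}\>_{\vec{r}}\<{-S}|$ is a genuine orthogonal projector (since $|S\>_{\vec{r}}$ and $|{-S}\>_{\vec{r}}$ are orthogonal), so $0\le R_{\vec{r}}\le 1$ and therefore $\Omega_S(\mu)=\int R_{\vec{r}}\,d\mu(\vec{r})$ is positive semidefinite; together with the trace identity $\tr[\Omega_S(\mu)]=2S-1$ on the $(2S+1)$-dimensional space (already recorded above) this gives everything. Indeed, $(3)\Rightarrow(2)$ follows by fixing the proportionality constant through the trace, while $(2)\Rightarrow(3)$ and $(2)\Rightarrow(1)$ are immediate. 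The only slightly substantive step is $(1)\Rightarrow(2)$: if $\nu_S(\mu)=\tfrac{2}{2S+1}$ then $\|\Omega_S(\mu)\|=\tfrac{2S-1}{2S+1}$, which equals the mean eigenvalue $\tr[\Omega_S(\mu)]/(2S+1)$; since $\Omega_S(\mu)\ge0$ its largest eigenvalue equals its operator norm, and a Hermitian operator whose largest eigenvalue equals its mean must be a multiple of the identity.

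For $(3)\Leftrightarrow(4)$ I would expand the highest-weight projector in the basis of irreducible spherical tensor operators $T_{kq}$ ($0\le k\le 2S$, $-k\le q\le k$) on the spin-$S$ space,
\[
|S\>_{\vec{r}}\<S|=\sum_{k=0}^{2S}\alpha_k\sum_{q=-k}^{k}Y_{kq}^*(\vec{r})\,T_{kq},
\]
with scalar coefficients $\alpha_k$. Using $|{-S}\>_{\vec{r}}\<{-S}|=|S\>_{-\vec{r}}\<S|$ and the parity rule $Y_{kq}(-\vec{r})=(-1)^kY_{kq}(\vec{r})$, the odd-$k$ terms cancel in the symmetric combination, so that
\[
O_S(\mu)=\sum_{\substack{0\le k\le 2S\\ k\ \mathrm{even}}}2\alpha_k\sum_{q=-k}^{k}\Bigl(\int Y_{kq}^*(\vec{r})\,d\mu(\vec{r})\Bigr)T_{kq}.
\]
Because $T_{00}\propto1$ and the $T_{kq}$ are linearly independent, $O_S(\mu)$---equivalently $\Omega_S(\mu)=1-O_S(\mu)$---is a multiple of the identity iff the moment $\int Y_{kq}^*\,d\mu$ vanishes for every even $k$ with $2\le k\le 2S$ and every $q$. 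Since the odd moments of $\mu_\sym$ vanish automatically by center symmetry while its even moments coincide with those of $\mu$, this is exactly the statement that $\mu_\sym$ is a spherical $2S$-design.

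The hard part will be establishing that $\alpha_k\ne0$ for every $0\le k\le 2S$; otherwise the proportionality of $O_S(\mu)$ to the identity would not force the moments up to degree $2S$ to vanish, and the design strength in statement (4) could not be pinned to $t=2S$. I would obtain this from the Wigner--Eckart theorem: the coefficient $\alpha_k$ is proportional to $\<S,S|T_{k0}|S,S\>$, which factors into a nonzero reduced matrix element times the ``stretched'' Clebsch--Gordan coefficient $\<S,S;k,0|S,S\>$, and the latter is nonzero for each $k=0,1,\dots,2S$. An alternative, entirely self-contained route is to observe that the diagonal tensors $\{T_{k0}\}_{k=0}^{2S}$ form a basis of the $S_z$-diagonal operators and to compute the Hilbert--Schmidt overlaps $\tr[T_{k0}\,|S\>\<S|]$ directly; I would use whichever is shorter, favoring the Clebsch--Gordan route since the required nonvanishing is standard.
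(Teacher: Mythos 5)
Your proof is correct, and it splits into two parts of different character relative to the paper. The spectral cluster $(1)\Leftrightarrow(2)\Leftrightarrow(3)$ is handled essentially as in the paper: positivity of $\Omega_S(\mu)$ (which you justify, correctly, by noting each $R_{\vec{r}}$ is a projector --- this is the ingredient that lets you equate $\|\Omega_S(\mu)\|$ with the largest eigenvalue) plus the trace identity $\tr[\Omega_S(\mu)]=2S-1$ on a $(2S+1)$-dimensional space, and the observation that a Hermitian operator whose top eigenvalue equals its mean eigenvalue is a multiple of the identity. Your route to $(3)\Leftrightarrow(4)$, however, is genuinely different from the paper's. The paper never expands in irreducible tensor operators: it instead computes the purity, using the overlap formula $|{}_{\vec{r}}\langle S|S\rangle_{\vec{s}}|^2=[(1+\vec{r}\cdot\vec{s})/2]^{2S}$ to obtain $\tr[\Omega_S(\mu)^2]=2S-3+2^{2-2S}\sum_{j=0}^{\lfloor S\rfloor}\binom{2S}{2j}f_{2j}(\mu)$ in terms of frame potentials, and then invokes the bound $f_{2j}(\mu)\geq 1/(2j+1)$ (saturated exactly for designs) to prove a standalone lemma, $\tr[\Omega_S(\mu)^2]\geq(2S-1)^2/(2S+1)$ with equality iff $\mu_\sym$ is a spherical $2S$-design, which together with the trace identity closes the loop $2\Leftrightarrow 4$. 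Your multipole expansion $|S\rangle_{\vec{r}}\langle S|=\sum_{k=0}^{2S}\alpha_k\sum_q Y_{kq}^*(\vec{r})T_{kq}$ with the Wigner--Eckart argument that $\alpha_k\propto\langle S,S;k,0|S,S\rangle\neq 0$ for all $0\leq k\leq 2S$ (a standard fact about stretched Clebsch--Gordan coefficients, and indeed the crux: without it the design order could not be pinned to $t=2S$) identifies precisely which harmonic moments of $\mu_\sym$ control the deviation of $O_S(\mu)$ from the identity, with the odd moments killed by center symmetry exactly as you say. What each approach buys: the paper's scalar computation is elementary and self-contained given the overlap formula, needs only the frame-potential inequality rather than the full spherical-harmonic characterization of designs, and produces the quantitative purity bound as a byproduct of independent use; your argument is more structural, makes the appearance of the threshold $t=2S$ conceptually transparent as the multipole cutoff of the coherent-state projector, and transfers immediately to other rotation-covariant families of test operators.
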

According to the  discussion before \thref{thm:OmegaSdesign}, when $S$ is a positive integer, $\mu_\sym$ forms a  $2S$-design iff it forms a  $(2S+1)$-design; when $S$ is a positive half integer, $\mu_\sym$ forms a  $2S$-design iff it forms a  $2\lfloor S\rfloor$-design.

\subsection{\label{sec:BondVC}Concrete verification protocols}

\begin{table*}
	\caption{\label{tab:BondSpectralGap}
		Spectral gaps $\nu_S(\mu)$ for $1\leq S\leq 4$  of bond verification protocols based on platonic solids, $\mu_{24}$, $\mu_{32}$, and the isotropic distribution.  Vertex number,  distinct test number, and design strength of each distribution are also shown for completeness.
	}
	\renewcommand\arraystretch{1.2}	
	\begin{math}
	\begin{array}{c|ccccccccccc}
	\hline\hline 
	\mbox{Protocol} &\Omega & \mbox{Vertex number}  & \mbox{Test number} & \mbox{design strength}  &\nu_1  &\nu_{3/2} &\nu_{2} & \nu_{5/2} &\nu_3 &\nu_{7/2}&\nu_4 \\[0.5ex]
	\hline
	\mbox{Tetrahedron}&\Omega_\rmt	&4&4& 2 &\frac{2}{3} &\frac{1}{2} & \frac{1}{3}&\frac{5}{18} &\frac{5}{27}& \frac{5}{54} &0\\[0.8ex]
	\mbox{Octahedron}&\Omega_\rmo	&6&3& 3 &\frac{2}{3} &\frac{1}{2}&\frac{1}{3}&\frac{1}{6}& 0&0&0\\[0.8ex]
	\mbox{Cube}   &\Omega_\rmc   	&8&4&3 & \frac{2}{3} &\frac{1}{2}&\frac{1}{3}&\frac{5}{18}&\frac{5}{27}&\frac{5}{54}&0\\[0.8ex]
	\mbox{Icosahedron}&\Omega_\rmi	&12&6&5 &\frac{2}{3} &\frac{1}{2}&\frac{2}{5}&\frac{1}{3}&\frac{4}{15}&\frac{7}{30}&\frac{14}{75}\\[0.8ex]
	\mbox{Dodecahedron}&\Omega_\rmd	&20&10&5 &\frac{2}{3} &\frac{1}{2}&\frac{2}{5}&\frac{1}{3}&\frac{5}{18}&\frac{2}{9}&\frac{16}{81}\\[0.8ex]
	\mu_{24}&\Omega(\mu_{24})	&24&24&7 &\frac{2}{3} &\frac{1}{2}&\frac{2}{5}&\frac{1}{3}&\frac{2}{7}&\frac{1}{4}&\frac{23}{105}\\[0.8ex]
	\mu_{32} & \Omega(\mu_{32})	&32&16&9 &\frac{2}{3} &\frac{1}{2}&\frac{2}{5}&\frac{1}{3}&\frac{2}{7}&\frac{1}{4} & \frac{2}{9}\\[0.8ex]
	\mbox{Isotropic} &\Omega_{\mathrm{iso}}	&\infty&\infty&\infty &\frac{2}{3} &\frac{1}{2}&\frac{2}{5}&\frac{1}{3}&\frac{2}{7}&\frac{1}{4} & \frac{2}{9}\\[0.5ex]
	\hline\hline
	\end{array}	
	\end{math}	
\end{table*}

In this section we construct a number of concrete bond verification protocols based on discrete distributions on the unit sphere, which are  appealing to practical applications.

First, we consider bond verification protocols based on platonic solids. 
Each platonic solid inscribed in the unit sphere determines a probability distribution on the unit sphere (by convention all vertices have the same weight), which in turn determines a bond verification protocol for any given pair of spins. In this way we can construct five bond verification protocols by virtue of the five platonic solids. To be concrete, the vertices of the regular tetrahedron are chosen to be:
\begin{equation}
\begin{aligned}
&\frac{1}{\sqrt{3}}\left(1,1,1\right), & \frac{1}{\sqrt{3}}\left(1,-1,-1\right),\\
 & \frac{1}{\sqrt{3}}\left(-1,1,-1\right), &\frac{1}{\sqrt{3}}\left(-1,-1,1\right).\\
\end{aligned}
\end{equation}
The vertices of the octahedron are chosen to be:
\begin{equation}
\begin{array}{ccc}
\left(\pm 1,0,0\right), & \left(0,\pm 1,0\right), & \left(0,0,\pm 1\right).\\
\end{array}
\end{equation}
The vertices of the cube are chosen to be:
\begin{equation}
\frac{1}{\sqrt{3}}\left(\pm 1,\pm 1,\pm 1\right).
\end{equation}
The vertices of the icosahedron are chosen to be:
\begin{equation}\label{eq:ico}
\begin{aligned}
&\frac{1}{\sqrt{1+b^2}}(\pm 1,\pm b,0), \quad  \frac{1}{\sqrt{1+b^2}}(\pm b,0,\pm 1),\\ &\frac{1}{\sqrt{1+b^2}}(0,\pm 1,\pm b),
\end{aligned}
\end{equation}
where $b=(1+\sqrt{5})/2$.
The vertices of the dodecahedron are chosen to be:
\begin{equation}\label{eq:de}
\begin{aligned}
&\frac{1}{\sqrt{3}}\biggl(\pm b,\pm\frac{1}{b},0\biggr), & \frac{1}{\sqrt{3}}\biggl(\pm\frac{1}{b},0,\pm b\biggr), \\
&\frac{1}{\sqrt{3}}\biggl(0,\pm b,\pm\frac{1}{b}\biggr), & \frac{1}{\sqrt{3}}(\pm 1,\pm 1,\pm 1).
\end{aligned}
\end{equation}

For the convenience of the following discussions, the verification operators associated with the regular tetrahedron,  octahedron, cube,   icosahedron, and dodecahedron, are denoted by  $\Omega_{\rmt}$,  $\Omega_{\rmo}$, $\Omega_{\rmc}$,  $\Omega_{\rmi}$, and $\Omega_{\rmd}$, respectively. Although these verification operators may depend on the specific choices of vertices, their spectral gaps as shown in \tref{tab:BondSpectralGap} are independent of the specific choices. Except for the regular tetrahedron, every platonic solid is center symmetric, and the two tests based on each pair of antipodal vertices are equivalent; so the total number of distinct tests is equal to one half of the vertex number. It is known that  the regular tetrahedron, octahedron,  cube, icosahedron, and dodecahedron form spherical $t$-designs with $t=2, 3, 3, 5, 5$, respectively. Therefore, the icosahedron and dodecahedron protocols are optimal when $S=S_1+S_2\leq 5/2$.

To construct optimal bond verification protocols for $S\geq 3$, we need to go beyond platonic solids and consider spherical designs with higher strengths. 
For example, a spherical $7$-design can be constructed from an orbit of the rotational symmetry group  of the standard cube (which has order 24): one fiducial vector has the form $(u_1, u_2, u_3)$, where
\begin{equation}
u_j=\sqrt{\frac{1}{3}\biggl(1+2\sqrt{\frac{2}{5}}\cos\frac{\theta+2j\pi}{3}\biggr)},\quad  \theta=\arctan\frac{3\sqrt{10}}{20}
\end{equation}
for $j=1,2,3$. This orbit has 24 vectors, which can be expressed as 
\begin{align}
\{(a_1 u_{\sigma(1)}, a_2 u_{\sigma(2)},a_1 a_2\sgn(\sigma)u_{\sigma(3)})| a_1, a_2=\pm 1, \sigma\in \scrS_3 \}. 
\end{align}
Here $\scrS_3$ denotes the symmetric group of three letters; $\sgn(\sigma)$ is equal to 1 for even permutations and equal to $-1$ for odd permutations. All points corresponding to these vectors have the same weight as before; the resulting distribution on the unit sphere is denoted by $\mu_{24}$ henceforth. Note that  $\mu_{24}$  is not center symmetric.

A spherical 9-design can be constructed from the union of the vertices of the icosahedron in \eref{eq:ico} and that of the dodecahedron in  \eref{eq:de}, which form pentakis dodecahedron (cf. \rcite{HughW20}). The icosahedron has weight $5/14$ in total and each vertex has weight $5/168$, while the dodecahedron has weight $9/14$ in total and each vertex has weight $9/280$. The resulting distribution on the unit sphere is denoted by $\mu_{32}$, which is center symmetric by construction. The spectral gaps $\nu_S(\mu)$ for $1\leq S\leq 4$  of bond verification operators based on $\mu_{24}$ and $\mu_{32}$ are also shown in \tref{tab:BondSpectralGap}.

\section{\label{sec:ApproachGen}Verification of AKLT states: General approach}

\subsection{Construction of verification protocols}

\begin{figure*}
	\includegraphics[width=0.8\textwidth]{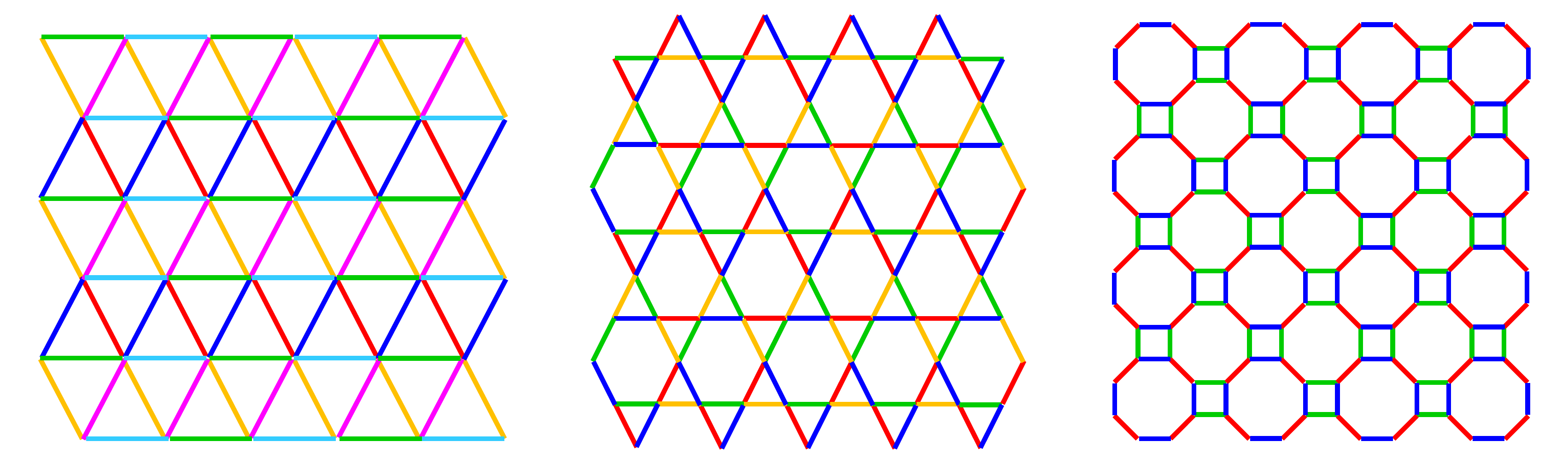}	
	\caption{\label{fig:LatticeColoring}Optimal edge colorings of several common 2D lattices:  triangular lattice, kagome lattice, and square-octagon lattice. These optimal colorings can be used to construct efficient protocols for verifying AKLT states on these lattices, which require constant sample costs that are independent of the lattice size. }	
\end{figure*}

Consider the AKLT Hamiltonian $H_G$ and AKLT state $|\Psi_G\>$ associated with a given graph $G=(V,E)$ of $n$ vertices [cf. \esref{eq:H_G}{eq:Psi_G}].
To verify $|\Psi_G\>$, we need to verify each bond associated with each edge of the graph $G$. More specifically, we need to verify the null space of the projector $P_e$ for each $e\in E$. Here we shall focus on bond verification protocols based on spin measurements, which are determined by probability distributions on the unit sphere as discussed in \sref{sec:BondVerify}. For simplicity we also assume that the same distribution is chosen for each bond, although this is not compulsory.

Let $\mu$ be  a probability distribution  on the unit sphere. According to \sref{sec:BondVerify} we can construct a bond verification protocol for each edge of the graph $G$. The bond verification operator associated with the edge $e\in E$ is denoted by $\Omega_e(\mu)$ [cf.~\eref{eq:BondVO}]. 
Given a matching $M$ of $G$, then a test for $|\Psi_G\>$ can be constructed by performing the bond verification protocols for all the bonds associated with edges in $M$ independently. 
The corresponding  test operator is given by
\begin{align}
T_M(\mu)=\prod_{e\in M} \Omega_e(\mu). \label{eq:TestMatch}
\end{align}
Note that all the bond verification operators $\Omega_e(\mu)$ for $e\in M$ commute with each other, so the order in the above product  is irrelevant. Suppose $\nu(\Omega_e(\mu))>0$ for each $e\in M$, then a quantum state $|\Phi\>$ satisfies   the condition $T_M|\Phi\>=|\Phi\>$ iff $P_e|\Phi\>=0$ for each $e\in M$. Therefore, a  state can pass the test $T_M(\mu)$  with certainty iff it belongs to the null space of the  projector $P_e$ for each $e\in M$. Let $M'$ be another matching of $G$, then we can deduce  from \eref{eq:TestMatch} the following relation:
\begin{align}
T_M(\mu)\geq T_{M'}(\mu)\quad \mbox{  if  }\quad M\subseteq M' \label{eq:TestMatchOrder}.
\end{align}

Let $\scrM=\{M_1, M_2, \ldots, M_m\}$ be a matching cover of $G$ that  consists of $m$ matchings and let $p=(p_1,p_2,\ldots, p_m)$ be a probability distribution on $\scrM$  (we shall assume that the  distribution is uniform when $p$ is not mentioned explicitly). Then a verification protocol for $|\Psi_G\>$ can be constructed by performing each test $T_{M_l}(\mu)$ with probability $p_l$. The resulting verification protocol is specified by the triple $(\mu, \scrM,p)$ (here $p$ can be omitted for the uniform distribution), and the corresponding verification operator reads
\begin{align}
\Omega(\mu, \scrM,p)=\sum_{l=1}^m p_l T_{M_l}(\mu). 
\end{align} 
Suppose $p_l>0$ for $l=1,2,\ldots, m$ and $\nu(\Omega_e(\mu))>0$ for each $e\in \cup_l M_l =E$;  then a quantum state $|\Phi\>$ can pass all the tests with certainty iff $P_e|\Phi\>=0$ for each $e\in E$. So  only the target state $|\Psi_G\>$ can pass all the tests with certainty, which means this verification protocol is effective. According to \eref{eq:TestMatchOrder}, the spectral gap of $\Omega(\mu, \scrM,p)$ does not decrease if $M_l$ is replaced by another matching $M_l'$ that contains $M_l$. To maximize the spectral gap, therefore, it is advisable to choose matching covers composed of
maximal matchings.

If the matchings in $\scrM=\{M_1, M_2, \ldots, M_m\}$ form one orbit under the symmetry group of the graph $G$, then the spectral gap of the verification operator $\Omega(\mu, \scrM,p)$ is maximized when the probability distribution $p$ is uniform. In general, given the distribution $\mu$ and the matching cover $\scrM$, the maximum spectral gap of $\Omega(\mu, \scrM,p)$ can be determined by semidefinite programming (SDP). Define
\begin{align}
\bar{T}_{M_l}(\mu)&:=T_{M_l}(\mu)-|\Psi_G\>\<\Psi_G|,\\
\bar{\Omega}(\mu, \scrM,p)&:=\Omega(\mu, \scrM,p)-|\Psi_G\>\<\Psi_G|=
\sum_{l=1}^m p_l \bar{T}_{M_l}(\mu);
\end{align}
then 
\begin{align}
\nu(\Omega(\mu, \scrM,p))=1-\|\bar{\Omega}(\mu, \scrM,p)\|.
\end{align}
To maximize the spectral gap of $\Omega(\mu, \scrM,p)$,  it is equivalent to minimize the operator norm of $\bar{\Omega}(\mu, \scrM,p)$, which can be realized by the following SDP:
\begin{equation}\label{eq:normSDP}
\begin{aligned}
&\mbox{minimize}\quad \; h\\
&\mbox{subject to}\quad  h\geq \sum_{l=1}^m p_l \bar{T}_{M_l}(\mu),\\
&\quad \quad \quad \quad \quad \;\;  p_l\geq 0,\quad  \sum_{l=1}^m p_l =1.
\end{aligned}
\end{equation}
To construct an optimal matching protocol, in principle we need to consider all maximal matchings before the optimization, which is feasible only for small systems. 
When this approach is too prohibitive, we can consider simple matching protocols and resort to analytical bounds derived in the next subsection.

\subsection{Sample complexity}
Before presenting our main results on the sample complexity, we need to introduce some terminology.
Suppose $P_e$ and $P_{e'}$ are two projectors associated with two edges of $G=(V,E)$ as defined in \eref{eq:Pe}. Denote by $s(P_e P_{e'})$ the largest singular value of $P_e P_{e'}$ that is not equal to~1. By definition $s(P_e P_{e'})=0$ if $e=e'$ or if $e$ and $e'$ are disjoint. If $e=\{1,2\}$ and $e'=\{2,3\}$, then 
\begin{align}
P_e&=P_{12}:=P_{S_1+S_2}(\vec{S}_1+\vec{S}_2),\\ P_{e'}&=P_{23}:=P_{S_2+S_3}(\vec{S}_2+\vec{S}_3). 
\end{align}
So the value of $s(P_e P_{e'})=s(P_{12}P_{23})$ is determined by the spin values $S_1, S_2, S_3$; in addition, this value is invariant if $S_1$ and $S_3$ are exchanged. The specific value of $s^2(P_{12}P_{23})$ for $S_1, S_2, S_3\leq 5/2$ can be found in \tref{tab:s2}, which suggests the following conjecture.
\begin{conjecture}
Suppose $S_1, S_2, S_3$ are positive integers or half integers; then $s^2(P_{12}P_{23})$ is a rational number. If in addition $S_1, S_3\leq S_2$, then 
\begin{align}
s^2(P_{12}P_{23})\leq 1/4,
\end{align}
and the inequality is saturated iff $S_1=S_2=S_3$.
\end{conjecture}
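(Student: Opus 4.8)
The plan is to diagonalize $P_{12}P_{23}$ by decomposing $\caH_1\otimes\caH_2\otimes\caH_3$ into sectors of fixed total angular momentum $\vec{J}=\vec{S}_1+\vec{S}_2+\vec{S}_3$. Write $S_{12}=S_1+S_2$ and $S_{23}=S_2+S_3$. Since $P_{12}$ and $P_{23}$ are rotationally invariant, they block-diagonalize by total spin $J$ (and magnetic number $M$). In each sector with $|S_{12}-S_3|\le J\le S_{12}+S_3$ there is exactly one state (per $M$) in which spins $1,2$ are maximally coupled; call it \ket{u_J}, so that $P_{12}$ restricts to the rank-one projector \proj{u_J}. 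Likewise $P_{23}$ restricts to \proj{v_J}, with spins $2,3$ maximally coupled. Hence in each sector $P_{12}P_{23}$ is rank one with singular value $|R_J|$, where $R_J=\inner{u_J}{v_J}$ is a recoupling coefficient. First I would observe that the singular value $1$ occurs \emph{only} at the top sector $J=J_{\max}:=S_1+S_2+S_3$: a vector lies in the intersection of the ranges of $P_{12}$ and $P_{23}$ iff spins $1,2$ and spins $2,3$ are simultaneously maximally aligned, which forces all three spins to be maximally aligned, i.e.\ the vector lies in the spin-$J_{\max}$ (fully symmetric) subspace. Therefore $s^2(P_{12}P_{23})=\max_{J<J_{\max}}R_J^2$, and everything reduces to analyzing these coefficients.

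For the rationality claim I would identify $R_J$ with a Wigner $6j$ symbol,
\[ R_J=(-1)^{S_1+S_2+S_3+J}\sqrt{(2S_{12}+1)(2S_{23}+1)}\begin{Bmatrix}S_1 & S_2 & S_{12}\\ S_3 & J & S_{23}\end{Bmatrix}. \]
Racah's closed formula expresses any $6j$ symbol as a product of four triangle coefficients $\Delta(\cdot)$---each the square root of a rational number---times a rational alternating sum. Consequently the \emph{square} of every $6j$ symbol is rational, and since the prefactor $(2S_{12}+1)(2S_{23}+1)$ is a positive integer, each $R_J^2$ is rational. As $s^2(P_{12}P_{23})$ is the maximum of finitely many such numbers, it is rational, which settles the first part of the conjecture.

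For the bound I would pass to the symmetric-subspace realization, writing spin $S_j$ as the fully symmetric subspace of $2S_j$ qubits (blocks $A,B,C$ of sizes $n_A=2S_1$, $n_B=2S_2$, $n_C=2S_3$); then $P_{12}$ and $P_{23}$ become the symmetrizers over $A\cup B$ and over $B\cup C$. Writing $k=J_{\max}-J$ for the number of flipped qubits, the sector $k=1$ is three-dimensional and elementary to diagonalize, and a short computation gives the clean value
\[ R_{k=1}^2=\frac{n_A n_C}{(n_A+n_B)(n_B+n_C)}=\frac{S_1 S_3}{(S_1+S_2)(S_2+S_3)}. \]
Assuming $S_1,S_3\le S_2$, two applications of the arithmetic--geometric mean inequality give $(S_1+S_2)(S_2+S_3)\ge 4S_2\sqrt{S_1 S_3}\ge 4S_1 S_3$, where the last step uses $\sqrt{S_1S_3}\le S_2$; hence $R_{k=1}^2\le 1/4$, with equality iff both inequalities are tight, which forces $S_1=S_2=S_3$. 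This already reproduces the conjectured extremal value $1/4$ and the claimed equality condition.

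The hard part is to show that no lower sector beats $k=1$, i.e.\ that $R_k^2\le R_{k=1}^2$ (equivalently $\le 1/4$) for all $k\ge 2$, so that the second-largest singular value genuinely sits in the single-excitation sector. Explicit small cases support this strongly---for $S_1=S_2=S_3=1$ one finds $R_{k=1}^2=1/4$ but $R_{k=2}^2=1/36$---and the data suggest that $R_k^2$ decays rapidly, indeed monotonically, in $k$. I expect the decisive step to be either (i) deriving a closed hypergeometric form for the doubly stretched $6j$ symbol above as a function of $k$ and proving it is monotone (or log-concave) in $k$, or (ii) establishing the operator inequality $P_{12}P_{23}P_{12}\le \tfrac14 P_{12}+\tfrac34\,\Pi_0$ on the range of $P_{12}$, with $\Pi_0$ the projector onto the fully symmetric subspace. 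Controlling these higher sectors uniformly in $k$ and in the three spin values is the main obstacle, and is presumably why the statement is only conjectured.
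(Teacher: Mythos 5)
There is no proof in the paper to compare against: this statement is presented as Conjecture~1, supported only by the numerical values of $s^2(P_{12}P_{23})$ collected in Table~\ref{tab:s2}. Judged on its own, your proposal is a substantive partial result rather than a complete proof, exactly as you flag in your last paragraph. The parts you do carry out are correct and actually go beyond the paper. The block decomposition by total spin $J$ is valid: in each $J$ sector both $P_{12}$ and $P_{23}$ restrict to rank-one projectors (or to zero), so the nonunit singular values of $P_{12}P_{23}$ are the recoupling overlaps $|R_J|$, and your identification of $R_J$ with the stretched $6j$ symbol, prefactor $\sqrt{(2S_{12}+1)(2S_{23}+1)}$ included, is the standard one. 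Your claim that the singular value $1$ occurs only in the top sector is sound and is cleanest in your own qubit picture: a vector invariant under the symmetrizers on $A\cup B$ and on $B\cup C$, with $B\neq\emptyset$, is invariant under the group these generate, namely the full symmetric group on $A\cup B\cup C$, hence lies in the spin-$J_{\max}$ subspace. In particular, the rationality half of the conjecture is \emph{fully proved} by your Racah-formula argument: each triangle coefficient is the square root of a rational, so every $6j$ symbol squared is rational, every $R_J^2$ is rational, and $s^2$ is a maximum of finitely many rationals. Your single-excitation value $R_{k=1}^2=S_1S_3/[(S_1+S_2)(S_2+S_3)]$ is also correct (the three-dimensional weight-space computation checks out), it reproduces every entry of Table~\ref{tab:s2}, and your AM--GM analysis of this sector, including the equality condition $S_1=S_2=S_3$ under $S_1,S_3\leq S_2$, is right.

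The genuine gap is the one you name yourself: nothing in the argument controls the sectors $k\geq 2$. Without the lemma that $R_k^2\leq 1/4$ for all $k\geq 2$ whenever $S_1,S_3\leq S_2$ (equivalently, that the doubly stretched $6j$ coefficients are dominated by the $k=1$ value), neither the bound $s^2\leq 1/4$ nor \emph{either} direction of the saturation claim follows --- even the ``if'' direction requires knowing that for $S_1=S_2=S_3$ no deep sector exceeds $1/4$. Your two proposed routes (a closed hypergeometric form for $R_k^2$ with monotonicity or log-concavity in $k$, or the operator inequality $P_{12}P_{23}P_{12}\leq \tfrac{1}{4}P_{12}+\tfrac{3}{4}\Pi_{\mathrm{sym}}$ with $\Pi_{\mathrm{sym}}$ the projector onto the fully symmetric subspace) are both plausible, but neither is executed, so the second half of the conjecture remains open in your write-up precisely where it remains open in the paper. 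Net assessment: rationality settled; inequality and equality characterization correctly reduced to, but not resolved by, a concrete domination statement for a one-parameter family of $6j$ symbols.
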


\begin{table}[t]
	\caption{\label{tab:s2}Value of $s^2(P_{12}P_{23})$ with $P_{12}=P_{S_1+S_2}(\vec{S}_1+\vec{S}_2)$ and $P_{23}=P_{S_2+S_3}(\vec{S}_2+\vec{S}_3)$, where $s(P_{12}P_{23})$ is the largest singular value of $P_{12}P_{23}$ that is not equal to 1, and $s^2(P_{12}P_{23})$ is the largest eigenvalue of $P_{12}P_{23}P_{12}$ that is not equal to 1.}
	\renewcommand\arraystretch{1.2}
	\begin{ruledtabular}
		\begin{tabular}{l|cccccc}
			\diagbox[width=8em,height=5.6ex,trim=l]{$(S_1,S_3)$}{$S_2$} & $\frac{1}{2}$ & $1$ & $\frac{3}{2}$ & $2$ & $\frac{5}{2}$ & $3$\\
			\hline
			\hfill\\[-2.5ex]
			\hspace{0.5em}$(\frac{1}{2},\:\frac{1}{2})$ & $\frac{1}{4}$ & $\frac{1}{9}$ & $\frac{1}{16}$ & $\frac{1}{25}$ & $\frac{1}{36}$ & $\frac{1}{49}$\\[0.8ex]
			\hspace{0.5em}$(\frac{1}{2},\:1)$ & $\frac{1}{3}$ & $\frac{1}{6}$ & $\frac{1}{10}$ & $\frac{1}{15}$ & $\frac{1}{21}$ & $\frac{1}{28}$\\[0.8ex]
			\hspace{0.5em}$(\frac{1}{2},\:\frac{3}{2})$ & $\frac{3}{8}$ & $\frac{1}{5}$ & $\frac{1}{8}$ & $\frac{3}{35}$ & $\frac{1}{16}$ & $\frac{1}{21}$\\[0.8ex]
			\hspace{0.5em}$(\frac{1}{2},\:2)$ & $\frac{2}{5}$ & $\frac{2}{9}$ & $\frac{1}{7}$ & $\frac{1}{10}$ & $\frac{2}{27}$ & $\frac{2}{35}$\\[0.8ex]
			\hspace{0.5em}$(\frac{1}{2},\:\frac{5}{2})$ & $\frac{5}{12}$ & $\frac{5}{21}$ & $\frac{5}{32}$ & $\frac{1}{9}$ & $\frac{1}{12}$ & $\frac{5}{77}$\\[0.8ex]
			
			\hspace{0.5em}$(\frac{1}{2},\:3)$ & $\frac{3}{7}$ & $\frac{1}{4}$ & $\frac{1}{6}$ & $\frac{3}{25}$ & $\frac{1}{11}$ & $\frac{1}{14}$\\[0.8ex]

			\hspace{0.5em}$(1,\:1)$ & $\frac{4}{9}$ & $\frac{1}{4}$ & $\frac{4}{25}$ & $\frac{1}{9}$ & $\frac{4}{49}$ & $\frac{1}{16}$\\[0.8ex]
			\hspace{0.5em}$(1,\:\frac{3}{2})$ & $\frac{1}{2}$ & $\frac{3}{10}$ & $\frac{1}{5}$ & $\frac{1}{7}$ & $\frac{3}{28}$ & $\frac{1}{12}$\\[0.8ex]
			\hspace{0.5em}$(1,\:2)$ & $\frac{8}{15}$ & $\frac{1}{3}$ & $\frac{8}{35}$ & $\frac{1}{6}$ & $\frac{8}{63}$ & $\frac{1}{10}$\\[0.8ex]
			\hspace{0.5em}$(1,\:\frac{5}{2})$ & $\frac{5}{9}$ & $\frac{5}{14}$ & $\frac{1}{4}$ & $\frac{5}{27}$ & $\frac{1}{7}$ & $\frac{5}{44}$\\[0.8ex]
			
			\hspace{0.5em}$(1,\:3)$ & $\frac{4}{7}$ & $\frac{3}{8}$ & $\frac{4}{15}$ & $\frac{1}{5}$ & $\frac{12}{77}$ & $\frac{1}{8}$\\[0.8ex]

			\hspace{0.5em}$(\frac{3}{2},\:\frac{3}{2})$ & $\frac{9}{16}$ & $\frac{9}{25}$ & $\frac{1}{4}$ & $\frac{9}{49}$ & $\frac{9}{64}$ & $\frac{1}{9}$\\[0.8ex]
			\hspace{0.5em}$(\frac{3}{2},\:2)$ & $\frac{3}{5}$ & $\frac{2}{5}$ & $\frac{2}{7}$ & $\frac{3}{14}$ & $\frac{1}{6}$ & $\frac{2}{15}$\\[0.8ex]
			\hspace{0.5em}$(\frac{3}{2},\:\frac{5}{2})$ & $\frac{5}{8}$ & $\frac{3}{7}$ & $\frac{5}{16}$ & $\frac{5}{21}$ & $\frac{3}{16}$ & $\frac{5}{33}$\\[0.8ex]
			
			\hspace{0.5em}$(\frac{3}{2},\:3)$ & $\frac{9}{14}$ & $\frac{9}{20}$ & $\frac{1}{3}$ & $\frac{9}{35}$ & $\frac{9}{44}$ & $\frac{1}{6}$\\[0.8ex]

			\hspace{0.5em}$(2,\:2)$ & $\frac{16}{25}$ & $\frac{4}{9}$ & $\frac{16}{49}$ & $\frac{1}{4}$ & $\frac{16}{81}$ & $\frac{4}{25}$\\[0.8ex]
			\hspace{0.5em}$(2,\:\frac{5}{2})$ & $\frac{2}{3}$ & $\frac{10}{21}$ & $\frac{5}{14}$ & $\frac{5}{18}$ & $\frac{2}{9}$ & $\frac{2}{11}$\\[0.8ex]
			
			\hspace{0.5em}$(2,\:3)$ & $\frac{24}{35}$ & $\frac{1}{2}$ & $\frac{8}{21}$ & $\frac{3}{10}$ & $\frac{8}{33}$ & $\frac{1}{5}$\\[0.8ex]

			\hspace{0.5em}$(\frac{5}{2},\:\frac{5}{2})$ & $\frac{25}{36}$ & $\frac{25}{49}$ & $\frac{25}{64}$ & $\frac{25}{81}$ & $\frac{1}{4}$ & $\frac{25}{121}$\\[0.8ex]
			
			\hspace{0.5em}$(\frac{5}{2},\:3)$ & $\frac{5}{7}$ & $\frac{15}{28}$ & $\frac{5}{12}$ & $\frac{1}{3}$ & $\frac{3}{11}$ & $\frac{5}{22}$\\[0.8ex]
			\hspace{0.5em}$(3,\:3)$ & $\frac{36}{49}$ & $\frac{9}{16}$ & $\frac{4}{9}$ & $\frac{9}{25}$ & $\frac{36}{121}$ & $\frac{1}{4}$\\[0.5ex]
			
		\end{tabular}
	\end{ruledtabular}
\end{table}

Define
\begin{align}\label{eq:sG}
s(G):=\max_{e,e'\in E}s(P_e P_{e'})=\max_{e,e'\in E|e\neq e'}s(P_e P_{e'}).
\end{align}
By definition $0\leq s(G)< 1$. 
Here $s(G)$ can be abbreviated as $s$ if there is no danger of confusion. According to \tref{tab:s2}, we have $s(G)=1/2$ for most lattices of practical interest, including the open chain (with at least four nodes), closed chain, square lattice, honeycomb lattice,  triangular lattice, kagome lattice, and square-octagon lattice (cf. \fref{fig:LatticeColoring}). For the open chain with three nodes, we have $s(G)=1/3$.

In the following theorem,  $\gamma=\gamma(H_G)$ is the spectral gap of $H_G$, while $S_E$ and $\nu_{S_E}(\mu)$  are defined in \esref{eq:S_E}{eq:nuSmu}, respectively.

\begin{thm}\label{thm:AKLTnu1}
Let $|\Psi_G\>$ be the AKLT state defined on the graph $G=(V,E)$. 	Suppose $\Omega(\mu, \scrM)$ is the verification operator specified by the probability distribution $\mu$ and the matching cover $\scrM$ composed of $m$ matchings. Then 
\begin{align}
\nu(\Omega(\mu, \scrM)) \geq \frac{\nu_{S_E}(\mu)}{m}f\Bigl(\frac{\gamma}{s^2g^2}\Bigr)\geq \frac{\nu_{S_E}(\mu)\gamma}{24m(S_E-1)^2}, \label{eq:nuLB1}
\end{align}
where  $s=s(G)$, $g=2S_E-2$,  and 
\begin{align}\label{eq:fx}
f(x)=\begin{cases}
\frac{\sqrt{1+x}-1}{\sqrt{1+x}} &m=2,\\[1ex]
\frac{\sqrt{1+x}-1}{\sqrt{1+x}+1} &m\geq 3.
\end{cases}
\end{align} 
If $\mu_\sym$ forms a spherical $t$-design with $t=2S_E$, then 
\begin{align}
\nu(\Omega(\mu, \scrM))&\geq \frac{2}{m(2S_E+1)}f\Bigl(\frac{\gamma}{s^2g^2}\Bigr)\nonumber\\
&\geq \frac{\gamma}{12m(2S_E+1)(S_E-1)^2}. \label{eq:nuLB2}
\end{align}
\end{thm}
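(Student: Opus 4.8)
The plan is to peel the problem apart in two stages: first strip off the bond-verification layer to reduce $\Omega(\mu,\scrM)$ to a coarse-grained frustration-free Hamiltonian built directly from the matchings, and then invoke a detectability-lemma–type estimate to convert the gap of that coarse-grained operator into the claimed function of the Hamiltonian gap $\gamma$. Concretely, I would take the distribution $p$ on $\scrM$ to be uniform, so that $\Omega(\mu,\scrM)=\tfrac1m\sum_{l=1}^m T_{M_l}(\mu)$. For each edge $e=\{j,k\}$ one has $S_e=S_j+S_k\le S_E$, so \lref{lem:SpectralGapBond} together with the monotonicity in \lref{lem:nuSmu} gives $\nu(\Omega_e(\mu))=\nu_{S_e}(\mu)\ge\nu_{S_E}(\mu)$. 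Since $\Omega_e(\mu)$ acts as the identity on the null space of $P_e$, it can be written as $\Omega_e(\mu)=(1-P_e)+P_e\Omega_e(\mu)P_e$, which yields the operator inequality $0\le\Omega_e(\mu)\le 1-\nu_{S_E}(\mu)P_e$. Because the projectors $\{P_e\}_{e\in M_l}$ act on disjoint tensor factors and hence commute, multiplying over $e\in M_l$ gives $T_{M_l}(\mu)\le\prod_{e\in M_l}\bigl(1-\nu_{S_E}(\mu)P_e\bigr)$, and a term-by-term comparison in the common eigenbasis shows $1-T_{M_l}(\mu)\ge\nu_{S_E}(\mu)(1-\Pi_l)$, where $\Pi_l=\prod_{e\in M_l}(1-P_e)$ projects onto the common null space of the matching $M_l$. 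Averaging over $l$ I obtain $1-\Omega(\mu,\scrM)\ge\tfrac{\nu_{S_E}(\mu)}{m}\tilde H$ with $\tilde H:=\sum_{l}(1-\Pi_l)$, and therefore $\nu(\Omega(\mu,\scrM))\ge\tfrac{\nu_{S_E}(\mu)}{m}\,\nu(\tilde H)$, using that $\ket{\Psi_G}$ is the unique common zero eigenvector of the $1-\Pi_l$ precisely because $\scrM$ covers $E$.

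What then remains is the purely Hamiltonian-level estimate $\nu(\tilde H)\ge f(\gamma/(s^2g^2))$ for $\tilde H=\sum_l\tilde P_l$ with projectors $\tilde P_l:=1-\Pi_l$. The crucial structural fact is that each coarse-grained term dominates every edge term it contains, $\tilde P_l\ge P_e$ for $e\in M_l$, so $\tilde H$ and $H_G=\sum_e P_e$ share the ground state $\ket{\Psi_G}$ and the non-commutativity among the $\tilde P_l$ is entirely inherited from that of the underlying $P_e$. I would control this non-commutativity through the two quantities already isolated for $H_G$: each $P_e$ fails to commute with at most $g=2S_E-2$ other edge projectors, and every such overlap has restricted singular value at most $s=s(G)$. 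A detectability-lemma / quantum-union-bound argument then turns the \emph{linear} gap $\gamma$ of $H_G$ into a lower bound on the "detection probability" of the layered measurement defined by the $\tilde P_l$. The characteristic $\sqrt{1+\gamma/(s^2g^2)}$ dependence is exactly what such an estimate produces, and solving the resulting quadratic self-consistency relation for the gap gives the two branches of $f$ in \eref{eq:fx}: the sharper $m=2$ branch from the exactly solvable two-projector (Jordan) case, and the $m\ge3$ branch from the general layered bound. Combining this with the Stage-1 reduction yields the first inequality in \eref{eq:nuLB1}.

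The remaining claims are routine consequences. The second inequality in \eref{eq:nuLB1} follows from elementary lower bounds on $f$—writing $f(x)=x/(\sqrt{1+x}+1)^2$ and bounding the denominator using $g=2(S_E-1)$ and $s(G)<1$ together with an elementary bound on $\gamma$—which collapses $f(\gamma/(s^2g^2))$ to $\gamma/(6g^2)=\gamma/[24(S_E-1)^2]$. For the $t$-design case, the hypothesis that $\mu_\sym$ is a spherical $2S_E$-design lets \thref{thm:OmegaSdesign} fix $\nu_{S_E}(\mu)=2/(2S_E+1)$ exactly; substituting this value into the two inequalities of \eref{eq:nuLB1} immediately produces the two inequalities of \eref{eq:nuLB2}.

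I expect the genuine obstacle to be the second stage. The Stage-1 reduction is a self-contained chain of operator inequalities, but the bound $\nu(\tilde H)\ge f(\gamma/(s^2g^2))$ requires the full detectability-lemma machinery applied to the coarse-grained projectors, a careful accounting of \emph{which} terms overlap and by how much (capturing the structure through $s$ and $g$ rather than the much cruder matching number), and a tight treatment of the two-projector situation to secure the better $m=2$ constant. This is precisely the part that rests on the general recipe developed in the companion paper~\cite{ZhuLC22}.
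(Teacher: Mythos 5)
Your proposal matches the paper's proof in substance: the paper establishes \thref{thm:AKLTnu1} by citing Theorem~1 of the companion paper~\cite{ZhuLC22} for the detectability-lemma-type bound with the two branches of $f$, together with \lref{lem:nuSmu} (monotonicity, giving $\nu_{S_e}(\mu)\geq\nu_{S_E}(\mu)$ since $S_e\leq S_E$) and \thref{thm:OmegaSdesign} (fixing $\nu_{S_E}(\mu)=2/(2S_E+1)$ in the design case), which is exactly your Stage-1 reduction plus your deferral of the coarse-grained gap estimate to the companion machinery. Your explicit operator-inequality chain $\Omega_e(\mu)\leq 1-\nu_{S_E}(\mu)P_e$, hence $1-\Omega(\mu,\scrM)\geq\frac{\nu_{S_E}(\mu)}{m}\sum_l(1-\Pi_l)$, is a correct unpacking of how the bond gaps enter that result, so the argument is sound and essentially identical in route to the paper's.
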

\Thref{thm:AKLTnu1} follows from Theorem~1 in the companion paper \cite{ZhuLC22} as well as  \lref{lem:nuSmu} and \thref{thm:OmegaSdesign} in \sref{sec:BondVerify}. In conjunction with \eref{eq:nu}, it is now straightforward to  derive the following upper bound on the the minimum  number of tests required to verify the AKLT state $|\Psi_G\>$ within infidelity $\epsilon$ and significance level $\delta$:
\begin{align}
N&\leq \biggl\lceil \frac{m(2S_E+1)\ln(\delta^{-1})}{2\epsilon f\bigl(\frac{\gamma}{s^2g^2}\bigr)}\biggr\rceil\nonumber\\
&\leq \biggl\lceil\frac{12m(2S_E+1)(S_E-1)^2\ln(\delta^{-1})}{\gamma\epsilon}\biggr\rceil. \label{eq:NUB}
\end{align}
When $x\ll 1$, $f(x)$ can be approximated by $x/2$ for $m=2$ and $x/4$ for $m\geq 3$. If $\gamma/(s^2g^2)\ll1$, then \eref{eq:nuLB1} implies that
\begin{align}
\nu(\Omega(\mu, \scrM)) \gtrsim\begin{cases}
\frac{\gamma\nu_{S_E}(\mu)}{2m s^2g^2} &m=2,\\[1ex]
\frac{\gamma\nu_{S_E}(\mu)}{4m s^2g^2} &m\geq 3.
\end{cases} 
\end{align}
This equation is instructive to understanding the efficiency of the matching protocol.
\Esref{eq:nuLB2}{eq:NUB} can be simplified in a similar way.

Recall that the minimum number of matchings required to cover the edge set of $G$ is equal to the chromatic index $\chi'(G)$. If $m=\chi'(G)$, then \eref{eq:nuLB1} reduces to 
\begin{align}
&\nu(\Omega(\mu, \scrM))\geq \frac{\nu_{S_E}(\mu)}{\chi'(G)}f\Bigl(\frac{\gamma}{s^2g^2}\Bigr)\geq \frac{\nu_{S_E}(\mu)\gamma}{24\chi'(G)(S_E-1)^2}\nonumber\\
&\quad \geq \frac{\nu_{S_E}(\mu)\gamma}{24[\Delta(G)+1][\Delta(G)-1]^2}\geq \frac{\nu_{S_E}(\mu)\gamma}{24\Delta(G)^3}, \label{eq:nuLB3}
\end{align}
where the third inequality follows from the facts that $\chi'(G)\leq \Delta(G)+1$ and $S_E\leq \Delta(G)$. 
Although it is not always easy to find an optimal matching cover,  a nearly optimal matching cover composed of $\chi'(G)+1$ matchings can be found efficiently.

If in addition $\mu$ forms a spherical $t$-design with $t=2S_E$, then \eref{eq:nuLB2} implies that
\begin{align}
&\nu(\Omega(\mu,\scrM)) \geq \frac{2}{\chi'(G)(2S_E+1)}f\Bigl(\frac{\gamma}{s^2g^2}\Bigr)\nonumber\\
&\geq \frac{\gamma}{12\chi'(G)(2S_E+1)(S_E-1)^2}\geq\frac{\gamma}{24\Delta(G)^4}, \label{eq:nuLB4}
\end{align}
Accordingly, \eref{eq:NUB} reduces to 
\begin{align}
 N\leq \biggl\lceil \frac{\chi'(G)(2S_E+1)\ln(\delta^{-1})}{2\epsilon f\bigl(\frac{\gamma}{s^2g^2}\bigr)}\biggr\rceil\leq  \biggl\lceil \frac{24\Delta(G)^4\ln(\delta^{-1})}{\gamma\epsilon} \biggr\rceil. \label{eq:NUB2}
\end{align}
For most AKLT states of practical interest, including those defined on various lattices as illustrated in \fref{fig:LatticeColoring} (see also Fig.~1 in the companion paper \cite{ZhuLC22}), $\Delta(G)$ does not increase with the system size. So these AKLT states can be verified with constant sample cost that is independent of  the system size as long as the spectral gap $\gamma$ has a nontrivial system-independent lower bound. Our verification protocols are much more efficient than protocols known in the literature \cite{CramPFS10,HangKSE17,TakeM18,CruzBTS22} and the sample costs have much better scaling behaviors with respect to the system size, spectral gap of the underlying Hamiltonian, and the precision as quantified by the infidelity.

The next theorem follows from Theorems 2, 3 in the companion paper \cite{ZhuLC22}  and \lref{lem:nuSmu} in \sref{sec:BondVerify}. 
\begin{thm}\label{thm:AKLTnu2}
	Suppose $\scrM$ in \thref{thm:AKLTnu1} is an edge coloring of $G$ and let $p=(|M_1|, |M_2|, \ldots, |M_m|)/|E|$; then 
	\begin{align}
	\nu(\Omega(\mu, \scrM,p))\geq \frac{\nu_{S_E}(\mu)\gamma}{|E|} \geq \frac{2\nu_{S_E}(\mu)\gamma}{n(n-1)} .  \label{eq:SpectralGapColor}
	\end{align}
If $\mu_\sym$ forms a spherical $t$-design with $t=2S_E$, then 
	\begin{align}
\nu(\Omega(\mu, \scrM,p))\geq \frac{2\gamma}{(2S_E+1)|E|} \geq \frac{4\gamma}{(2S_E+1)n(n-1)}.  \label{eq:SpectralGapColor2}
\end{align}	
The first inequality in \eref{eq:SpectralGapColor2} is saturated if $S_e$ is independent of $e\in E$ and $\scrM$ is the trivial edge coloring with $|\scrM|=|E|$. 
\end{thm}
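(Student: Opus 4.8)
The plan is to reduce the statement to the general edge-coloring theorems (Theorems 2 and 3) of the companion paper \cite{ZhuLC22} and then to feed in the spin-measurement data assembled in \sref{sec:BondVerify}. The companion-paper framework applies verbatim to the frustration-free family $\{P_e\}_{e\in E}$ with $H_G=\sum_e P_e$, spectral gap $\gamma$, and unique ground state $\ket{\Psi_G}$, provided every bond is verified by an operator $\Omega_e(\mu)$ of known gap $\nu(\Omega_e(\mu))=\nu_{S_e}(\mu)$. The crucial structural fact is that for an edge coloring the matchings $M_1,\dots,M_m$ are mutually disjoint and cover $E$, so $\sum_l|M_l|=|E|$ and the weights $p_l=|M_l|/|E|$ form a genuine probability distribution; this is exactly the weighting for which the companion theorems yield their cleanest bounds. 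First I would check that their hypotheses hold here, the essential one being $\nu(\Omega_e(\mu))>0$ for every edge, which by \eref{eq:nuSmu} amounts to $O_{S_e}(\mu)$ having full rank.

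With the hypotheses in place, the first inequality in \eref{eq:SpectralGapColor} follows from the general coloring bound, in which every local bond gap $\nu_{S_e}(\mu)$ enters; the point is to lower bound all of them uniformly. Here the key input is \lref{lem:nuSmu}: since $S_e=S_j+S_k\le S_E$ for each edge $e=\{j,k\}$ and $\nu_S(\mu)$ is nonincreasing in $S$, one has $\nu_{S_e}(\mu)\ge\nu_{S_E}(\mu)$ for every $e$, which collapses the companion bound to $\nu_{S_E}(\mu)\gamma/|E|$. The second inequality is then purely combinatorial: a simple graph on $n$ vertices has $|E|\le\binom{n}{2}=n(n-1)/2$, so $\nu_{S_E}(\mu)\gamma/|E|\ge 2\nu_{S_E}(\mu)\gamma/[n(n-1)]$.

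For \eref{eq:SpectralGapColor2} I would use the design hypothesis to make every bond operator homogeneous simultaneously. If $\mu_\sym$ is a $t$-design with $t=2S_E$, then because $2S_e\le 2S_E$ it is also a $2S_e$-design for each edge, so \thref{thm:OmegaSdesign} applied at each $S_e$ shows that every $\Omega_e(\mu)$ is homogeneous with $\nu_{S_e}(\mu)=2/(2S_e+1)\ge 2/(2S_E+1)$. Substituting this into the homogeneous coloring theorem of \cite{ZhuLC22} gives the first inequality in \eref{eq:SpectralGapColor2}, and $|E|\le n(n-1)/2$ again supplies the second.

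Finally, the saturation claim is verified by a direct computation in the stated special case. When $S_e=S_E$ for all $e$ and $\scrM$ is the trivial coloring with $m=|E|$ singletons, one has $p_l=1/|E|$ and each test reduces to a single homogeneous bond operator $\Omega_e(\mu)=1-\tfrac{2}{2S_E+1}P_e$, so that
\begin{align}
\Omega(\mu,\scrM,p)=\frac{1}{|E|}\sum_{e\in E}\Omega_e(\mu)=1-\frac{2}{(2S_E+1)|E|}\,H_G.
\end{align}
Since $\Omega(\mu,\scrM,p)\ket{\Psi_G}=\ket{\Psi_G}$, while on the orthogonal complement of $\ket{\Psi_G}$ one has $H_G\ge\gamma$ with the value $\gamma$ attained and $H_G\le|E|$ keeping $\Omega$ positive, the second largest eigenvalue of $\Omega(\mu,\scrM,p)$ is exactly $1-2\gamma/[(2S_E+1)|E|]$, hence $\nu(\Omega(\mu,\scrM,p))=2\gamma/[(2S_E+1)|E|]$, saturating the first inequality. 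I expect the main obstacle to be not any heavy new calculation—the analytic machinery is already provided by \cite{ZhuLC22}—but the careful matching of the present edge-coloring data to the precise hypotheses of the companion theorems, in particular the uniform positivity of the bond gaps and the identification of the minimal bond gap with $\nu_{S_E}(\mu)$ through \lref{lem:nuSmu}.
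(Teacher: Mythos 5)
Your proposal is correct and takes essentially the same route as the paper, whose proof of \thref{thm:AKLTnu2} is precisely the reduction you carry out: invoke Theorems 2 and 3 of the companion paper \cite{ZhuLC22} for the edge-coloring bound, use \lref{lem:nuSmu} to bound every bond gap uniformly via $\nu_{S_e}(\mu)\geq\nu_{S_E}(\mu)$, and use the design hypothesis (through \thref{thm:OmegaSdesign}) to set $\nu_{S_E}(\mu)=2/(2S_E+1)$, with $|E|\leq n(n-1)/2$ giving the second inequalities. Your explicit saturation computation, $\Omega(\mu,\scrM,p)=1-\tfrac{2}{(2S_E+1)|E|}H_G$ when all $S_e=S_E$ and the coloring is trivial, so that $\nu(\Omega(\mu,\scrM,p))=2\gamma/[(2S_E+1)|E|]$, is the correct argument implicit in the paper's citation of the companion's Theorem 3.
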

By virtue of  \esref{eq:nu}{eq:SpectralGapColor2}, we can  derive another upper bound on the minimum  number of tests required to verify the AKLT state $|\Psi_G\>$ within infidelity $\epsilon$ and significance level $\delta$:
\begin{align}
N\leq \biggl\lceil \frac{(2S_E+1)|E|\ln(\delta^{-1})}{2\gamma\epsilon}\biggr\rceil. 
\end{align} 
When $G$ is a connected $k$-regular graph with $n\geq 2$ vertices, we have $|E|=nk/2$ and $S_e=S_E=\Delta(G)=k$ for all $e\in E$, so
\eref{eq:SpectralGapColor} reduces to
\begin{align}
\nu(\Omega(\mu, \scrM,p))\geq \frac{2\nu_k(\mu)\gamma}{nk}. 
\end{align}
If in addition $\mu_\sym$ forms a spherical $t$-design with $t=2S_E=2k$, then we have $\nu_k(\mu)=2/(2k+1)$, so  the above equation (cf. \eref{eq:SpectralGapColor2}) reduces to 
\begin{align}
\nu(\Omega(\mu, \scrM,p))\geq \frac{4\gamma}{nk(2k+1)}.
\end{align}
This inequality is saturated if $\scrM$ corresponds to the trivial edge coloring and $p$ is uniform. In conjunction with \eref{eq:nu}, it is straightforward to derive the number of tests required to achieve a given precision.

\section{\label{sec:1DAKLT}Verification of 1D AKLT states}
 
In this section we discuss in more detail the verification of  1D AKLT states, that is, AKLT states  defined on the  closed chain (cycle) and open chain.

\begin{figure}[b]
	\includegraphics[width=0.45\textwidth]{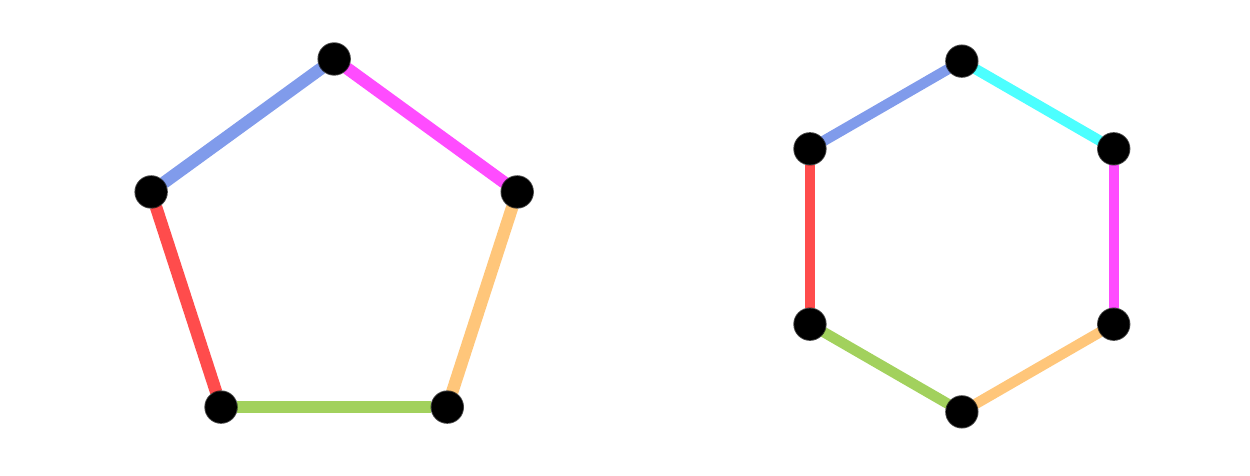}
	\caption{Trivial edge colorings of closed chains with five and six vertices.}
	\label{fig:Closed56color}
\end{figure}

\subsection{Verification of the AKLT state on the closed chain}
\subsubsection{Simplest verification protocols}
Let $G(V,E)$ be the closed chain with $n$ vertices.
Given a bond verification protocol specified by a probability distribution $\mu$ on the unit sphere, then a verification protocol of the AKLT state $|\Psi_G\>$ is specified by a weighted matching cover of $G$. The simplest matching cover, denoted by  $\scrM_\rmT$, corresponds to the trivial edge coloring and consists of $n$ matchings, each of which consists of only one edge  as illustrated in  \fref{fig:Closed56color}. Accordingly, each test operator is associated with one edge. Denote  by $T_j$ the test operator associated with the edge $\{j,j+1\}$ (here $n+1$ is identified with 1 under the periodic boundary condition). Note that  test operators $T_j$ for $j=1,2,\ldots, n$ are related to each other by cyclic permutations, so they should be performed with the same probability to maximize the spectral gap. The resulting verification operator reads
\begin{align}\label{eq:VOAKLT1}
\Omega(\mu,\scrM_\rmT)=\frac{1}{n}\sum_{j=1}^{n} T_j(\mu).
\end{align}
\Fref{fig:cycleSimple} shows the spectral gap of $\Omega(\mu,\scrM_\rmT)$  with $\mu$ constructed from the five platonic solids; in addition, the figure  shows the number of tests required to verify the AKLT state within infidelity $\epsilon=0.01$ and significance level $\delta=0.01$. 

\begin{figure}[t]
	\includegraphics[width=0.42\textwidth]{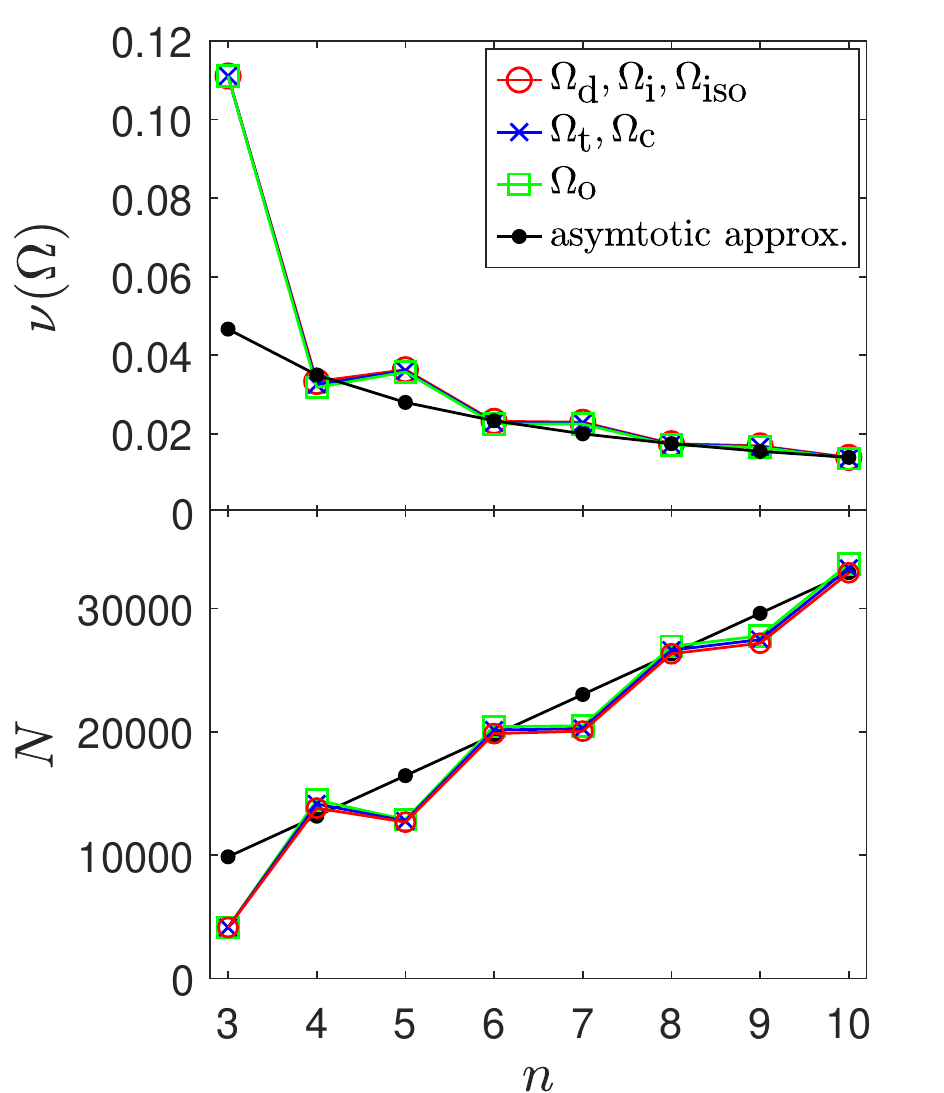}		
	\caption{\label{fig:cycleSimple}
Spectral gap $\nu(\Omega(\mu,\scrM_\rmT))$ and number $N$ of tests required to verify the AKLT state on the closed chain of $n$ nodes
with precision $\delta=\epsilon=0.01$.		
Each protocol is based on the trivial edge coloring, and the underlying	 bond verification protocol is constructed from a platonic solid or the isotropic distribution on the unit sphere as discussed in \sref{sec:BondVC} and indicated in the legend. The black dots represent the asymptotic approximation presented in \eref{eq:AsympApprox}.		
	}
\end{figure}

According to \thref{thm:AKLTnu2}, the spectral gap of $\Omega(\mu,\scrM_\rmT)$ satisfies
\begin{align}
\nu(\Omega(\mu,\scrM_\rmT))&\geq \frac{\nu_2(\mu)\gamma(H^{\circ}(n))}{n}; \label{eq:nuLB}
\end{align}	
here $\nu_2(\mu)$ denotes the spectral gap (from the maximum eigenvalue 1) of the bond verification operator, while $\gamma(H^{\circ}(n))$ denotes the spectral gap of the Hamiltonian (from the minimum eigenvalue corresponding to the ground state).
By virtue of  \thref{thm:gapGM} and \eref{eq:nuLB}
we can further deduce that
\begin{equation}\label{eq:nuLBck}
\nu\left(\Omega(\mu,\scrM_\rmT)\right)\geq \frac{c_k \nu_2(\mu)}{n},\quad n>2k,
\end{equation}
where $c_k$ is defined in \eref{eq:ck}. 
So the number of tests required to verify the AKLT state $|\Psi_G\>$ within infidelity $\epsilon$ and significance level $\delta$ satisfies 
\begin{align}
N&\leq \left\lceil\frac{ \ln (\delta^{-1})}{\nu(\Omega(\mu,\scrM_\rmT))\epsilon}\right \rceil\leq 
\left\lceil\frac{n \ln (\delta^{-1})}{\nu_2(\mu)\gamma(H^{\circ}(n))\epsilon}\right \rceil
\nonumber\\
&\leq\left\lceil\frac{ n\ln (\delta^{-1})}{c_k\,\nu_2(\mu)\,\epsilon}\right \rceil, \quad n>2k,\label{eq:NUBck}
\end{align}
which is linear in the number of spins. 

The inequality  in \eref{eq:nuLB} is saturated  when  $\mu$ forms a spherical 4-design (icosahedron and dodecahedron protocols for example), in which case  we have $\nu_2(\mu)=2/5$ and 
\begin{align}
\nu(\Omega(\mu,\scrM_\rmT))&=\frac{2\gamma(H^{\circ}(n))}{5n}, \label{eq:nutri1D4design}\\
N &\approx \left\lceil\frac{5n \ln (\delta^{-1})}{2\gamma(H^{\circ}(n))\epsilon}\right \rceil
\leq\left\lceil\frac{ 5n\ln (\delta^{-1})}{2c_k\,\epsilon}\right \rceil,  
\end{align}
where the inequality holds whenever $n>2k$.
Although \eref{eq:nutri1D4design} is derived under the 4-design assumption, calculation shows that it holds with high precision (with deviation less than 5\% for $n\leq 10$) for all protocols based on   platonic solids, as illustrated in \fref{fig:cycleSimple}.  When $n=3$ for example, we have 
$\gamma(H^{\circ}(n))=5/6$ and $\nu(\Omega(\mu,\scrM_\rmT))=1/9$ for all protocols based on  platonic solids. This observation indicates that the general lower bound in \eref{eq:nuLB} is usually not tight when $\mu$ does not form a 4-design. In other words, protocols based on tetrahedron, octahedron, and cube are more efficient than expected; the reason is still not very clear now.

In the large-$n$ limit, the spectral gap $\gamma(H^{\circ}(n))$ is approximately equal to $0.350$ \cite{GarcMW13,WeiRA22}. If in addition $\mu$ forms a spherical 4-design, then  we have
\begin{align}\label{eq:AsympApprox}
\nu(\Omega(\mu,\scrM_\rmT))&\approx \frac{0.140}{n},\quad 
N\approx\frac{7.14\, n\ln (\delta^{-1})}{\epsilon}. 
\end{align}
Numerical calculation shows that all protocols based on platonic solids can achieve a similar performance  as illustrated in \fref{fig:cycleSimple}.

\subsubsection{Optimal matching protocols}
\begin{figure}[b]
	\includegraphics[width=0.42\textwidth]{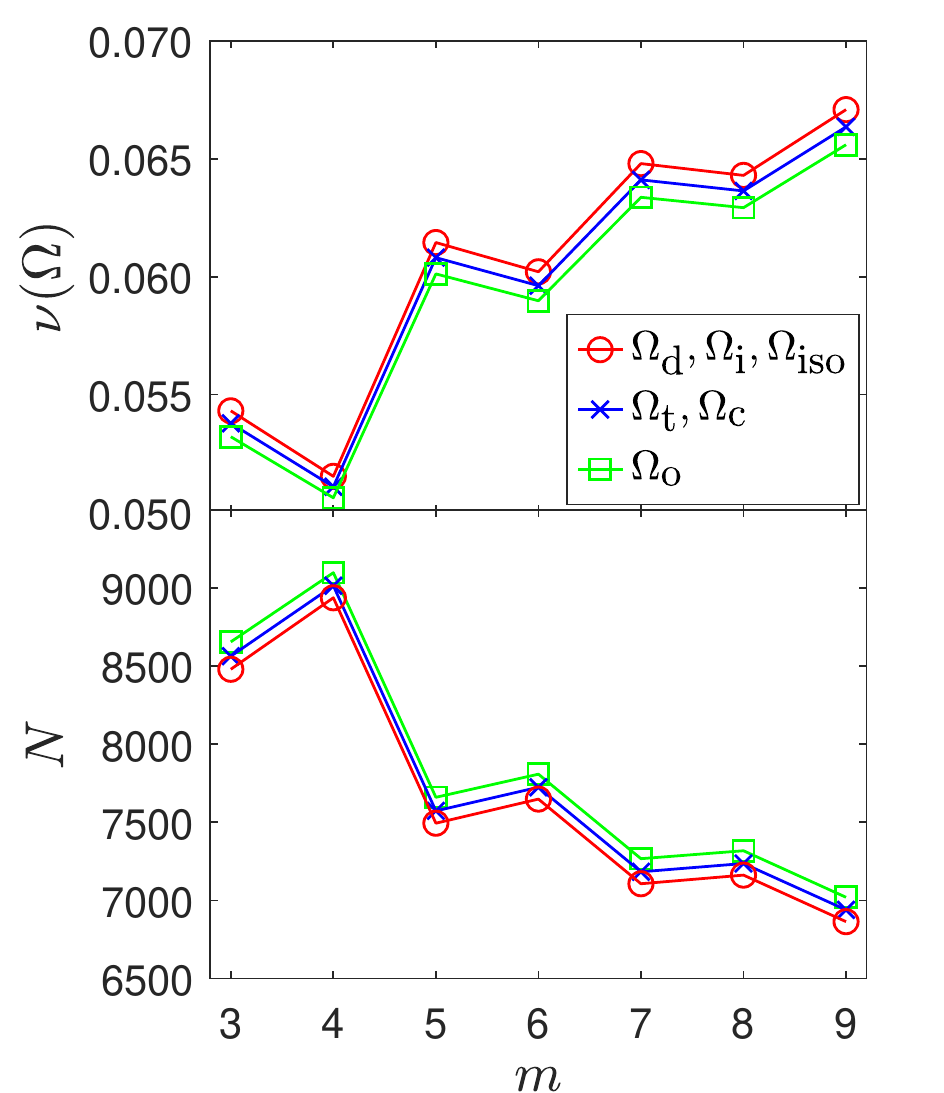}
	\caption{\label{fig:ntests}Verification of the AKLT state on the closed chain of nine nodes based on the  matching covers  $\scrM_m=\{M_j\}_{j=1}^m$. Here $M_j$ is the maximum matching defined in  \eref{eq:Mj}, and $m$ is the number of maximum matchings  and also  the number of distinct tests employed.  Infidelity and significance level are chosen to be $\delta=\epsilon=0.01$  as in \fref{fig:cycleSimple}; the choice of bond verification protocols is also the same. }
\end{figure}

\begin{table*}
	\caption{\label{tab:matching}The matching numbers $\upsilon(G)$ and the numbers of maximal and maximum matchings (shown as triples) for closed chains and open chains of 3 to 10 vertices.}
	\renewcommand\arraystretch{1.2}
	\begin{ruledtabular}
		\begin{tabular}{c|cccccccc}
			$n$ & 3 & 4 & 5 & 6 & 7 & 8 & 9 & 10\\
			\hline
			closed chain & (1, 3, 3) & (2, 2, 2) & (2, 5, 5) & (3, 5, 2) & (3, 7, 7) & (4, 10, 2) & (4, 12, 9) & (5, 17, 2)\\	
			open chain & (1, 2, 2) & (2, 2, 1) & (2, 3, 3) & (3, 4, 1) & (3, 5, 4) & (4, 7, 1) & (4, 9, 5) & (5, 12, 1) \\
		\end{tabular}
	\end{ruledtabular}
\end{table*}

More efficient verification protocols can be constructed from better matching covers. For the cycle graph $G$ with $n$ vertices, the chromatic index is given by 
\begin{align}
\chi'(G)=\begin{cases}
2 & \mbox{if $n$ is even},\\
3 & \mbox{if $n$ is odd}. 
\end{cases}
\end{align}
When $n$ is even,  there exist two maximum matchings, namely,
\begin{equation}
\begin{aligned}
M_1&=\{\{1,2\}, \{3,4\},\ldots, \{n-1,n\}\}, \\ M_2&=\{\{2,3\},\{4,5\},\ldots, \{n,1\}\},
\end{aligned}
\end{equation}
which form the matching cover $\scrM=\{M_1, M_2\}$ and also defines an edge coloring of $G$. 
Given a probability distribution $\mu$ on the unit sphere, then we can construct two test operators $T_{M_1}(\mu), T_{M_2}(\mu)$ according to \eref{eq:TestMatch}. By symmetry the two tests should be performed with the same probability to maximize the spectral gap. According to \thref{thm:AKLTnu1} with $m=g=2$ and $s=1/2$, the spectral gap of the resulting verification operator satisfies
\begin{align}
\nu(\Omega(\mu, \scrM))&\geq  \frac{(\sqrt{1+\gamma}-1)\nu_2(\mu)}{2\sqrt{1+\gamma}}\nonumber\\
&\geq  \frac{(\sqrt{1+c_k}-1)\nu_2(\mu)}{2\sqrt{1+c_k}}, \label{eq:nuLBeven}
\end{align}
where $\gamma=\gamma(H^{\circ}(n))$. Here the second inequality follows from \thref{thm:gapGM} and is applicable for $n>2k$ and $k>2$. 
In the large-$n$  limit, we have
$\gamma(H^{\circ}(n))\approx 0.350$ \cite{GarcMW13,WeiRA22}, so the above equation implies that  
\begin{align}\label{eq:nuLBeven2}
\nu(\Omega(\mu, \scrM))\gtrsim 0.0697\nu_2(\mu).
\end{align}
If in addition $\mu$ forms a $4$-design, then $\nu_2(\mu)=2/5$ and 
\begin{align}\label{eq:nuLBeven3}
\nu(\Omega(\mu, \scrM))\gtrsim 0.0279.
\end{align}
Accordingly, the number of tests required to verify the AKLT state within infidelity $\epsilon$ and significance level $\delta$ satisfies 
\begin{align}\label{eq:NUBeven}
N\lesssim \frac{36\ln (\delta^{-1})}{\epsilon}. 
\end{align}
Numerical calculation presented in \fref{fig:MoreEffPro} suggests that the bounds in \esref{eq:nuLBeven3}{eq:NUBeven} are tight within a factor  of 3.

To construct optimal matching protocols, in principle we need to consider all maximal matchings; see \tref{tab:matching} for the number of maximal matchings when $n=3,4,\ldots, 10$.    Nevertheless, numerical calculation based on \eref{eq:normSDP} suggests that the above protocol is still optimal even in that case;  in other words, other maximal matchings do not help.

\begin{figure}
	\includegraphics[width=0.42\textwidth]{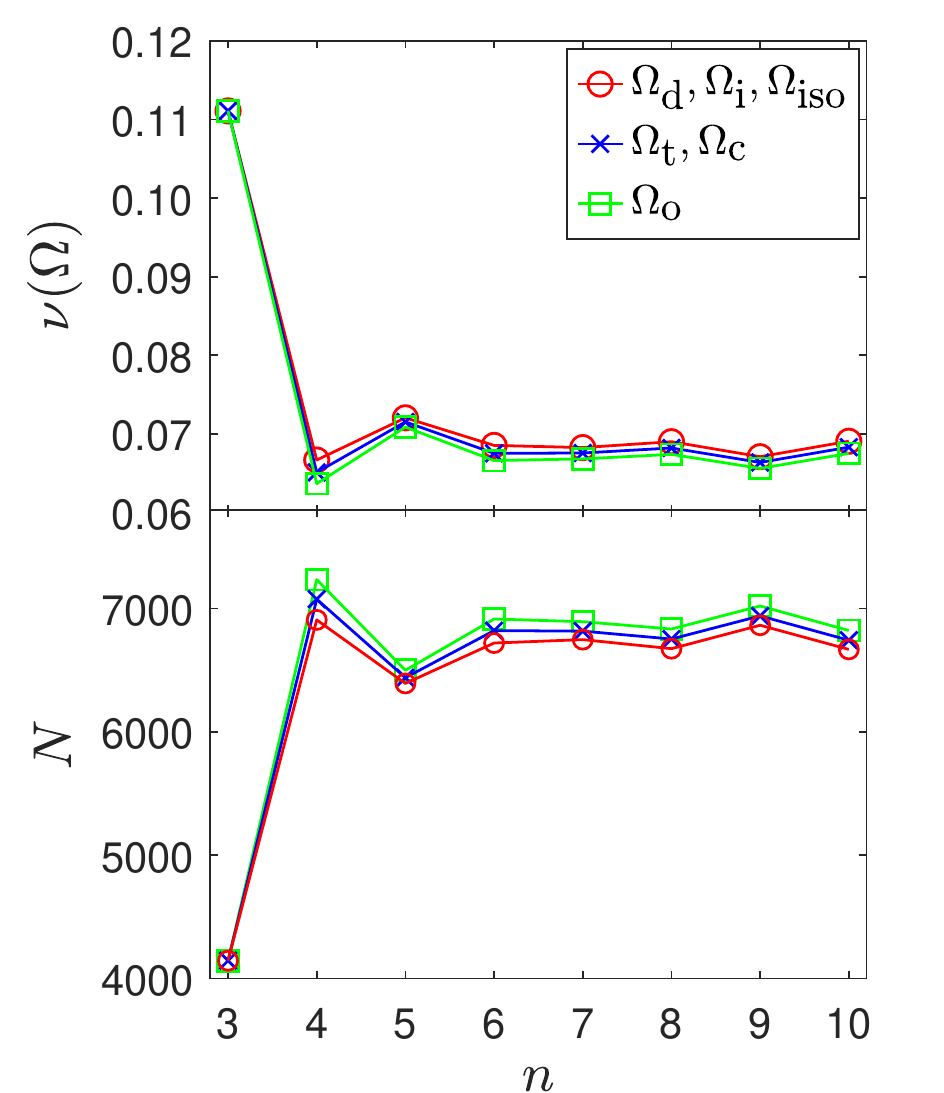}
	\caption{\label{fig:MoreEffPro}Verification of the AKLT state on the closed chain  based on  the matching cover composed of all maximum matchings.
 Infidelity and significance level are chosen to be $\delta=\epsilon=0.01$  as in \fref{fig:cycleSimple}; the choice of bond verification protocols is also the same.}
\end{figure}

When $n$ is odd, each matching of $G$ can contain at most $(n-1)/2$ edges, so at least three matchings are required to cover the edge set. Now, there exist $n$ maximum matchings, namely
\begin{align}
M_j=&\bigl\{\{j,j+1\}, \{j+2,j+3\},\ldots,\nonumber\\ &\{j+n-3,j+n-2\}\bigr\}, \quad j=1,2,\ldots, n,\label{eq:Mj}
\end{align}
where  $j$ and $j+n$ denote the same vertex. All  these maximum matchings  can be generated from $M_1$ by cyclic permutations. Let $\scrM_m=\{M_j\}_{j=1}^m$ for $m=3,4,\ldots, n$. Then $\scrM_m$ are matching covers of $G$ and can be employed to construct verification protocols for $|\Psi_G\>$. The resulting verification operators are denoted by $\Omega(\mu, \scrM_m)$. By virtue of   \thref{thm:AKLTnu1} with $g=2$ and $s=1/2$ we can deduce that
\begin{align}
&\nu(\Omega(\mu, \scrM_m))\geq \frac{(\sqrt{1+\gamma}-1)\nu_2(\mu)}{m(\sqrt{1+\gamma}+1)}\nonumber\\
&\geq \frac{(\sqrt{1+c_k}-1)\nu_2(\mu)}{m(\sqrt{1+c_k}+1)}
\quad m=3,4,\ldots, n,
\end{align}
where $\gamma=\gamma(H^{\circ}(n))$. Here the second inequality follows from \thref{thm:gapGM} and is applicable for $n>2k$ and $k>2$. 
In addition, numerical calculation shows that
\begin{align}
\nu(\Omega(\mu, \scrM_{m+2}))\geq \nu(\Omega(\mu, \scrM_m)), \quad m=3,4,\ldots, n-2. 
\end{align}
The performances of these verification protocols are illustrated in \fref{fig:ntests}.

By symmetry consideration we can deduce that 
\begin{align}
\nu(\Omega(\mu, \scrM_n))\geq \nu(\Omega(\mu, \scrM_m)), \quad m=3,4,\ldots, n,
\end{align}
which implies that
\begin{align}
\nu(\Omega(\mu, \scrM_n))&\geq \nu(\Omega(\mu, \scrM_3))\geq \frac{(\sqrt{1+\gamma}-1)\nu_2(\mu)}{3(\sqrt{1+\gamma}+1)}\nonumber\\
&\geq \frac{(\sqrt{1+c_k}-1)\nu_2(\mu)}{3(\sqrt{1+c_k}+1)},  \label{eq:nuLBodd}
\end{align}
where  the second inequality is applicable when $n>2k$ and $k>2$.
This bound is slightly worse than the counterpart in \eref{eq:nuLBeven}, but we believe that  \eref{eq:nuLBeven} applies for both even $n$ and odd $n$, and so do \ecref{eq:nuLBeven2}{eq:NUBeven}. Moreover, for a given bond verification protocol $\mu$, the verification operator $\Omega(\mu, \scrM_n)$ has the largest spectral gap among all verification operators 
based on maximum matchings.  Numerical calculation based on \eref{eq:normSDP} further  suggests that  $\Omega(\mu, \scrM_n)$ is still optimal even if we consider all maximal matchings. The performances of optimal matching protocols are  illustrated in \fref{fig:MoreEffPro}.

\subsection{Verification of the AKLT state on the open chain}
Next, we turn to the AKLT state on the open chain, which can also be verified following the general approach presented in \sref{sec:AKLT}.

\subsubsection{Simplest verification protocols}

\begin{figure}[b]
	\includegraphics[width=0.42\textwidth]{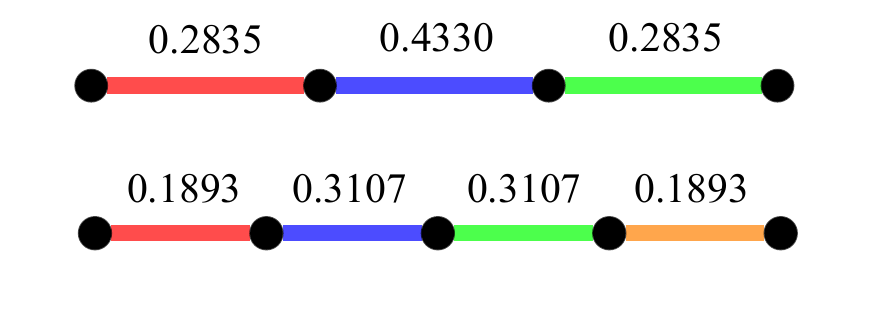}
	\caption{\label{fig:Open45prob}Trivial edge colorings of open chains with four and five vertices together with the optimal probabilities for performing the tests associated with individual colors. Here the bond verification protocol is built from the dodecahedron.}
\end{figure}

\begin{figure}[t]
	\includegraphics[width=0.42\textwidth]{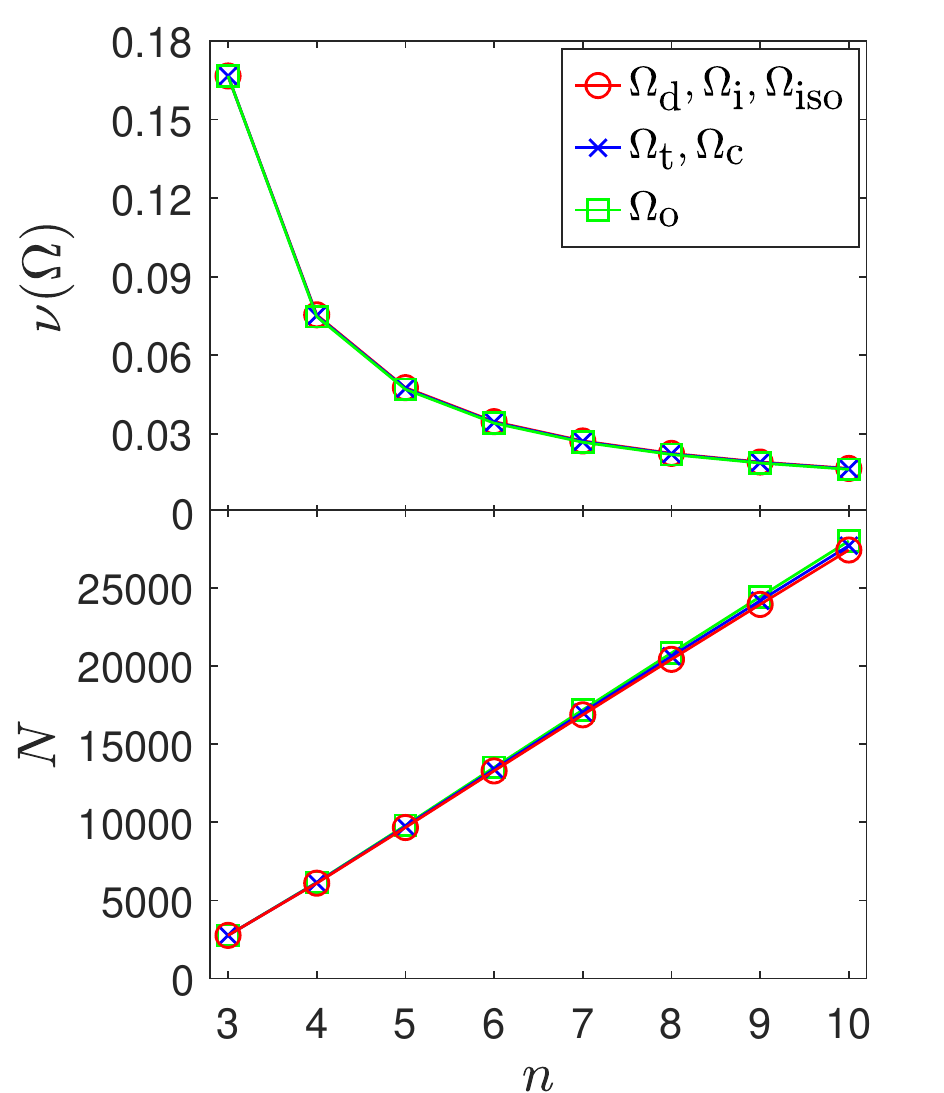}
	\caption{\label{eq:OpenSimple}
Verification of  the AKLT state on the open chain based on the trivial edge coloring with uniform probabilities. Infidelity and significance level are chosen to be $\delta=\epsilon=0.01$  as in \fref{fig:cycleSimple}; the choice of bond verification protocols is also the same.	
	}
\end{figure}

Incidentally, when the open chain has two nodes (corresponding to the only connected graph of two vertices), the AKLT Hamiltonian  coincides with the projector onto the symmetric subspace of two qubits and has  spectral gap equal to 1.
The corresponding AKLT state coincides with the singlet. In this case, each matching protocol is just a bond verification protocol. The largest spectral gap is $2/3$, which is achieved when the underlying distribution on the unit sphere forms a spherical 2-design (which is the case for all protocols based on platonic solids). Such protocols are also optimal among all protocols based on  separable measurements  \cite{PallLM18,ZhuH19O}.

Many results on the closed chain are still applicable with minor modification for the open chain. First, let us consider  verification protocols based on the  trivial edge coloring as illustrated in \fref{fig:Open45prob}. In this case, \esref{eq:VOAKLT1}{eq:nuLB} are modified as follows,
\begin{align}
\Omega(\mu,\scrM_\rmT)&=\frac{1}{n-1}\sum_{j=1}^{n-1} T_j(\mu),\quad n \geq 2,\\
\nu(\Omega(\mu,\scrM_\rmT))&\geq  \frac{\nu_2(\mu)\gamma\bigl(H_{\frac{1}{2},\frac{1}{2}}(n)\bigr)}{n-1},\quad  n \geq 3, \label{eq:nuLBopen}
\end{align}
where $\mu$ determines the bond verification protocol. When $n=3$, $\nu_2(\mu)$ in  \eref{eq:nuLBopen} can also be replaced by  $\nu_{3/2}(\mu)$, which leads to a better lower bound. The performances of several verification protocols based on platonic solids are illustrated in \fref{eq:OpenSimple}.
When  $\mu$ forms a spherical 4-design (which holds for the icosahedron and dodecahedron protocols),  we have $\nu_2(\mu)=2/5$, so  \eref{eq:nuLBopen} reduces to 
\begin{align}
\nu(\Omega(\mu,\scrM_\rmT))&\geq \frac{2\gamma(H_{\frac{1}{2},\frac{1}{2}}(n))}{5(n-1)}, \quad    n \geq 3, 
\end{align}
which is the counterpart of \eref{eq:nutri1D4design}, but the equality cannot be guaranteed in general.

\begin{figure}[t]
	\includegraphics[width=0.42\textwidth]{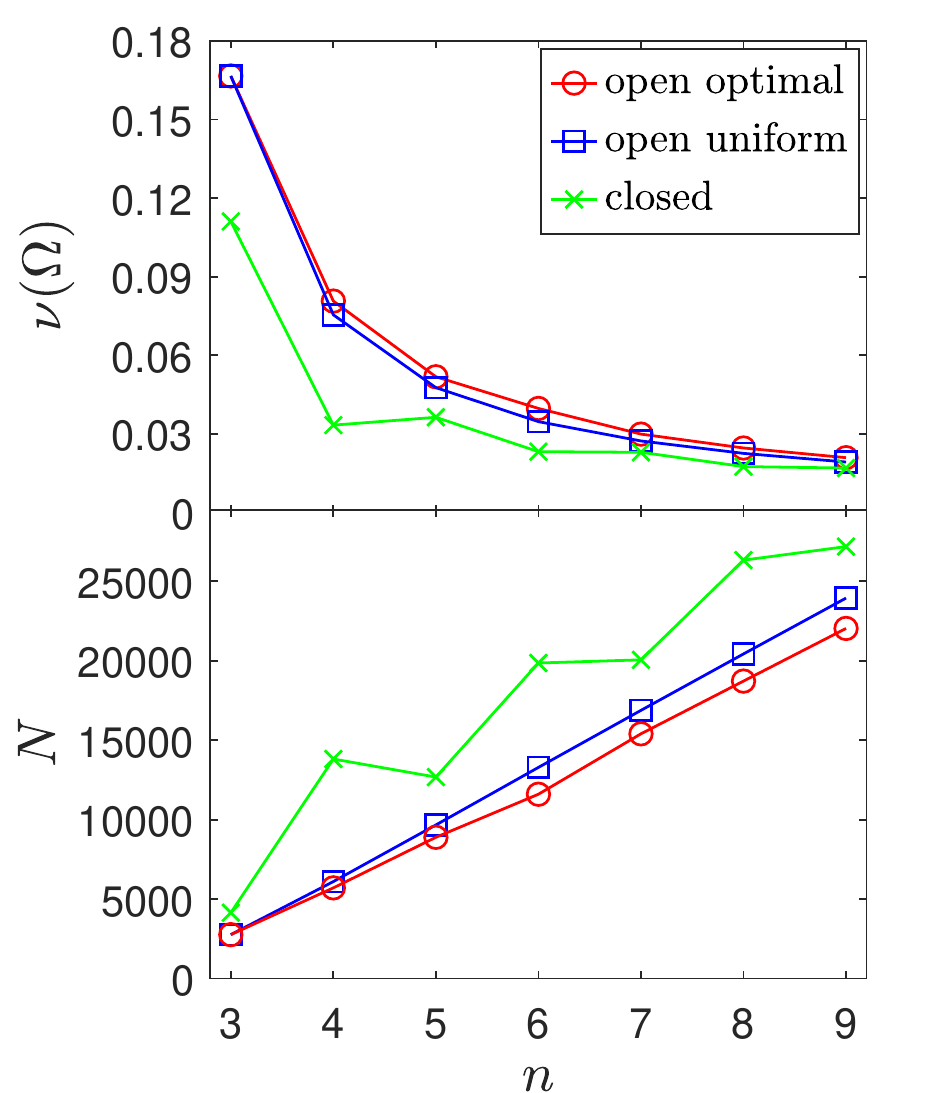}
	\caption{\label{fig:COTC}Verification of  the AKLT states on the open chain and closed chain based on the trivial edge coloring with uniform probabilities and optimal probabilities. For the closed chain, the optimal probabilities are uniform.
		Infidelity and significance level are chosen to be $\delta=\epsilon=0.01$  as in \fref{fig:cycleSimple}. The underlying bond verification protocol is built from the dodecahedron.}
\end{figure}

When $n\geq 4$, in contrast with the closed chain, the optimal probabilities associated with individual colors are not uniform as illustrated in \fref{fig:Open45prob}, and the spectral gap can be increased by optimizing the probabilities according to \eref{eq:normSDP}, as illustrated in \fref{fig:COTC}. This figure also shows that it is slightly easier to verify the AKLT state on the open chain than the counterpart on the closed chain.

\subsubsection{Optimal coloring protocols}

\begin{figure}[b]
	\includegraphics[width=0.42\textwidth]{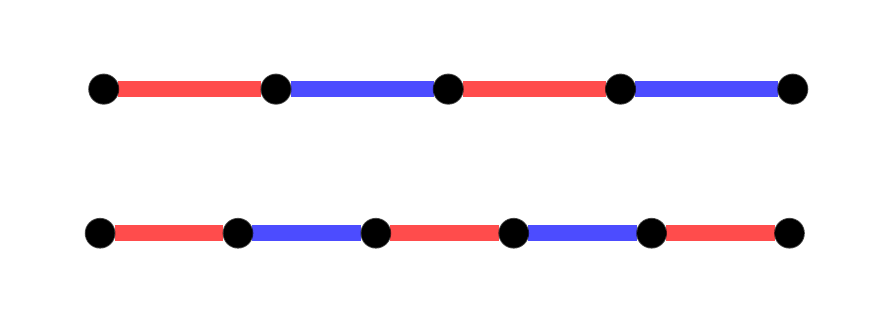}
	\caption{\label{fig:Open56color}Optimal edge colorings of open chains with five and six vertices.}
\end{figure}

\begin{figure}[t]
\includegraphics[width=0.42\textwidth]{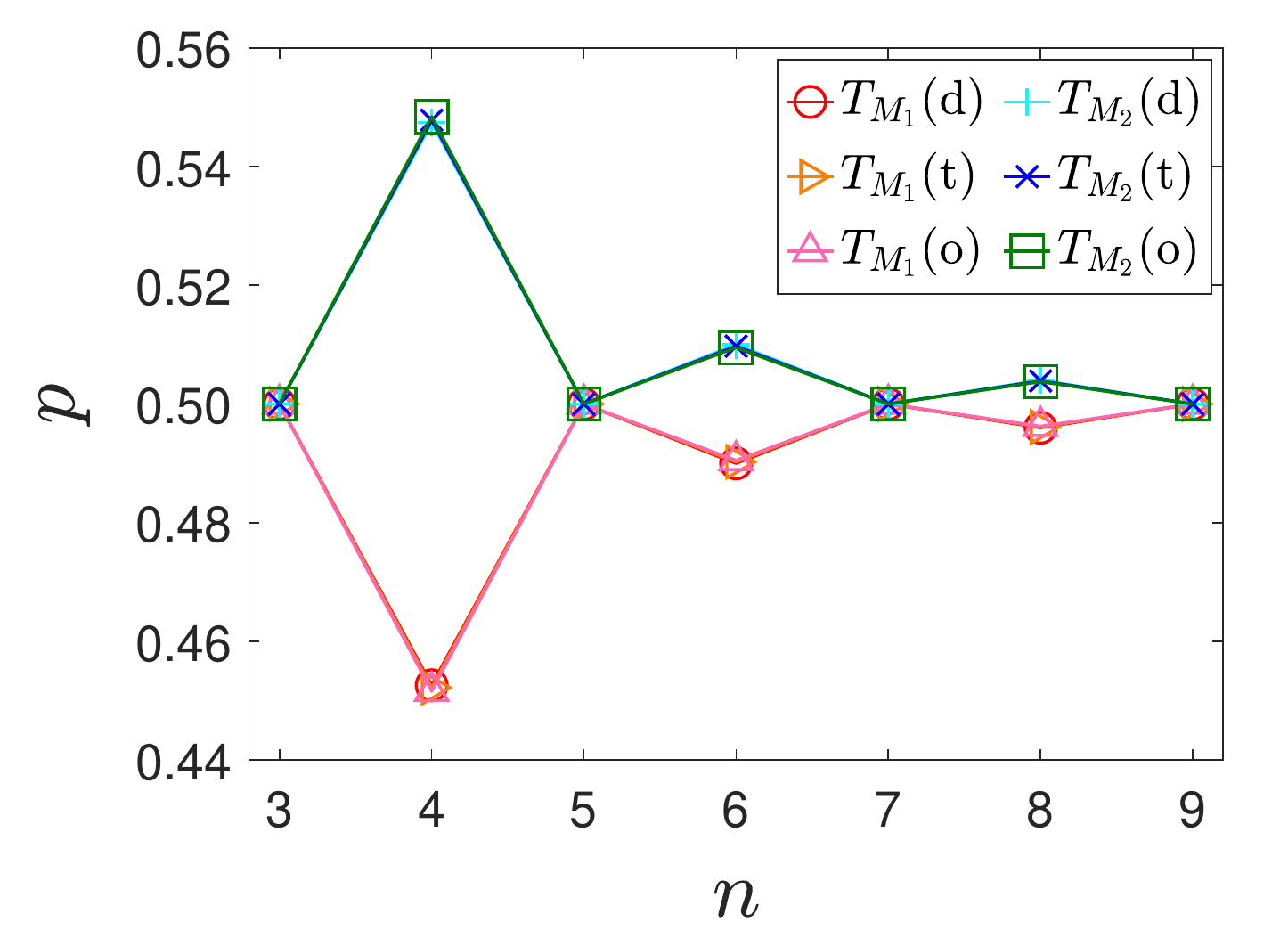}
\caption{\label{fig:matchingprob}Optimal probabilities for performing the two tests  $T_{M_1}(\mu), T_{M_2}(\mu)$ when the distribution $\mu$ is built from the 
tetrahedron (t), octahedron  (o), and  dodecahedron  (d). Here the two matchings $M_1, M_2$ are defined in \esref{eq:ColorMatchOpen}{eq:ColorMatchOpen2}.}
\end{figure}

\begin{figure}
	\includegraphics[width=0.42\textwidth]{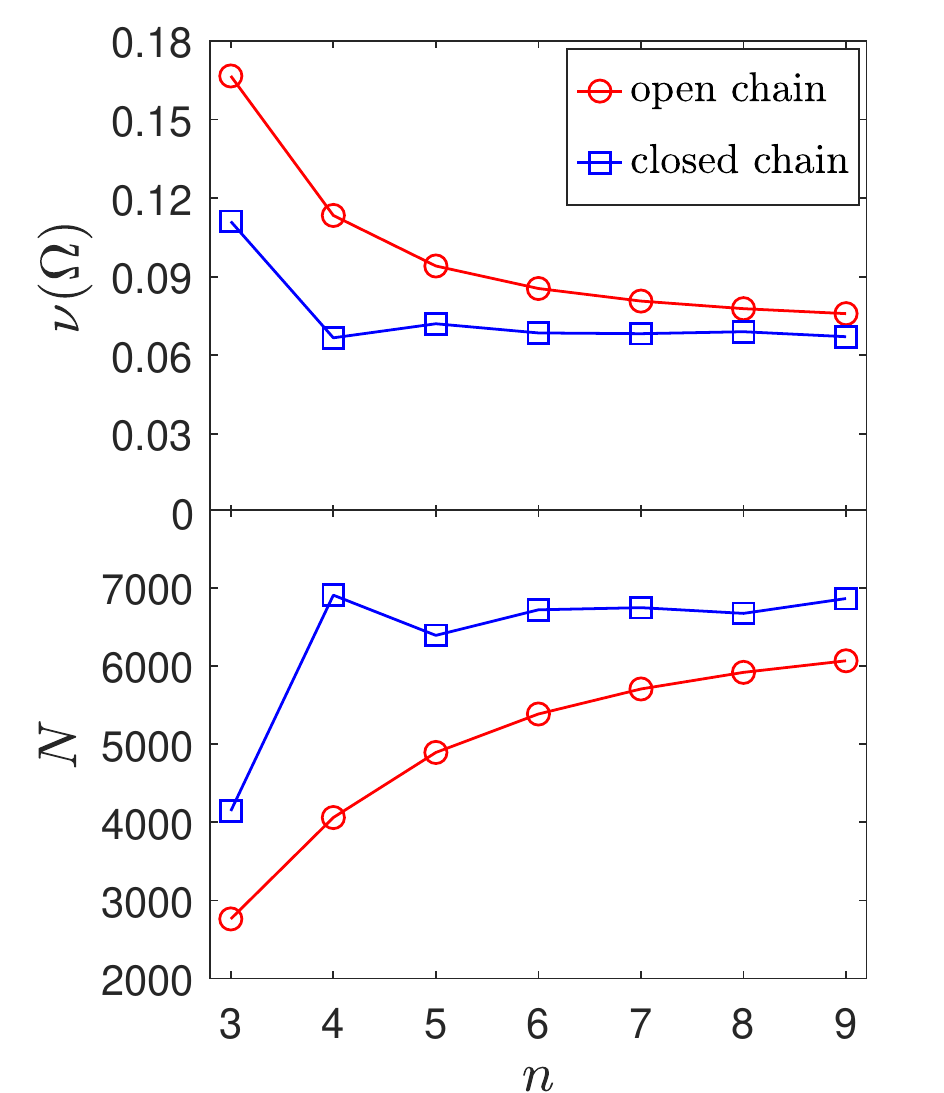}
	\caption{\label{fig:ClosedOpenOpt}Verification of  the AKLT state on the closed chain and open chain with optimal matching protocols. 	Infidelity and significance level are chosen to be $\delta=\epsilon=0.01$  as in \fref{fig:cycleSimple}. The underlying bond verification protocol is based on the dodecahedron. For the even closed chain and odd open chain, the optimal matching protocol is a coloring protocol with uniform probabilities. For the even open chain, the  protocol is a coloring protocol with optimized probabilities. For the odd closed chain, the  protocol is based on all maximum matchings with uniform probabilities. 		
	}
\end{figure}

To improve the efficiency, we can consider verification protocols based on the optimal coloring. Note that the edges of every open chain can be colored using two colors as illustrated in \fref{fig:Open56color}. When $n$ is odd, the two matchings associated with the optimal coloring read
\begin{equation}\label{eq:ColorMatchOpen}
\begin{aligned}
M_1&=\{\{1,2\}, \{3,4\},\ldots, \{n-2,n-1\}\}, \\ M_2&=\{\{2,3\},\{4,5\},\ldots, \{n-1,n\}\},
\end{aligned}
\end{equation}
which contain the same number of edges.   By symmetry the two tests $T_{M_1}(\mu), T_{M_2}(\mu)$ associated with the two matchings should be performed with the same probability to maximize the spectral gap. When $n$ is even, the two matchings read
\begin{equation}\label{eq:ColorMatchOpen2}
\begin{aligned}
M_1&=\{\{1,2\}, \{3,4\},\ldots, \{n-1,n\}\}, \\ M_2&=\{\{2,3\},\{4,5\},\ldots, \{n-2,n-1\}\}.
\end{aligned}
\end{equation}
In this case the optimal probabilities for performing the two tests  $T_{M_1}(\mu), T_{M_2}(\mu)$ are different: they are equal to 0.4900 and 0.5100 when $n=6$ for example. However, as $n$ increases, the difference gets smaller and smaller as illustrated in  \fref{fig:matchingprob}, and the improvement brought by probability optimization becomes negligible when $n\geq 8$.  Let $\scrM=\{M_1, M_2\}$; according to \thref{thm:AKLTnu1} with $m=g=2$ and $s=1/2$ ($s=1/3$ when $n=3$),  in both cases we can deduce that
\begin{align}
\nu(\Omega(\mu, \scrM))&\geq  \frac{(\sqrt{1+\gamma}-1)\nu_2(\mu)}{2\sqrt{1+\gamma}}\nonumber\\
&\geq  \frac{(\sqrt{1+c_k}-1)\nu_2(\mu)}{2\sqrt{1+c_k}}, 
\end{align}
where $\gamma=\gamma(H_{\frac{1}{2},\frac{1}{2}}(n))$. Here the second inequality follows from \thref{thm:gapGM} and is applicable when $n>2k$ and $k>2$. This bound has the same form as  the counterpart \eref{eq:nuLBeven} for the even closed chain.  In addition, \ecref{eq:nuLBeven2}{eq:NUBeven} are also applicable in the large-$n$ limit as long as $\gamma(H_{\frac{1}{2},\frac{1}{2}}(n))\approx\gamma(H^{\circ}(n))$ (cf. \tref{tab:AKLTgap}).

Numerical calculation based on \eref{eq:normSDP} further suggests that the optimal coloring protocol is also optimal among all matching protocols. \Fref{fig:ClosedOpenOpt} shows the performance of the optimal matching protocol in comparison with the counterpart for the closed chain. For a given number of nodes, it is easier to verify the AKLT state on the open chain than the one on the closed chain.

\section{\label{sec:VAKLTgenGraph}Verification of AKLT states on general graphs}
To further illustrate the power of our general approach, here we consider in more detail the verification  of AKLT states associated with general connected graphs $G(V,E)$ up to five vertices, that is, $n=|V|\leq 5$. Recall that there are 1 connected graph of two  vertices,  2 connected graphs of three  vertices,  6 connected graphs of four  vertices, and  21 connected graphs of five  vertices up to isomorphism. In \tref{tab:VAKLTgen} in \aref{app:VAKLTgen} we have summarized relevant basic information about the 30 graphs and the corresponding AKLT states, including the degree, matching number, chromatic number, chromatic index, the dimension of the underlying Hilbert space, and the spectral gap of the Hamiltonian.

To construct a verification protocol for the AKLT state associated with a given graph, it is essential  to choose a suitable  bond verification protocol.  Here we focus on the protocol based on the distribution $\mu_{32}$,  which corresponds to the pentakis dodecahedron as described in \sref{sec:BondVC}. This bond verification protocol can achieve the largest bond spectral gap (as the isotropic protocol) for all edges in  graphs up to five vertices, since the corresponding distribution $\mu_{32}$ forms a spherical 9-design. Incidentally,  for graphs up to four vertices, the alternative bond verification protocol based on the distribution $\mu_{24}$ (cf.~\sref{sec:BondVC}) can achieve the same performance. 

\begin{figure}[t]
	\hspace{-0.8ex}	\includegraphics[width=0.46\textwidth]{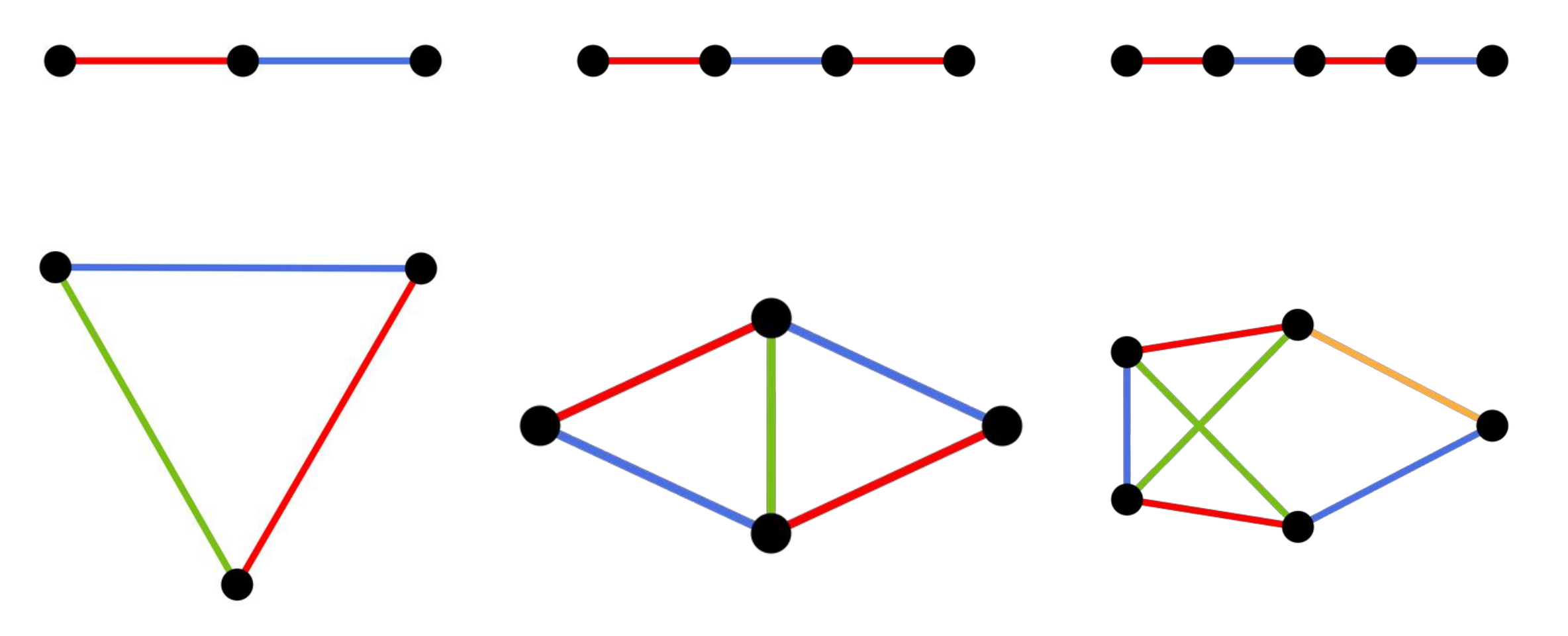} 
	\caption{\label{fig:nuMaxMin}Connected graphs of three, four, and five vertices for which the verification operators (based on optimized coloring protocols) of the corresponding AKLT states have the largest spectral gaps (up) and smallest spectral gaps (down).}
\end{figure}

\begin{figure}[t]
	\hspace{-0.8ex}	\includegraphics[width=0.45\textwidth]{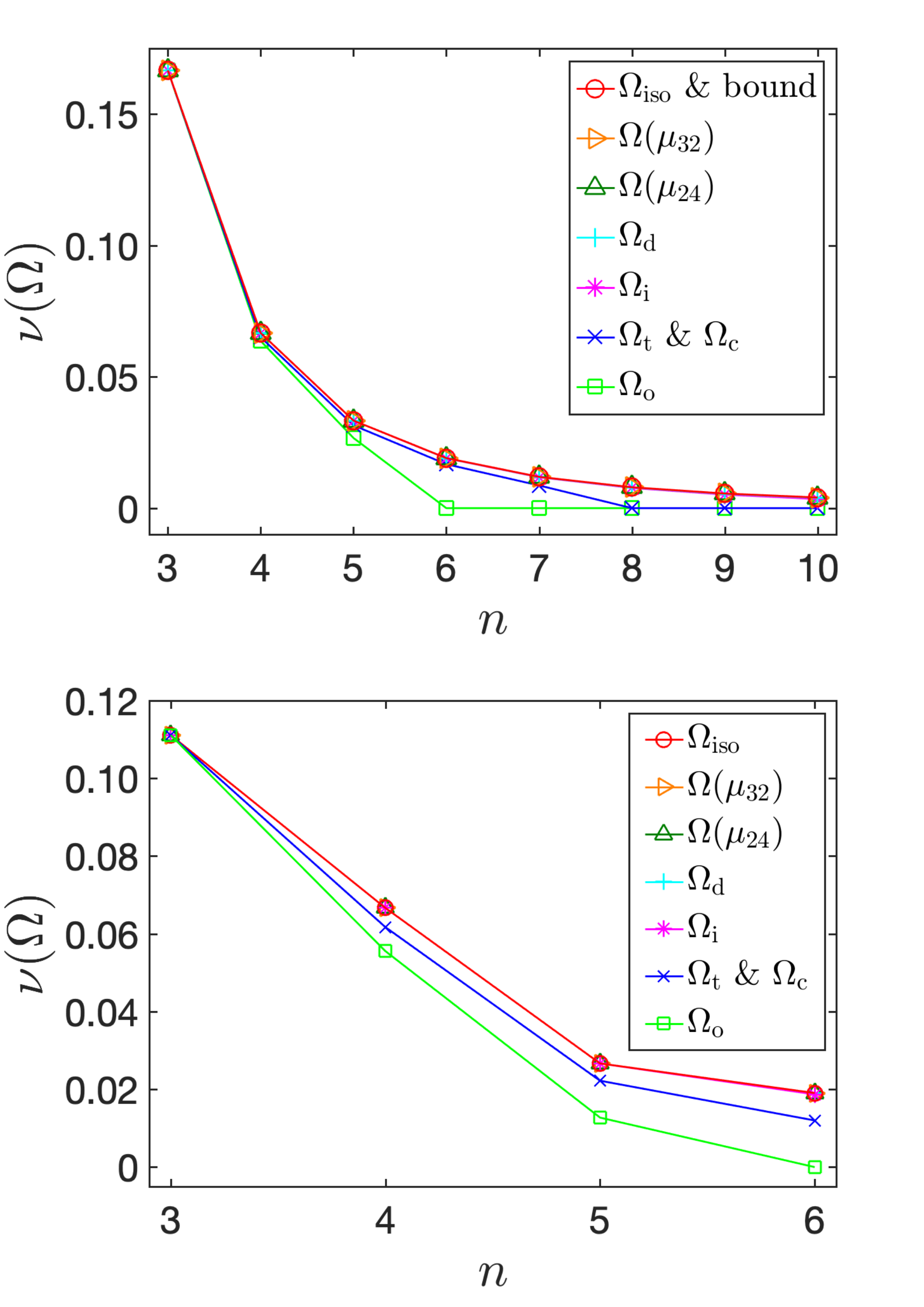} 
	\caption{\label{fig:StarComplete} 	Spectral gaps of verification operators of AKLT states defined on star graphs with with 3 to 10 vertices (upper plot) and complete graphs with 3 to 6 vertices (lower plot). Every  verification operator is based on the optimal edge coloring (which coincides with the trivial edge coloring for the star graph) with uniform probabilities (which are also optimal). The bond verification protocols are based on platonic solids,  distributions $\mu_{24}, \mu_{32}$, and the isotropic distribution defined in \sref{sec:BondVC} as indicated in the legends. Note that many different bond verification protocols lead to nearly identical spectral gaps.
		For each star graph, the spectral gap of the verification operator based on the isotropic distribution coincides with the first lower bound in 	\eref{eq:SpectralGapColor}. 	
	}
\end{figure}

Given the bond verification protocol, a verification protocol for the  AKLT state $|\Psi_G\>$ is determined by a matching cover of the underlying graph $G$.
We are particularly interested in coloring protocols, which correspond to matching covers composed of disjoint matchings. The simplest protocol is based on the trivial edge coloring: all edges have different colors. The spectral gap of the resulting verification protocol (with 
 uniform probabilities for all the colors) for each graph up to five vertices is shown in \tref{tab:VAKLTgen} in \aref{app:VAKLTgen}. For comparison, the table also shows the spectral gaps of two verification protocols associated with an optimal edge coloring: one protocol employs  the uniform probabilities for all the colors, while the other one employs the optimized probabilities, which can be determined by SDP according to \eref{eq:normSDP}. 
  For each star graph and the 3-cycle, all three protocols coincide with each other, and so do the corresponding spectral gaps. For most other graphs, by contrast,  the performance can be improved by considering an optimal edge coloring together with the optimized probabilities. For graphs with a given number of vertices, calculation shows that the  spectral gap is maximized at the linear graph as shown in \fref{fig:nuMaxMin}. We have not found a general pattern for the graph that leads to the smallest spectral gap.

Next, we discuss briefly verification protocols based on different bond verification protocols as discussed in \sref{sec:BondVC}.
\Fref{fig:StarComplete} illustrates the dependence of the spectral gap on the bond verification operator in the case of star graphs with 3 to 10 vertices and complete graphs with 3 to 6 vertices. Here each  verification protocol is based on the optimal coloring (which coincides with the trivial coloring for each star graph) with uniform probabilities (which are also optimal). In general, the relative deviations in the spectral gaps tend to increase as the number of nodes increases. The spectral gap of the verification operator based on the octahedron ($\Omega_\rmo$) vanishes when $n\geq 6$ for both star graphs and complete graphs. By contrast, the spectral gaps of verification operators based on icosahedron ($\Omega_\rmi$),
dodecahedron ($\Omega_\rmd$), distribution $\mu_{24}$  ($\Omega(\mu_{24})$), distribution $\mu_{32}$  ($\Omega(\mu_{32})$), and isotropic distribution ($\Omega_{\mathrm{iso}}$) are  close to each other in all the cases under consideration.

\section{\label{sec:summary}Summary}
We proposed a general method for constructing efficient verification protocols for  AKLT states defined on  arbitrary graphs based on local spin measurements. Explicit expressions for the AKLT states are not necessary to apply our approach. Given an AKLT state,  our verification protocols can be constructed from  probability distributions on the unit sphere and matching covers (including edge colorings) of the underlying graph, which have a simple geometric and graphic picture. 
We also provide rigorous performance guarantee that is required for practical applications. 
With our approach, most AKLT states of wide interest, including those defined on 1D and 2D lattices, can be verified with constant sample cost, which is independent 
of the system size and is dramatically more efficient than all  approaches known in the literature. Our verification protocols will be useful to various tasks in quantum information processing that employ AKLT states, including measurement-based quantum computation in particular.

\section*{Acknowledgments}
H. Zhu is grateful to Zheng Yan and Penghui Yao for stimulating discussions. This work is  supported by   the National Natural Science Foundation of China (Grants No.~92165109 and No.~11875110) and  Shanghai Municipal Science and Technology Major Project (Grant No.~2019SHZDZX01).

\appendix

\section{\label{sec:SpinProjProbProof}Proof of \lref{lem:SpinProjProb}}

\begin{proof}[Proof of \lref{lem:SpinProjProb}]
In the special case $\vec{r}=\vec{s}$, the probability $p_{\vec{r},\vec{s}}(S_1,m_1;S_2,m_2)$ is independent of the unit vector $\vec{r}$ and so can be abbreviated as $p(S_1,m_1;S_2,m_2)$;  to simplify the computation, we can assume that $\vec{r}=\hat{z}$. 	Let $S=S_1+S_2$,  $m=m_1+m_2$, and denote by $|S,m\>$ the eigenstate of $S_z=S_{1,z}+S_{2,z}$ with eigenvalue $m$. Then 
\begin{align}
&p(S_1,m_1;S_2,m_2)=|\<S, m|S_1,m_1; S_2,m_2\>|^2 \nonumber\\
&=\binom{2S}{2S_1}\binom{S+m}{S_1+m_1}\binom{S-m}{S_1-m_1}\leq 1,
\end{align}
where the second equality follows from  the well known formula for the Clebsch-Gordon coefficients \cite{Bohm93book}. In addition,  the last inequality is saturated iff $m=\pm S$, which means either \eref{eq:SpinCa} or \eqref{eq:SpinCb} holds.

 When $\vec{r}=-\vec{s}$,  the inequality $p_{\vec{r},\vec{s}}(S_1,m_1;S_2,m_2)\leq1$ is saturated iff either \eref{eq:SpinCc} or \eqref{eq:SpinCd} holds according to the above analysis and following equalities,
\begin{align}
p_{\vec{r},\vec{s}}(S_1,m_1;S_2,m_2)&=p_{-\vec{r},\vec{s}}(S_1, -m_1;S_2,m_2)\nonumber\\
&=p_{\vec{r},-\vec{s}}(S_1,m_1;S_2,-m_2). \label{eq:InversionSym}
\end{align}

In general, $|S_2,m_2\>_{\vec{s}}$ can be expanded as follows,
	\begin{align}
	|S_2,m_2\>_{\vec{s}}=\sum_{k=-S_2}^{S_2}c_k |S_2,k\>_{\vec{r}},
	\end{align}
	where the coefficients $c_k$  satisfy the normalization condition $\sum_{k=-S_2}^{S_2} |c_k|^2=1$. 
	Since the projector $P_S$ commutes with $(\vec{S}_1+\vec{S}_2)\cdot \vec{r}$, it follows that
	\begin{align}
	p_{\vec{r},\vec{s}}(S_1,m_1;S_2,m_2)=\sum_{k=-S_2}^{S_2}|c_k|^2 p(S_1,m_1;S_2,k)\leq 1, \label{eqprsmk}
	\end{align}
	and the inequality is saturated iff 
\begin{align}
p(S_1,m_1;S_2,k)=1\quad \forall c_k\neq 0,
\end{align}	
 which means $c_k=0$ whenever $p(S_1,m_1; S_2, k)<1$.

Recall that $p(S_1,m_1;S_2,k)\leq 1$, and the inequality is saturated iff $m_1+k=\pm (S_1+S_2)$. 
	Suppose the inequality in \eref{eqprsmk} is saturated, then $m_1=\pm S_1$. By symmetry, we also have $m_2=\pm S_2$. 
	If $m_1=S_1$, then 
	\begin{align}
	|c_k|=\begin{cases}
	1 & k=S_2, \\
	0 & \mbox{otherwise},
	\end{cases}
	\end{align}
	which implies that $|S_2, m_2\>_{\vec{s}}=|S_2,S_2\>_{\vec{r}}$ up to an overall phase factor, so either \eref{eq:SpinCa} or \eqref{eq:SpinCc} holds in view of \eref{eq:rsOverlap}. If $m_1=-S_1$, then either \eref{eq:SpinCb} or \eqref{eq:SpinCd} holds according to a similar reasoning.
\end{proof}

\section{\label{asec:nuSmuProof}Proof of \lref{lem:nuSmu}}
\begin{proof}[Proof of \lref{lem:nuSmu}]
	To prove \lref{lem:nuSmu}, it suffices to prove \eref{eq:nuSmuIneq}. According to \eref{eq:nuSmu}, we have
	\begin{align}
	\nu_{S_j}(\mu)=\lambda_{\min}(O_{S_j}),\quad j=1,2,
	\end{align}
	where 
	\begin{align}
	O_{S_j}=O_{S_j}(\mu)=2\int |S_j\>_{\vec{r}}\<S_j| d\mu_\sym(\vec{r}),\;\;  j=1, 2
	\end{align}
	according to  \eref{eq:MSmu}. Define
	\begin{align}
	W_j:=&\int \Bigl(\Bigl|\frac{1}{2}\bigr\>_{\vec{r}}\Bigl\<\frac{1}{2}\Bigr|\Bigr)^{\otimes 2S_j} d\mu_\sym(\vec{r})\nonumber\\
	=&\int \Bigl(\frac{1+\vec{r}\cdot\vec{\sigma}}{2}\Bigr)^{\otimes 2S_j} d\mu_\sym(\vec{r}), \quad  j=1, 2,
	\end{align}
	where $\vec{\sigma}=(\sigma_x,\sigma_y,\sigma_z)$ is the vector composed of the three Pauli operators. 
	Then $W_j$ for each $j$ is a positive semidefinite operator acting on the symmetric subspace of $\bbC^{\otimes2S_j}$. Note that this symmetric subspace has dimension $2S_j+1$, which is the same as the Hilbert space associated with spin value $S_j$. In addition, $W_1$ is the partial trace of $W_2$ after tracing out $2(S_2-S_1)$ qubits. Moreover, $W_j$ and $O_{S_j}$ have the same nonzero eigenvalues, including multiplicities. Let $\Pi_{2S_j}$ be the projector onto the  symmetric subspace of $\bbC^{\otimes2S_j}$; then 
	\begin{align}
	\nu_{S_j}(\mu) =\lambda_{\min}(O_{S_j})=\max\{\lambda | W_j\geq \lambda \Pi_{2S_j} \}. 
	\end{align}
	Notably, we have
	\begin{align}
	W_2\geq \nu_{S_2}(\mu) \Pi_{2S_2}. 
	\end{align}
	Taking partial trace over $2(S_2-S_1)$ qubits we can deduce that
	\begin{align}
	W_1\geq \nu_{S_2}(\mu) \frac{2S_2+1}{2S_1+1}\Pi_{2S_1}, 
	\end{align}
	which implies \eref{eq:nuSmuIneq} and completes the proof of \lref{lem:nuSmu}. 
\end{proof}

\section{\label{asec:OmegaSdesign}Proof of  \thref{thm:OmegaSdesign}}

\begin{proof}[Proof of \thref{thm:OmegaSdesign}]
From the definition in \eref{eq:OmegaS} we can deduce that
\begin{align}
\tr[\Omega_S(\mu)]=2S-1. \label{eq:OmegaStrace}
\end{align}	
Suppose statement~1 holds; then
	\begin{equation}
	\|\Omega_S(\mu)\|=\frac{2S-1}{2S+1}=\frac{\tr[\Omega_S(\mu)]}{2S+1},
	\end{equation}
according to \eref{eq:nuSmu}, so statement~2 must hold given that $\Omega_S(\mu)$ is a positive operator  acting on a Hilbert space of dimension  $2S+1$.	 The implication  $2\imply 3$	 is obvious.

Next, if $\Omega_S(\mu)$ is proportional to  the identity operator, then both statements~1 and 2 hold by \eref{eq:OmegaStrace}. It follows that
 statements 1, 2, 3 are equivalent to each other.

To complete the proof, it remains to  show the equivalence of statements 2 and 4. If statement 2 holds, then
\begin{align}
\tr[\Omega_S(\mu)^2]=\frac{(2S-1)^2}{2S+1}. 
\end{align}
So statement~4 holds according to  \lref{lem:trOmegaSmu2LB}  below.

Conversely, suppose $\mu_\sym$ forms a spherical $t$-design with $t=2S$. By virtue of \lref{lem:trOmegaSmu2LB} below we can deduce 
	\begin{align}
	\tr\bigl[\Omega_S(\mu)^2\bigr]=\frac{(2S-1)^2}{2S+1}=\frac{1}{2S+1}\tr[\Omega_S(\mu)]^2,
	\end{align}
which implies statement~2 given that $\Omega_S(\mu)$ is a positive operator acting on a Hilbert space of dimension  $2S+1$. This observation completes the proof of \thref{thm:OmegaSdesign}. 
\end{proof}

In the rest of this appendix we prove the following lemma, which is employed in the proof of \thref{thm:OmegaSdesign}. 
\begin{lem}\label{lem:trOmegaSmu2LB}
Suppose $\mu$ is a probability distribution on the unit sphere. Then the operator $\Omega_S(\mu)$ satisfies
	\begin{align}
	\tr\bigl[\Omega_S(\mu)^2\bigr]\geq\frac{(2S-1)^2}{2S+1};  \label{eq:trOmegaSmu2LB}
	\end{align}	
the inequality is saturated iff the distribution $\mu_\sym$ forms a spherical $t$-design with $t=2S$. 
\end{lem}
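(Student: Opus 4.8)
The plan is to reduce \eref{eq:trOmegaSmu2LB} to a trace inequality for the operator $O_S(\mu)=2\int|S\>_{\vec r}\<S|\,d\mu_\sym(\vec r)$ of \eref{eq:MSmu}, and then to evaluate that trace by expanding the spin-coherent-state overlap kernel into Legendre polynomials. First I would use $\Omega_S(\mu)=1-O_S(\mu)$ together with $\tr[\Omega_S(\mu)]=2S-1$ from \eref{eq:OmegaStrace} (equivalently $\tr[O_S(\mu)]=2$). Since $\Omega_S(\mu)$ acts on a space of dimension $2S+1$,
\[
\tr[\Omega_S(\mu)^2]=(2S+1)-2\tr[O_S(\mu)]+\tr[O_S(\mu)^2]=2S-3+\tr[O_S(\mu)^2].
\]
As $\tfrac{(2S-1)^2}{2S+1}=(2S-3)+\tfrac{4}{2S+1}$, the claim \eref{eq:trOmegaSmu2LB} is equivalent to $\tr[O_S(\mu)^2]\geq \tfrac{4}{2S+1}$, and I must show equality holds iff $\mu_\sym$ is a $2S$-design.

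Next I would rewrite the trace as a double integral. Using $\tr\bigl(|S\>_{\vec r}\<S|\,|S\>_{\vec s}\<S|\bigr)=\bigl(\tfrac{1+\bm r\cdot\bm s}{2}\bigr)^{2S}$ from \eref{eq:rsOverlap},
\[
\tr[O_S(\mu)^2]=4\int\!\!\int\Bigl(\frac{1+\bm r\cdot\bm s}{2}\Bigr)^{2S}\,d\mu_\sym(\bm r)\,d\mu_\sym(\bm s).
\]
I would expand the kernel into Legendre polynomials, $\bigl(\tfrac{1+x}{2}\bigr)^{2S}=\sum_{k=0}^{2S}a_k P_k(x)$ (here $P_k$ denotes the degree-$k$ Legendre polynomial, not a projector), and invoke the addition theorem on the sphere, which expresses $P_k(\bm r\cdot\bm s)$ as a positive multiple of $\sum_m Y_{km}(\bm r)\overline{Y_{km}(\bm s)}$ for an orthonormal family of degree-$k$ spherical harmonics $Y_{km}$. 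Each mode then contributes a term proportional to $a_k\sum_m\bigl|\int Y_{km}\,d\mu_\sym\bigr|^2\geq 0$, while the $k=0$ mode contributes exactly $4a_0$. Verifying $a_0=\tfrac{1}{2S+1}$ (consistent with the resolution of the identity by spin coherent states, which makes $O_S(\mu)$ a multiple of the identity and gives $\tr[O_S(\mu)^2]=\tfrac{4}{2S+1}$ for isotropic $\mu$) then yields $\tr[O_S(\mu)^2]=\tfrac{4}{2S+1}$ plus nonnegative terms.

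The crux, and the step I expect to be the main obstacle, is to show that every coefficient $a_k$ is \emph{strictly} positive for $0\leq k\leq 2S$; only then does equality force the vanishing of all harmonic moments of degrees $1$ through $2S$. I would settle this with an explicit formula: after the substitution $t=(1+x)/2$ and orthogonality of the shifted Legendre polynomials,
\[
a_k=(2k+1)\int_0^1 t^{2S}P_k(2t-1)\,dt=(2k+1)\frac{[(2S)!]^2}{(2S-k)!\,(2S+k+1)!},
\]
which is manifestly positive for $0\leq k\leq 2S$ and also gives $a_0=\tfrac{1}{2S+1}$.

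With positivity established, the bound $\tr[O_S(\mu)^2]\geq\tfrac{4}{2S+1}$ follows, and equality holds precisely when $\int Y_{km}\,d\mu_\sym=0$ for all $1\leq k\leq 2S$ and all $m$, i.e. exactly when $\mu_\sym$ forms a spherical $t$-design with $t=2S$; this completes the proof. I would emphasize that this harmonic argument is deliberately self-contained and does not invoke \thref{thm:OmegaSdesign}, so as to avoid circularity, since the equivalence of statements~2 and~4 in that theorem is precisely what this lemma is used to establish.
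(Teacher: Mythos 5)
Your proof is correct, and it reaches the saturation condition by a genuinely different route than the paper. Both arguments share the same reduction: using $\tr[\Omega_S(\mu)]=2S-1$ and the overlap formula \eref{eq:rsOverlap}, the claim becomes a lower bound on the double integral $\iint[(1+\bm{r}\cdot\bm{s})/2]^{2S}\,\rmd\mu_\sym(\bm{r})\,\rmd\mu_\sym(\bm{s})$, and your algebraic bookkeeping ($\tr[\Omega_S(\mu)^2]=2S-3+\tr[O_S(\mu)^2]$, target $\tr[O_S(\mu)^2]\geq 4/(2S+1)$) matches the paper's \eref{eq:OmegaSProof1} exactly. At that point the paper expands the kernel binomially in powers of $\bm{r}\cdot\bm{s}$, obtaining the even frame potentials $f_{2j}(\mu_\sym)$, and invokes the known bounds $f_t(\mu)\geq 1/(t+1)$ for even $t$ together with their saturation characterization from the design literature \cite{Seid01,BannB09}; saturation of the overall inequality then forces $f_{2j}(\mu_\sym)=1/(2j+1)$ for all $j\leq\lfloor S\rfloor$, which combined with the central symmetry of $\mu_\sym$ yields the $2S$-design property. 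You instead expand the same kernel in the Legendre basis, $[(1+x)/2]^{2S}=\sum_{k=0}^{2S}a_k P_k(x)$, apply the addition theorem so that the $k$th mode contributes $a_k$ times a nonnegative sum of squared degree-$k$ harmonic moments of $\mu_\sym$, and close the argument with the explicit coefficient formula $a_k=(2k+1)[(2S)!]^2/[(2S-k)!\,(2S+k+1)!]>0$, which I have checked (including $a_0=1/(2S+1)$, so the $k=0$ mode reproduces the bound). The strict positivity of every $a_k$ is exactly the right leverage: it makes equality equivalent to the vanishing of all harmonic moments of degrees $1$ through $2S$, which is precisely the definition of a spherical $2S$-design, with no parity bookkeeping required (the odd-degree moments of $\mu_\sym$ vanish automatically by symmetry, but your argument does not need this). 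The trade-off: the paper's monomial expansion is shorter because it outsources both the inequality and its equality condition to the cited frame-potential results, whereas your harmonic expansion is fully self-contained—in effect you inline a proof of those frame-potential bounds—and it makes the saturation analysis more transparent, at the modest cost of verifying the shifted-Legendre moment integral; you are also right that neither route risks circularity with \thref{thm:OmegaSdesign}, since the paper likewise proves this lemma independently and only then uses it there.
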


\begin{proof}[Proof of \lref{lem:trOmegaSmu2LB}]
	The inequality in \eref{eq:trOmegaSmu2LB} follows from \eref{eq:OmegaStrace} and the fact that $\Omega_S(\mu)$ is a positive operator  acting on a Hilbert space of dimension  $2S+1$. 
By virtue of  \esref{eq:OmegaS}{eq:rsOverlap} in the main text, we can deduce that
	\begin{align}
&\tr\bigl[\Omega_S(\mu)^2\bigr]=\tr\Biggl\{\biggl[1-2\int |S\>_{\vec{r}}\<S| d\mu_\sym(\vec{r})\biggr]^2\Biggr\}\nonumber\\
&=2S-3+4	\iint |{}_{\bm{r}}\<S|S\>_{\bm{s}}|^2\rmd\mu_\sym(\vec{r})\rmd\mu_\sym(\vec{s})
\nonumber\\
&=2S-3+4\iint\Bigl(\frac{1+\vec{r}\cdot\vec{s}}{2}\Bigr)^{2S}\rmd\mu_\sym(\vec{r})\rmd\mu_\sym(\vec{s})
\nonumber\\
&=2S-3+2^{2-2S}\sum_{j=0}^{\lfloor S\rfloor}\binom{2S}{2j}f_{2j}(\mu_\sym) \nonumber\\
&=2S-3+2^{2-2S}\sum_{j=0}^{\lfloor S\rfloor}\binom{2S}{2j}f_{2j}(\mu), \label{eq:OmegaSProof1}
\end{align}	
where 	
	\begin{align}
	f_t(\mu):=\iint \rmd\mu(\vec{r})\rmd\mu(\vec{s}) (\bm{r}\cdot\bm{s})^t
	\end{align}
denotes the $t$th frame potential of the distribution $\mu$, assuming that $t$ is a nonnegative integer. By convention we have $f_0(\mu)=1$ for any distribution $\mu$.

When $t$ is even,  the frame potential $	f_t(\mu)$ satisfies the following inequality \cite{Seid01,BannB09}, 
\begin{align}
	f_t(\mu)=f_t(\mu_\sym)\geq \frac{1}{t+1}, \label{eq:ftmuLB}
\end{align}
 and the inequality is saturated if $\mu_\sym$ forms a spherical $t$-design. Combining \esref{eq:OmegaSProof1}{eq:ftmuLB} we can deduce that
\begin{align}
	\tr\bigl[\Omega_S(\mu)^2\bigr]&\geq 2S-3+2^{2-2S}\sum_{j=0}^{\lfloor S\rfloor}\binom{2S}{2j}\frac{1}{2j+1}\nonumber\\
	&=2S-3+\frac{2^{2-2S}}{2S+1}\sum_{j=0}^{\lfloor S\rfloor}\binom{2S+1}{2j+1}\nonumber\\
	&=2S-3+\frac{4}{2S+1}=\frac{(2S-1)^2}{2S+1}, \label{eq:trOmegamu2LBProof}
\end{align}
which confirms the inequality in \eref{eq:trOmegaSmu2LB} again. 
	
Suppose the symmetrized probability distribution $\mu_\sym$ forms a spherical $t$-design with $t=2S$, then we have
	\begin{align}
	\!\! f_{2j}(\mu)=f_{2j}(\mu_\sym)=\frac{1}{2j+1}, \;\; j=0, 1, \ldots, \lfloor S\rfloor.   \label{eq:f2jmudesign}
	\end{align}
Consequently,  the inequality in \eref{eq:trOmegamu2LBProof} is saturated, which means the inequality in \eref{eq:trOmegaSmu2LB} is also saturated. 
	
To prove the other direction, suppose the inequality in \eref{eq:trOmegaSmu2LB} is saturated, so that the inequality in \eref{eq:trOmegamu2LBProof} is saturated. Then  \eref{eq:f2jmudesign} must hold, which means  $\mu_\sym$ forms a spherical $t$-design with $t=2S$ since  the distribution $\mu_\sym$ is symmetric under center inversion.  
\end{proof}

\clearpage

\onecolumngrid
\section{\label{app:VAKLTgen}Verification of AKLT states on general graphs}

\begin{longtable*}{c >{\centering\arraybackslash}p{2cm} >{\centering\arraybackslash}p{0.5cm} >{\centering\arraybackslash}p{0.5cm} >{\centering\arraybackslash}p{1cm} >{\centering\arraybackslash}p{1cm} >{\centering\arraybackslash}p{1cm} >{\centering\arraybackslash}p{1cm} >{\centering\arraybackslash}p{1cm} >{\centering\arraybackslash}p{1cm} >{\centering\arraybackslash}p{1cm} >{\centering\arraybackslash}p{1cm} >{\centering\arraybackslash}p{1cm} >{\centering\arraybackslash}p{2.5cm}}
	\caption{\label{tab:VAKLTgen}Verification of AKLT states on general graphs of two to five vertices.  The graphs \cite{ConnectedGraph} with  optimal edge colorings are shown in the second column. For each graph
	 $\nu(\Omega_{\mathrm{tri}})$ is the spectral gap of the verification operator based on the trivial edge coloring with uniform probabilities; $\nu(\Omega)$ is  based on the optimal edge coloring (shown in the second column) with uniform probabilities; $\nu(\tilde{\Omega})$ is based on the optimal edge coloring with optimized probabilities as shown in the last column according to the order: red (R), blue (B), green (G), orange (O), and magenta (M). For graphs No. 1, 2, 3, 4, 7, 9, 10, 11, 18, 27, 30, the optimized probabilities are uniform due to symmetry; 
	 for graphs No. 6, 14, 22, 26, the optimized probabilities are uniform by coincidence. 
	 All  bond verification protocols employed are based on the distribution $\mu_{32}$,  which corresponds to the pentakis dodecahedron as described in \sref{sec:BondVC}. For completeness, the table also shows the vertex number $|V|$,  edge number $|E|$,  degree $\Delta(G)$, matching number $\upsilon(G)$, chromatic number $\chi(G)$,  chromatic index $\chi'(G)$ of the graph $G$; in addition, the table shows the dimension $\dim \caH$ of the underlying Hilbert space and the spectral gap $\gamma(H_G)$ of the AKLT Hamiltonian.}\\
	\hline \hline
	\vphantom{\LARGE f} No. & graph $G$  & $|V|$ & $|E|$ & $\Delta(G)$ & $\upsilon(G)$ & $\chi(G)$ & $\chi'(G)$ & dim $\caH$ & $\gamma(H_G)$    & $\nu(\Omega_{\mathrm{tri}})$ & $\nu(\Omega)$  & $\nu(\tilde{\Omega})$ & $p$(R, B, G, O, M)\\[0.2ex]     
	\hline   
	\endfirsthead	
		\hline \hline
	\vphantom{\LARGE f} No. & graph $G$  & $|V|$ & $|E|$ & $\Delta(G)$ & $v(G)$ & $\chi(G)$ & $\chi'(G)$ & dim $\caH$ & $\gamma(H_G)$    & $\nu(\Omega_{\mathrm{tri}})$ & $\nu(\Omega)$  & $\nu(\tilde{\Omega})$ & $p$(R, B, G, O, M)\\[0.2ex]     
	\hline   
	\endhead	                                         
	1   & \begin{minipage}[b]{0.1\columnwidth}\centering\raisebox{-.45\height}{\includegraphics[width=\linewidth]{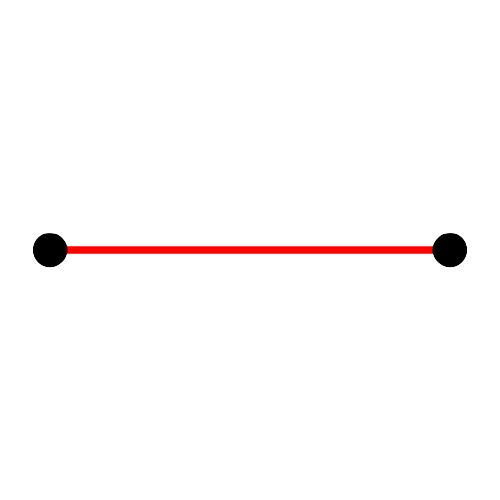}}\end{minipage}  & 2     & 1     & 1     & 1      & 2         & 1          & 4          & 1              & $\frac{2}{3}$                      & $\frac{2}{3}$  & $\frac{2}{3}$         & $\begin{pmatrix} 1 \end{pmatrix}$\\
	\hline                             
	2   & \begin{minipage}[b]{0.1\columnwidth}\centering\raisebox{-.45\height}{\includegraphics[width=\linewidth]{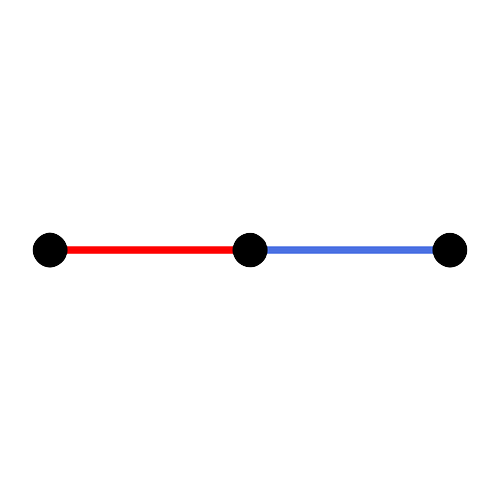}}\end{minipage}  & 3     & 2     & 2      & 1     & 2         & 2          & 12         & $\frac{2}{3}$  & $\frac{1}{6}$                      & $\frac{1}{6}$  & $\frac{1}{6}$         & $\frac{1}{2}\begin{pmatrix} 1\\ 1 \end{pmatrix}$\\                        
	3   & \begin{minipage}[b]{0.1\columnwidth}\centering\raisebox{-.45\height}{\includegraphics[width=\linewidth]{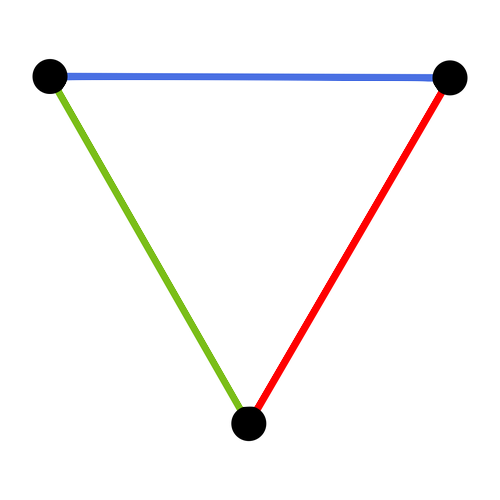}}\end{minipage}  & 3     & 3     & 2       & 1    & 3         & 3          & 27         & $\frac{5}{6}$  & $\frac{1}{9}$                      & $\frac{1}{9}$  & $\frac{1}{9}$         & $\frac{1}{3} \begin{pmatrix} 1\\ 1\\ 1\end{pmatrix}$\\ 
	\hline                   
	4   & \begin{minipage}[b]{0.1\columnwidth}\centering\raisebox{-.45\height}{\includegraphics[width=\linewidth]{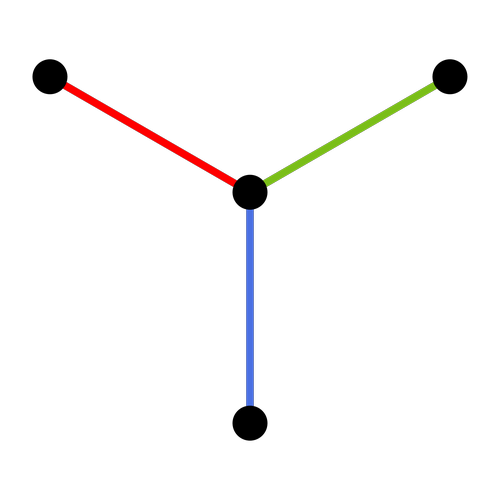}}\end{minipage}  & 4     & 3     & 3      & 1     & 2         & 3          & 32         & $\frac{1}{2}$  & $\frac{1}{15}$                      & $\frac{1}{15}$ & $\frac{1}{15}$        & $\frac{1}{3} \begin{pmatrix} 1\\ 1\\ 1\end{pmatrix}$\\                    
	5   & \begin{minipage}[b]{0.1\columnwidth}\centering\raisebox{-.45\height}{\includegraphics[width=\linewidth]{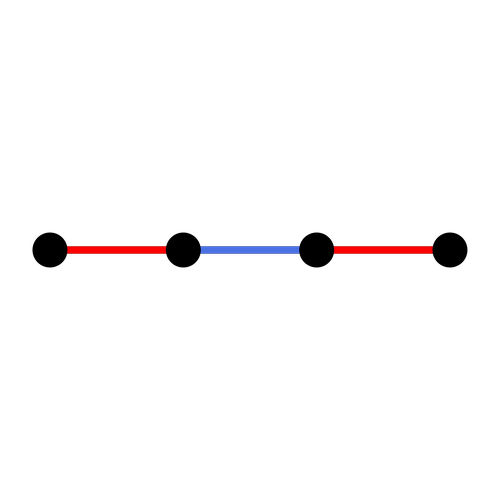}}\end{minipage}  & 4     & 3     & 2      & 2     & 2         & 2          & 36         & 0.5168         & 0.0755                      & 0.1119         & 0.1134                & $\begin{pmatrix} 0.4526\\ 0.5474 \end{pmatrix}$\\                           
	6   & \begin{minipage}[b]{0.1\columnwidth}\centering\raisebox{-.45\height}{\includegraphics[width=\linewidth]{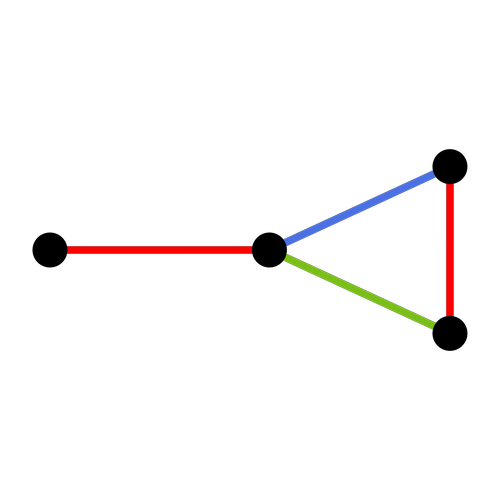}}\end{minipage}  & 4     & 4     & 3      & 2     & 3         & 3          & 72         & 0.5595         & $\frac{1}{20}$                      & $\frac{1}{15}$ & $\frac{1}{15}$        & $\frac{1}{3} \begin{pmatrix} 1\\ 1\\ 1\end{pmatrix}$\\                    
	7   & \begin{minipage}[b]{0.1\columnwidth}\centering\raisebox{-.45\height}{\includegraphics[width=\linewidth]{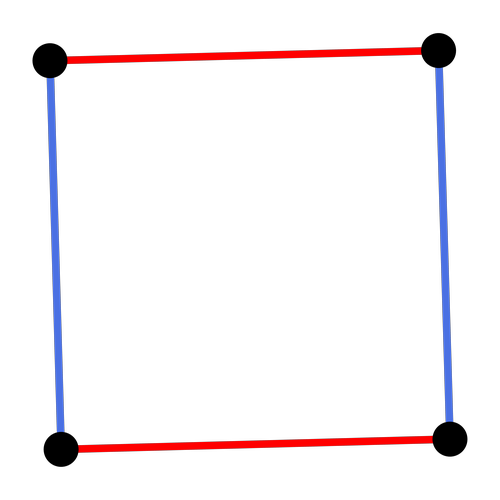}}\end{minipage}  & 4     & 4     & 2       & 2    & 2         & 2          & 81         & $\frac{1}{3}$  & $\frac{1}{30}$                      & $\frac{1}{15}$ & $\frac{1}{15}$        & $\frac{1}{2}\begin{pmatrix}   1\\ 1\end{pmatrix} $\\                       
	8   & \begin{minipage}[b]{0.1\columnwidth}\centering\raisebox{-.45\height}{\includegraphics[width=\linewidth]{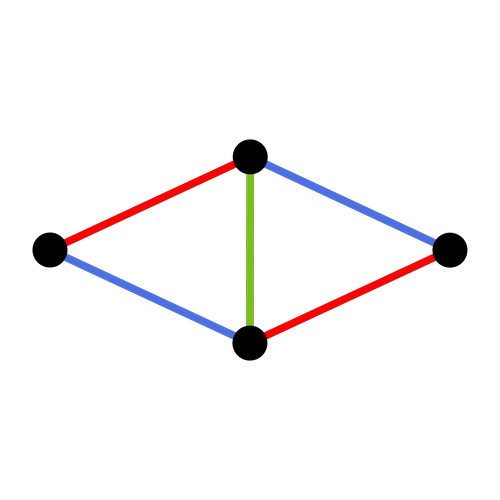}}\end{minipage}  & 4     & 5     & 3      & 2     & 3         & 3          & 144        & $\frac{1}{2}$  & $\frac{1}{30}$                      & 0.0556         & 0.0618                & $\begin{pmatrix} 0.3708\\ 0.3708\\ 0.2583\end{pmatrix}$\\                   
	9   & \begin{minipage}[b]{0.1\columnwidth}\centering\raisebox{-.45\height}{\includegraphics[width=\linewidth]{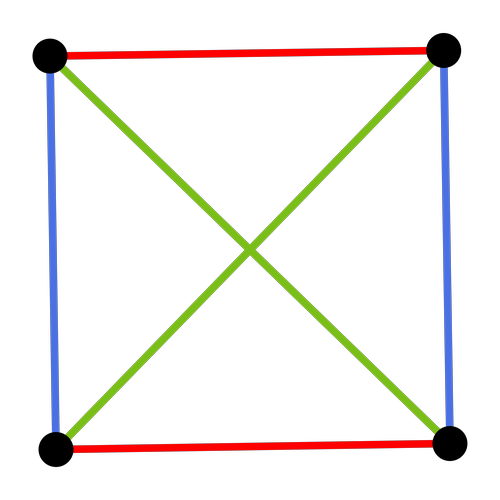}}\end{minipage}  & 4     & 6     & 3      & 2     & 4         & 3          & 256        & $\frac{7}{10}$ & $\frac{1}{30}$                     & $\frac{1}{15}$ & $\frac{1}{15}$        & $\frac{1}{3} \begin{pmatrix} 1\\ 1\\ 1\end{pmatrix}$ \\                   
	\hline
	10  & \begin{minipage}[b]{0.1\columnwidth}\centering\raisebox{-.45\height}{\includegraphics[width=\linewidth]{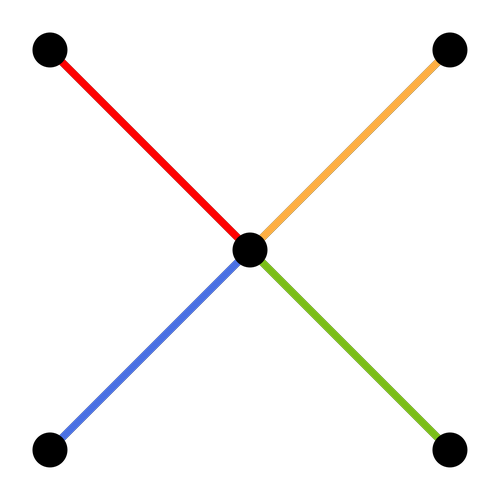}}\end{minipage} & 5     & 4     & 4       & 1    & 2         & 4          & 80         & $\frac{2}{5}$  & $\frac{1}{30}$                      & $\frac{1}{30}$ & $\frac{1}{30}$        & $\frac{1}{4} \begin{pmatrix} 1\\ 1\\ 1\\ 1\end{pmatrix}$  \\             
	11  & \begin{minipage}[b]{0.1\columnwidth}\centering\raisebox{-.45\height}{\includegraphics[width=\linewidth]{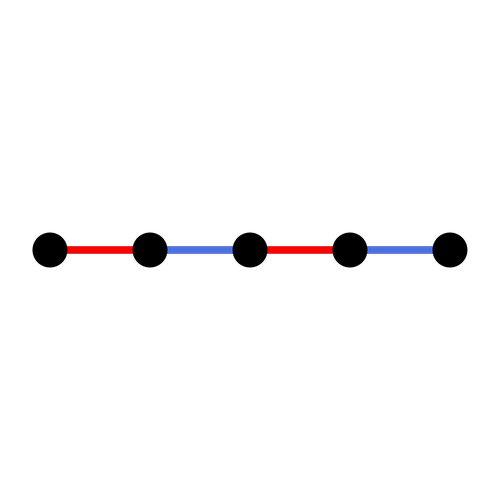}}\end{minipage} & 5     & 4     & 2      & 2     & 2         & 2          & 108        & 0.4539         & 0.0476                      & 0.0941         & 0.0941                & $\frac{1}{2} \begin{pmatrix} 1\\ 1 \end{pmatrix}$ \\                      
	12  & \begin{minipage}[b]{0.1\columnwidth}\centering\raisebox{-.45\height}{\includegraphics[width=\linewidth]{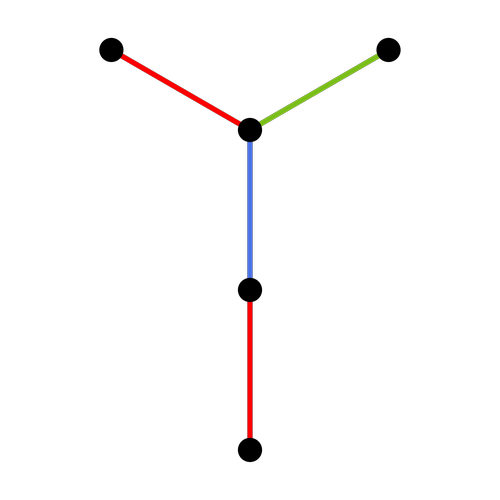}}\end{minipage} & 5     & 4     & 3      & 2     & 2         & 3          & 96         & 0.4117         & 0.0385                      & 0.0511         & 0.0529                & $\begin{pmatrix} 0.3170\\ 0.4018\\ 0.2812 \end{pmatrix}$\\                   
	13  & \begin{minipage}[b]{0.1\columnwidth}\centering\raisebox{-.45\height}{\includegraphics[width=\linewidth]{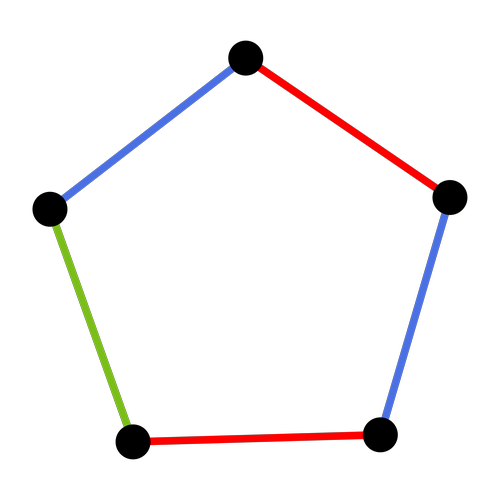}}\end{minipage} & 5     & 5     & 2      & 2     & 3         & 3          & 243        & 0.4540          & 0.0363                      & 0.0597         & 0.0603                & $\begin{pmatrix} 0.3368\\ 0.3368\\ 0.3264 \end{pmatrix}$\\                   
	14  & \begin{minipage}[b]{0.1\columnwidth}\centering\raisebox{-.45\height}{\includegraphics[width=\linewidth]{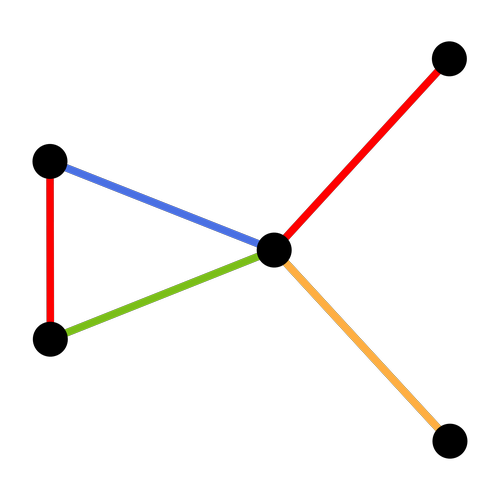}}\end{minipage} & 5     & 5     & 4      & 2     & 3         & 4          & 180        & 0.4295         & $\frac{2}{75}$                      & $\frac{1}{30}$ & $\frac{1}{30}$        & $\frac{1}{4} \begin{pmatrix} 1\\ 1\\ 1\\ 1\end{pmatrix}$ \\              
	15  & \begin{minipage}[b]{0.1\columnwidth}\centering\raisebox{-.45\height}{\includegraphics[width=\linewidth]{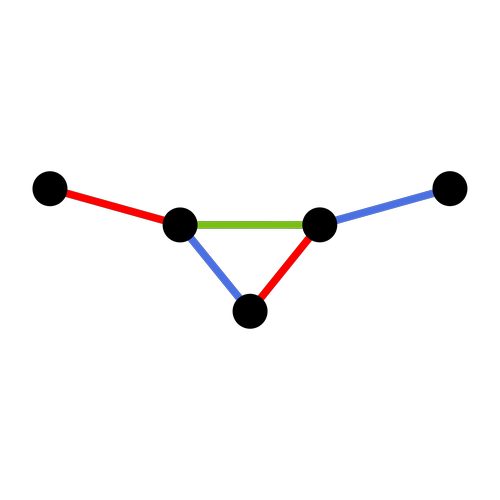}}\end{minipage} & 5     & 5     & 3     & 2      & 3         & 3          & 192        & 0.4796         & 0.0316                      & 0.0527         & 0.0547                & $\begin{pmatrix} 0.2975\\ 0.2975\\ 0.4050 \end{pmatrix}$\\                   
	16  & \begin{minipage}[b]{0.1\columnwidth}\centering\raisebox{-.45\height}{\includegraphics[width=\linewidth]{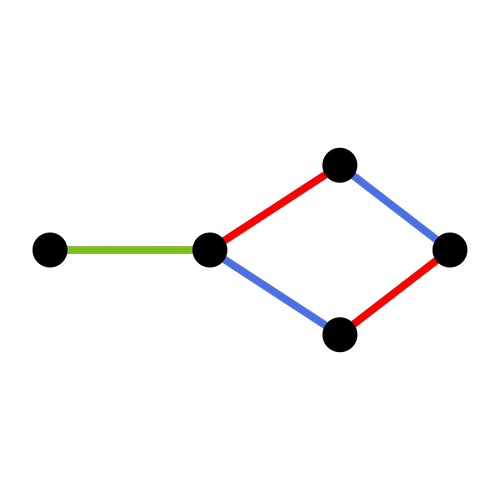}}\end{minipage} & 5     & 5     & 3      & 2     & 2         & 3          & 216        & 0.2871         & 0.0206                      & 0.0344         & 0.0369                & $\begin{pmatrix} 0.3892\\ 0.3892\\ 0.2214\end{pmatrix}$\\                   
	17  & \begin{minipage}[b]{0.1\columnwidth}\centering\raisebox{-.45\height}{\includegraphics[width=\linewidth]{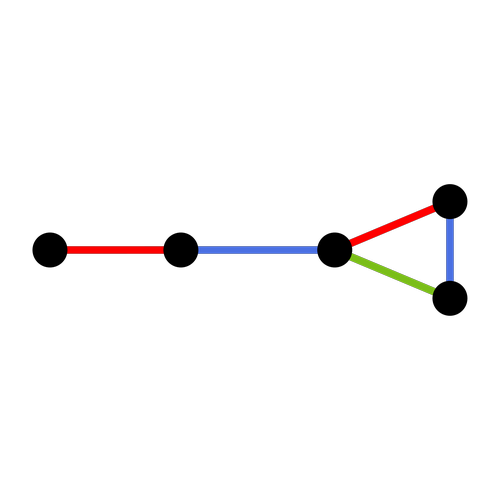}}\end{minipage} & 5     & 5     & 3      & 2     & 3         & 3          & 216        & 0.4396         & 0.0308                      & 0.0511         & 0.0529                & $\begin{pmatrix} 0.3122\\ 0.4018\\ 0.2860\end{pmatrix}$\\                   
	18  & \begin{minipage}[b]{0.1\columnwidth}\centering\raisebox{-.45\height}{\includegraphics[width=\linewidth]{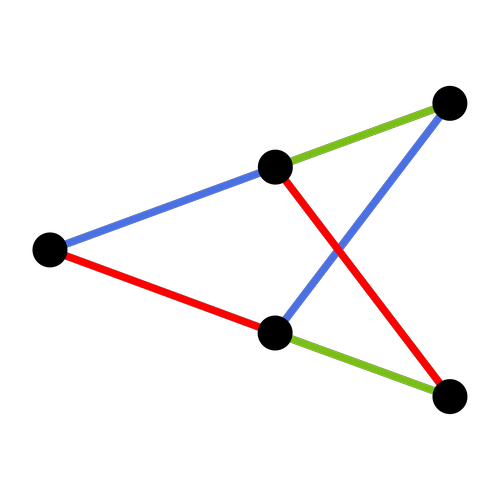}}\end{minipage} & 5     & 6     & 3      & 2     & 2         & 3          & 432        & 0.1931         & 0.0107                      & 0.0214         & 0.0214                & $\frac{1}{3} \begin{pmatrix} 1\\ 1\\ 1\end{pmatrix}$\\                     
	19  & \begin{minipage}[b]{0.1\columnwidth}\centering\raisebox{-.45\height}{\includegraphics[width=\linewidth]{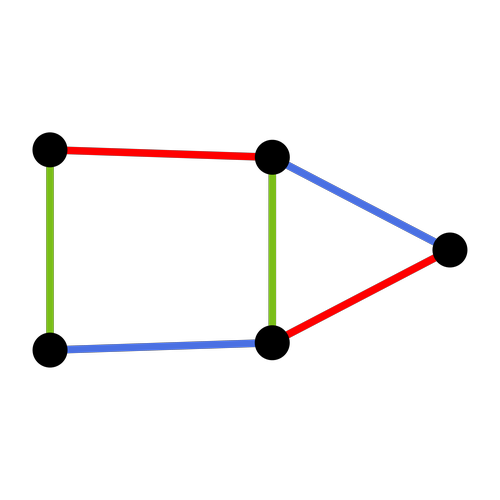}}\end{minipage} & 5     & 6     & 3     & 2      & 3         & 3          & 432        & 0.3106         & 0.0172                      & 0.0343         & 0.0347                & $\begin{pmatrix} 0.3130\\  0.3130\\ 0.3740 \end{pmatrix}$\\                 
	20  & \begin{minipage}[b]{0.1\columnwidth}\centering\raisebox{-.45\height}{\includegraphics[width=\linewidth]{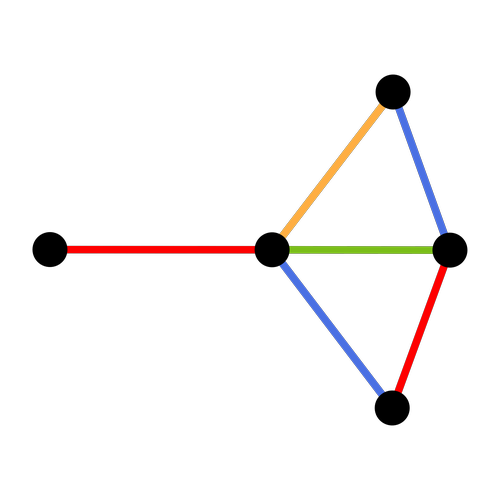}}\end{minipage} & 5     & 6     & 4     & 2      & 3         & 4          & 360        & 0.42           & 0.0208                      & 0.0312         & 0.0319                & $\begin{pmatrix} 0.2470\\ 0.2721\\ 0.2134\\ 0.2674\end{pmatrix}$ \\         
	21  & \begin{minipage}[b]{0.1\columnwidth}\centering\raisebox{-.45\height}{\includegraphics[width=\linewidth]{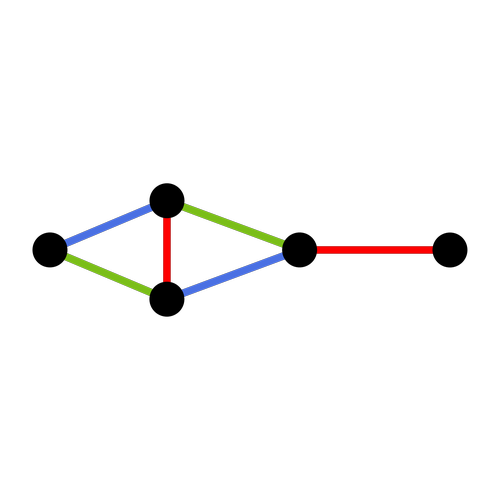}}\end{minipage} & 5     & 6     & 3     & 2      & 3         & 3          & 384        & 0.4036         & 0.0211                      & 0.0422         & 0.0441                & $\begin{pmatrix} 0.2481\\ 0.3760\\ 0.3760 \end{pmatrix}$\\                  
	22  & \begin{minipage}[b]{0.1\columnwidth}\centering\raisebox{-.45\height}{\includegraphics[width=\linewidth]{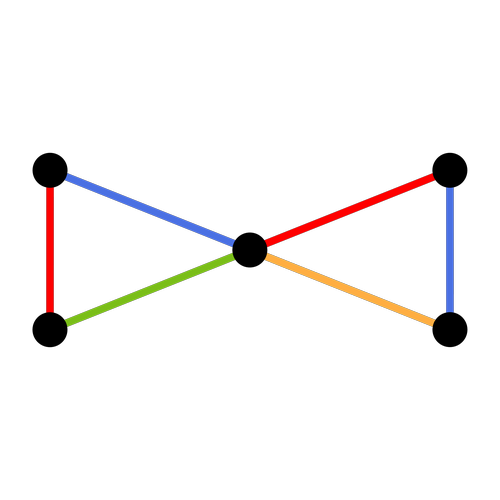}}\end{minipage} & 5     & 6     & 4     & 2      & 3         & 4          & 405        & $\frac{7}{15}$ & 0.0222                      & $\frac{1}{30}$ & $\frac{1}{30}$        & $\frac{1}{5} \begin{pmatrix} 1\\ 1\\ 1\\ 1\\ 1\end{pmatrix}$\\  
	23  & \begin{minipage}[b]{0.1\columnwidth}\centering\raisebox{-.45\height}{\includegraphics[width=\linewidth]{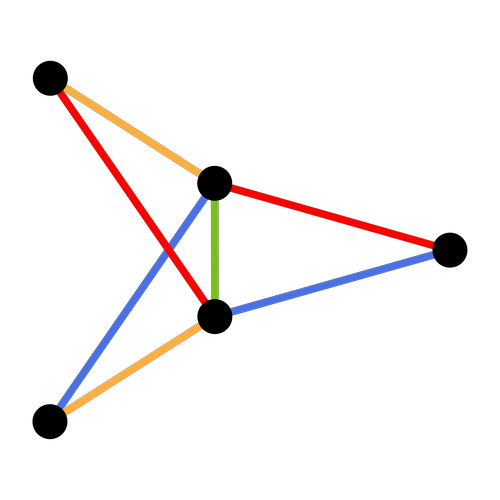}}\end{minipage} & 5     & 7     & 4       & 2     & 3         & 4          & 675        & 0.3236         & 0.0132                      & 0.0231         & 0.0265                & $\begin{pmatrix} 0.2873\\ 0.2873\\ 0.1382\\ 0.2873\end{pmatrix}$\\        
	24  & \begin{minipage}[b]{0.1\columnwidth}\centering\raisebox{-.45\height}{\includegraphics[width=\linewidth]{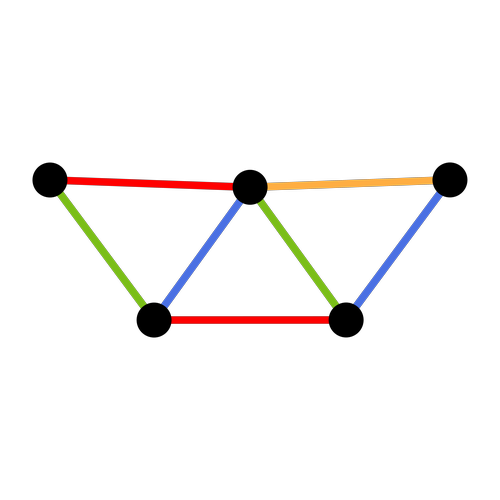}}\end{minipage} & 5     & 7     & 4      & 2     & 3         & 4          & 720        & 0.4263         & 0.0180                      & 0.0315         & 0.0318                & $\begin{pmatrix} 0.2657\\ 0.2462\\ 0.2387\\ 0.2494\end{pmatrix}$\\        
	25  & \begin{minipage}[b]{0.1\columnwidth}\centering\raisebox{-.45\height}{\includegraphics[width=\linewidth]{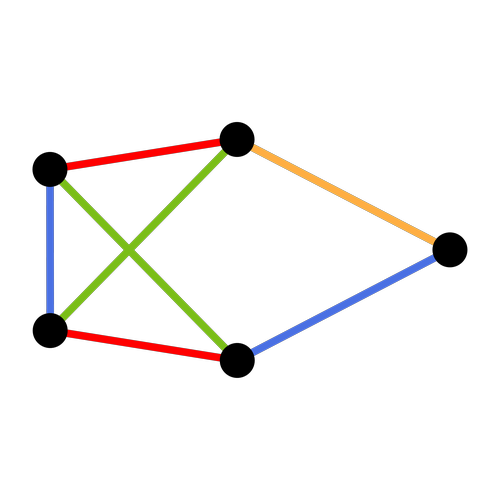}}\end{minipage} & 5     & 7     & 3     & 2      & 3         & 4          & 768        & 0.2501         & 0.0110                      & 0.0192         & 0.0193                & $\begin{pmatrix} 0.2625\\ 0.2375\\ 0.2625\\ 0.2375\end{pmatrix}$\\         
	26  & \begin{minipage}[b]{0.1\columnwidth}\centering\raisebox{-.45\height}{\includegraphics[width=\linewidth]{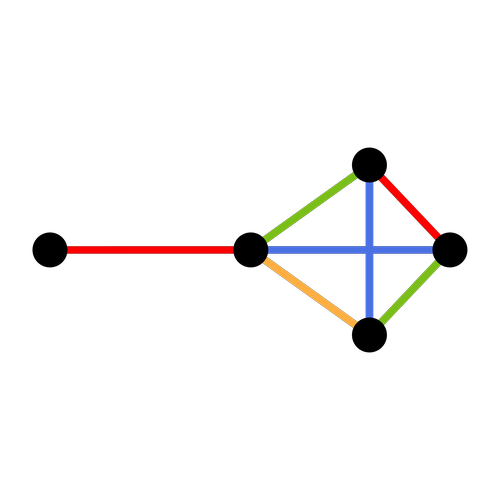}}\end{minipage} & 5     & 7     & 4     & 2      & 4         & 4          & 640        & 0.4877         & 0.0190                      & $\frac{1}{30}$ & $\frac{1}{30}$        & $\frac{1}{4} \begin{pmatrix} 1\\ 1\\ 1\\ 1\end{pmatrix}$\\               
	27  & \begin{minipage}[b]{0.1\columnwidth}\centering\raisebox{-.45\height}{\includegraphics[width=\linewidth]{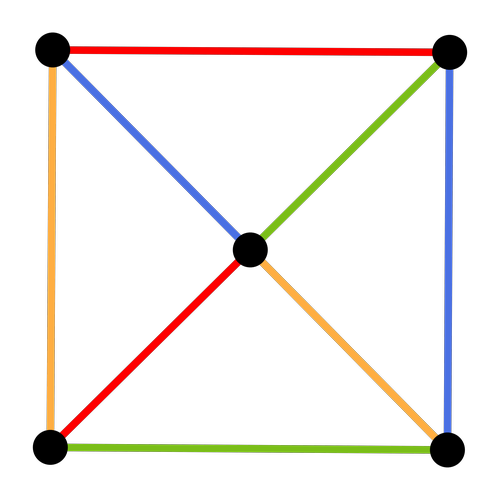}}\end{minipage} & 5     & 8     & 4     & 2      & 3         & 4          & 1280       & 0.2836         &0.0100                     & 0.0199         & 0.0199                & $\frac{1}{4} \begin{pmatrix} 1\\ 1\\ 1\\ 1\end{pmatrix}$\\              
	28  & \begin{minipage}[b]{0.1\columnwidth}\centering\raisebox{-.45\height}{\includegraphics[width=\linewidth]{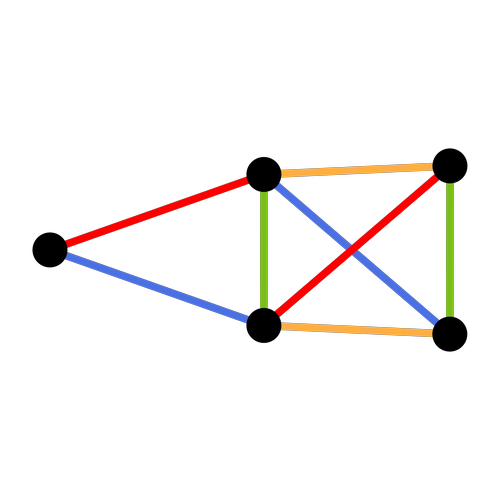}}\end{minipage} & 5     & 8     & 4      & 2     & 4         & 4          & 1200       & 0.4053         & 0.0135                      & 0.0269         & 0.0298                & $\begin{pmatrix} 0.2818\\  0.2818\\  0.1687\\  0.2677 \end{pmatrix}$\\     
	29  & \begin{minipage}[b]{0.1\columnwidth}\centering\raisebox{-.45\height}{\includegraphics[width=\linewidth]{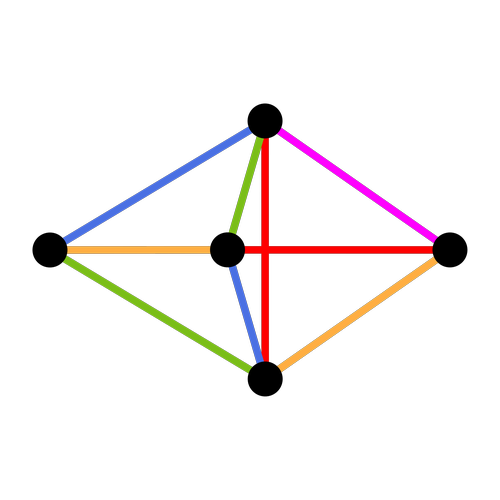}}\end{minipage} & 5     & 9     & 4      & 2     & 4         & 5          & 2000       & $\frac{2}{5}$  & 0.0111                      & 0.02           & 0.0203                & $\begin{pmatrix} 0.1850\\ 0.1965\\ 0.1850\\ 0.2372\\ 0.1965\end{pmatrix}$\\ 
	30  & \begin{minipage}[b]{0.1\columnwidth}\centering\raisebox{-.45\height}{\includegraphics[width=\linewidth]{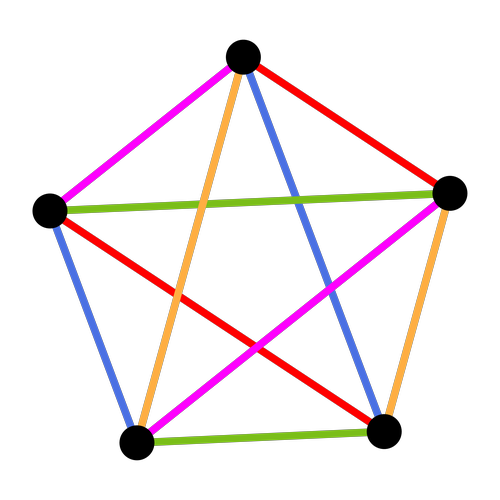}}\end{minipage} & 5     & 10    & 4       & 2    & 5         & 5          & 3125       & $\frac{3}{5}$  & 0.0133                      & $\frac{2}{75}$ & $\frac{2}{75}$        & $\frac{1}{5} \begin{pmatrix}   1\\ 1\\ 1\\ 1\\ 1\end{pmatrix}$\\    
	
	\hline \hline
\end{longtable*}
%
%

\twocolumngrid

\bibliography{all_references}

\end{document}